\newtheorem{definition}{Definition}
\newtheorem{proposition}{Proposition}
\newtheorem{theorem}{Theorem}
\newtheorem{lemma}{Lemma}
\newtheorem{corollary}{Corollary}
\newtheorem{remark}{Remark}
\newtheorem{assumption}{Assumption}
\numberwithin{equation}{section}
\begin{document}

\title{THE WISHART SHORT-RATE MODEL}

\author{Alessandro Gnoatto}

\address{Mathematical Institute, Ludwig-Maximilians-University,\\
Theresienstrasse 39, 80333 Munich Germany and\\
 Department of Mathematics, University of Padova,\\
  Via Trieste 63, 35121 Padova, Italy }
  \email{gnoatto@mathematik.uni-muenchen.de}

\thanks{The author is grateful to Christa Cuchiero, Martino Grasselli, Martin Keller-Ressel, Wolfgang Runggaldier and Josef Teichmann.\\
Electronic version of an article published as [International Journal of Theoretical and Applied Finance, 15, 8, 2012] [DOI 10.1142/S0219024912500562] \copyright [ World Scientific Publishing Company] [http://http://www.worldscientific.com/worldscinet/ijtaf]}


\begin{abstract}
We consider a short rate model, driven by a stochastic process on the cone of positive semidefinite matrices. We derive sufficient conditions ensuring that the model replicates normal, inverse or humped yield curves.
\end{abstract}

\maketitle

\section{Introduction}
In the present paper we focus on models where the short rate is given as
\begin{equation}
r_t=a+Tr\left[BX_t\right],\label{shortRate}
\end{equation}
where $a\in\mathbb{R}_{\geq 0}$, $B$ is a symmetric positive definite matrix and $X=\left(X_t\right)_{t\geq0}$ is a stochastic process on the cone of positive semidefinite matrices.
We will provide a fairly general pricing formula for zero coupon bonds for this family of models. Then we will restrict to the \textbf{Wishart short rate model}. This kind of model has been suggested in \cite{article_gou02} and \cite{article_Grasselli}, then \cite{article_BCT} investigated the properties of this model with respect to many issues concerning the yield curve and interest rate derivatives. An analysis of the impact of the specification of the risk-premium is provided in \cite{article_Chiarella}.

We provide a set of sufficient conditions ensuring that this short rate model produces certain shapes of the yield curve. Our analysis of yield curve shapes is inspired by the work of \cite{article_KRS}, where a set of restrictions on the shapes of the yield curve is derived, under the assumption that the driving process is affine in the sense of \cite{article_DFS}. 

Affine processes have been applied in finance in many contexts, for an application to interest rates, see \cite{article_DuffieKan}. More recently \cite{article_DFS} and \cite{thesis_KR} provided a full characterization for the state space $\mathbb{R}_{\geq 0}^{m} \times \mathbb{R}^{n}$. In \cite{article_Cuchiero}, an analogous characterization is provided for the state space $S_{d}^{+}$. An alternative characterization is proposed in \cite{article_Grasselli}, where the concept of solvable affine term structure model is discussed, both for the state space $\mathbb{R}_{\geq 0}^{m} \times \mathbb{R}^{n}$ and $S_d^+$.

Affine processes on positive semidefinite matrices have witnessed an increasing importance in applications in finance. The first application was presented in \cite{article_gou02} and then \cite{article_gou03}. Applications to multi-factor stochastic volatility, and stochastic correlations may be found in \cite{article_DaFonseca1}, \cite{article_DaFonseca2}, \cite{article_DaFonseca3}, \cite{article_DaFonseca4}, \cite{article_gou01} and \cite{article_BPT}, for the pure diffusion case. \cite{article_LT} introduce a class of jump diffusions where the intensity is an affine function of the state variable. Jump processes on $S_d^+$ are treated in \cite{article_BNS}, \cite{article_BNS02}, \cite{article_MPS}, \cite{article_MPS02} and \cite{article_PS}.

The present paper is organized as follows: first we introduce the general setup of affine processes on the cone state space $S_d^+$, thus providing the general setup for the pricing problem. Then we restrict to the Wishart process and provide a closed form formula for zero coupon bonds due to \cite{article_gou03}. Finally, by assuming that the Wishart process lies in the interior of $S_d^+$, we are also able to provide a set of sufficient conditions on the initial state of the process, which ensure that the model replicates certain shapes of the yield curve.

\section{Affine Processes on $S_{d}^{+}$}
\subsection{General Results}
For the reader's convenience, we report here some results which may be found in \cite{article_Cuchiero}, which constitute the theoretical framework we will be working with.
Let $(\Omega, \mathcal{F} ,\left(\mathcal{F}_t \right)_{t \geq 0}, \mathbb{P})$ be a filtered probability space, with the filtration $\left(\mathcal{F}_t \right)_{t \geq 0}$ satisfying the usual assumptions. Let $S_{d}^{+}$ denote the cone of positive semidefinite $d \times d$ matrices, endowed with the scalar product $\left\langle x,y\right\rangle=Tr\left[ xy\right]$.
We consider a Markov process $X=\left(X_{t} \right)_{t \geq 0}$ with state space $S_{d}^{+}$, transition probability $p_{t}(X_0,A)=\mathbb{P}(X_{t} \in A)$ for $A \in S_{d}^{+}$, and transition semigroup $\left( P_{t}\right)_{t \geq 0}$ acting on bounded functions $f \in S_{d}^{+}$:

\begin{definition}
(\cite{article_Cuchiero} Definition 2.1) \label{def1} The Markov process $X$ is called affine if:
\begin{enumerate}
\item it is stochastically continuous, that is, $\lim_{s \to t}p_{s}(X_0, \cdot)=p_{t}(X_0, \cdot)$ weakly on $S_{d}^{+}$ $\forall t \text{, } X_0 \in S_{d}^{+}$, and\\
\item its Laplace transform has exponential-affine dependence on the initial state:
\begin{equation}
P_{t}e^{-Tr\left[uX_0 \right]}=\int_{S_{d}^{+}}{e^{-Tr\left[u\xi \right]}p_t(X_0,d\xi)}=e^{-\phi(t,u)-Tr\left[ \psi(t,u)X_0\right]},\label{defAP}
\end{equation}
$\forall t\text{ and }u, X_{0} \in S_{d}^{+}$, for some functions $\phi: \mathbb{R}_{\geq 0}\times S_{d}^{+} \rightarrow \mathbb{R}_{\geq 0}$ and $\psi: \mathbb{R}_{\geq 0}\times S_{d}^{+} \rightarrow S_{d}^{+}$.
\end{enumerate}
\end{definition}

Note that in the definition above we assumed that the process is stochastically continuous, a feature that implies, according to Proposition 3.4 in \cite{article_Cuchiero}, that the process is regular in the following sense:

\begin{definition} (\cite{article_Cuchiero} Definition 2.2)
The affine process $X$ is called regular if the derivatives
\begin{equation}
F(u)=\frac{\partial \phi(t,u)}{\partial t}|_{t=0^+}, \hspace{10mm}R(u)=\frac{\partial \psi(t,u)}{\partial t}|_{t=0^+}
\end{equation}
exist and are continuous at $u=0$.
\end{definition}

\begin{definition}
(\cite{article_Cuchiero} Definition 2.3) Let $\chi:S_d^+\rightarrow S_d^+$ be some bounded continuous truncation function with $\chi(\xi)=\xi$ in a neighborhood of 0.  An admissible parameter set $\left(\alpha,b,\beta^{ij},c,\gamma,m,\mu \right)$ associated with $\chi$ consists of:
\begin{itemize}
\item a linear diffusion coefficient
\begin{equation}
\alpha \in S_{d}^{+},
\end{equation}
\item a constant drift term
\begin{equation}
b\succeq (d-1)\alpha,
\end{equation}
\item a constant killing rate term
\begin{equation}
c \in \mathbb{R}^{+},
\end{equation}
\item a linear killing rate coefficient
\begin{equation}
\gamma \in S_{d}^{+},
\end{equation}
\item a constant jump term: a Borel measure $m$ on $S_{d}^{+} \setminus \left\{ 0\right\}$ satisfying
\begin{equation}
\int_{S_{d}^{+} \setminus \left\{ 0\right\}}{\left(\parallel \xi \parallel \land 1 \right)m(d \xi)}<\infty,
\end{equation}
\item a linear jump coefficient: a $d \times d$ matrix $\mu=(\mu_{ij})$ of finite signed measure on $S_{d}^{+} \setminus \left\{ 0\right\}$, such that $\mu(E)\in S_{d}^{+}$ $\forall E \in \mathcal{B}(S_{d}^{+})$ and the kernel
\begin{equation}
M(x,d\xi):=\frac{Tr\left[x \mu(d\xi) \right]}{\parallel \xi \parallel^2 \land 1}\label{jump_meas}
\end{equation}
satisfies
\begin{equation}
\int_{S_{d}^{+} \setminus \left\{ 0\right\}}{Tr\left[\chi(\xi)u \right]M(x,d\xi)}<\infty, 
\end{equation}
$\forall x,u \in S_{d}^{+}\text{ s.t. }Tr\left[xu\right]=0$.
\item a linear drift coefficient: a family $\beta^{ij}=\beta^{ji} \in S_{d}^{+}$ s.t. the linear map $\beta:S_{d} \rightarrow S_{d}$ of the form 
\begin{equation}
\beta(x)=\sum_{i,j}{\beta^{ij}x_{ij}},\label{adm_drift}
\end{equation}
satisfies
\begin{align}
&Tr\left[\beta(x)u\right]-\int_{S_{d}^{+} \setminus \left\{ 0\right\}}{Tr\left[\chi(\xi)u \right]M(x,d\xi)}\geq 0
\end{align}
$\forall x,u \in S_{d}^{+} \text{ with } Tr\left[xu\right]=0$.
\end{itemize}
\end{definition}
The following theorem closes our survey on affine processes. It is a generalization of the result by \cite{article_DFS} to the state space $S_{d}^{+}$. Denote by $\mathcal{S}_{d}$ the space of rapidly decreasing $C^\infty$-functions on $S_d^+$ and let $\mathcal{D}(\mathcal{A})$ be the domain of the generator of the process.

\begin{theorem}
(\cite{article_Cuchiero} Theorem 2.4) Suppose $X$ is an affine process on $S_{d}^{+}$.\label{cuch_2.4} Then $X$ is regular and has the Feller property. Let $\mathcal{A}$ be its infinitesimal generator on $C_{0}(S_{d}^{+})$. Then $\mathcal{S}_{d} \subset \mathcal{D}(\mathcal{A})$ and there exists an admissible parameter set $\left(\alpha,b,\beta^{ij},c,\gamma,m,\mu \right)$ such that, for $f \in \mathcal{S}_{d}$
\begin{align}
\mathcal{A}f(x)&=\frac{1}{2}\sum_{i,j,k,l}{A_{ijkl}(x)\frac{\partial^{2}f(x)}{\partial x_{ij} \partial x_{kl}}}+\sum_{i,j}\left(b_{ij}+\beta_{ij}(x) \right)\frac{\partial f(x)}{\partial x_{ij}}-\left(c+Tr\left[ \gamma x\right] \right)f(x)\nonumber\\
&+\int_{S_{d}^{+} \setminus \left\{ 0\right\}}{\left(f(x+\xi)-f(x) \right)m(d\xi)}\nonumber\\
&+\int_{S_{d}^{+} \setminus \left\{ 0\right\}}{\left(f(x+\xi)-f(x)-Tr\left[\chi(\xi)\nabla f(x)\right] \right)M(x,d\xi)}\label{infGen}
\end{align}
where $\beta(x)$ is defined by \eqref{adm_drift}, $M(x,d\xi)$ by \eqref{jump_meas} and
\begin{equation}
A_{ijkl}(x)=x_{ik}\alpha_{jl}+x_{il}\alpha_{jk}+x_{jk}\alpha_{il}+x_{jl}\alpha_{ik}
\end{equation}
Moreover, $\phi(t,u)$ and $\psi(t,u)$ in Definition 2 solve the generalized Riccati differential equations, for $u \in S_{d}^{+}$,
\begin{align}
\frac{\partial \phi(t,u)}{\partial t}&=F(\psi(t,u)), \hspace{10mm} \phi(0,u)=0,\label{ODE_phi}\\
\frac{\partial \psi(t,u)}{\partial t}&=R(\psi(t,u)), \hspace{10mm} \psi(0,u)=u,\label{ODE_psi}
\end{align}
with
\begin{align}
F(u)&=Tr\left[bu\right]+c-\int_{S_{d}^{+} \setminus \left\{ 0\right\}}{\left( e^{-Tr\left[u\xi\right]}-1\right)m(d\xi)},\\
R(u)&=-2u\alpha u+ \beta^{T}(u)+\gamma\nonumber\\
&-\int_{S_{d}^{+} \setminus \left\{ 0\right\}}{\left(\frac{e^{-Tr\left[u\xi\right]}-1-Tr\left[\chi(\xi)u\right]}{\parallel \xi \parallel^2 \land 1}\right)\mu(d\xi)},
\end{align}
where $\beta^{T}_{ij}(u)=Tr\left[\beta^{ij}u\right]$.\\
Conversely, let $\left(\alpha,b,\beta^{ij},c,\gamma,m,\mu \right)$ be an admissible parameter set. Then there exists a unique affine process on $S_{d}^{+}$ with infinitesimal generator given by \eqref{infGen} and such that the affine transform formula \eqref{defAP} holds for all $(t,u)\in \mathbb{R}_{\geq 0} \times S_{d}^{+}$, where $\phi(t,u)$ and $\psi(t,u)$ are given by \eqref{ODE_phi} and \eqref{ODE_psi}.
\end{theorem}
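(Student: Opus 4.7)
The plan is to adapt the argument of Duffie, Filipović and Schachermayer (\cite{article_DFS}) from the canonical state space $\mathbb{R}_{\geq 0}^{m}\times\mathbb{R}^{n}$ to the matrix cone $S_d^+$. The proof divides naturally into a direct direction --- extracting the parameter set $(\alpha,b,\beta^{ij},c,\gamma,m,\mu)$ and the Riccati equations from a given affine process --- and a converse that starts from an admissible parameter set and builds the Feller process.

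For the direct direction, I would first derive the semiflow identities
\begin{align*}
\phi(t+s,u) &= \phi(s,u)+\phi(t,\psi(s,u)),\\
\psi(t+s,u) &= \psi(t,\psi(s,u)),
\end{align*}
from Chapman--Kolmogorov applied to the exponential-affine form \eqref{defAP}. Stochastic continuity makes $t\mapsto\phi(t,u),\psi(t,u)$ continuous and, combined with the semiflow, upgrades this to right-differentiability at $t=0^+$; this is Proposition 3.4 in \cite{article_Cuchiero}, and gives regularity together with the generalized Riccati equations \eqref{ODE_phi}--\eqref{ODE_psi}. Applying $\mathcal{A}$ to the exponential test functions $f_u(x):=e^{-Tr[ux]}$ and comparing with the $t$-derivative of \eqref{defAP} at $t=0$ yields
\[
\mathcal{A}f_u(x) = -\bigl(F(u)+Tr[R(u)x]\bigr)f_u(x).
\]
A Lévy--Khintchine-type decomposition of $F$ and $R$ then identifies $\alpha$, $b$, $\beta^{ij}$, $c$, $\gamma$, $m$, $\mu$. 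The admissibility inequalities --- in particular $b\succeq(d-1)\alpha$ and the inward-pointing constraint on $\beta$ and $\mu$ --- are forced by demanding that, at a boundary point $x\in\partial S_d^+$ paired with $u\in S_d^+$ satisfying $Tr[xu]=0$, the generator applied to $f_u$ produces a non-positive value; otherwise the semigroup would push mass out of the cone. The Feller property then follows once the generator is identified, by checking that $P_t$ maps $C_0(S_d^+)$ to itself, which reduces to continuity of $(\phi,\psi)$ in $u$.

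For the converse, given an admissible parameter set I would first show that the vector field $R$ is inward-pointing on $\partial S_d^+$, yielding global existence of $\psi(\cdot,u)$ in $S_d^+$; $\phi$ is then obtained by integrating $F\circ\psi$. The next step is to verify that $(t,u,x)\mapsto e^{-\phi(t,u)-Tr[\psi(t,u)x]}$ is, for each fixed $(t,x)$, the Laplace transform of a sub-probability measure on $S_d^+$, which calls for a Bochner-type theorem on the cone together with the Lévy--Khintchine structure of $F$ and $R$. The semiflow identities then translate back into Chapman--Kolmogorov, so $(P_t)$ is a Feller semigroup and Hille--Yosida produces a unique Feller process with generator \eqref{infGen}; uniqueness propagates from uniqueness of the Riccati solutions and the determining property of Laplace transforms on $S_d^+$.

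The main obstacle is the non-polyhedral geometry of $\partial S_d^+$. Unlike the orthant, where boundary conditions decouple coordinate-wise, $S_d^+$ has faces of every rank $0,\dots,d-1$, and the admissibility inequalities couple all entries through the condition $Tr[xu]=0$ with $x,u\in S_d^+$ (equivalent to $xu=ux=0$). The $(d-1)$-shift in $b\succeq(d-1)\alpha$ --- the analogue of Bru's condition for Wishart processes --- and the proof that the Riccati flow remains in $S_d^+$ globally are the two technically demanding consequences of this geometry, and constitute the heart of Cuchiero et al.'s argument.
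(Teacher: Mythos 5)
The paper contains no proof of this statement: Theorem \ref{cuch_2.4} is imported verbatim from \cite{article_Cuchiero} as background, so there is no in-paper argument to compare yours against. Taken on its own terms, your outline is a fair roadmap of the strategy in that reference (semiflow identities from Chapman--Kolmogorov, evaluation of the generator on $f_u(x)=e^{-Tr\left[ux\right]}$ giving $\mathcal{A}f_u(x)=-(F(u)+Tr\left[R(u)x\right])f_u(x)$, a L\'evy--Khintchine decomposition of $F$ and $R$, and a converse resting on cone-invariance of the Riccati flow). However, two of the steps you present as routine are where essentially all of the difficulty sits, and one is justified by a mechanism that cannot deliver the stated conclusion.

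First, regularity: stochastic continuity plus the semiflow identities do not ``upgrade'' to right-differentiability of $(\phi,\psi)$ at $t=0^+$ by any soft argument; this is a substantial theorem in its own right (the matrix analogue of the Keller-Ressel--Schachermayer--Teichmann regularity result), proved via complete monotonicity and martingale arguments, not by differentiating the semiflow. Second, and more seriously, the condition $b\succeq(d-1)\alpha$ is not forced by the positive maximum principle at the boundary as you assert. For $x,u\in S_d^+$ with $Tr\left[xu\right]=0$ one has $xu=ux=0$, hence the quadratic term contributes $Tr\left[-2u\alpha u\,x\right]=-2\,Tr\left[\alpha\, uxu\right]=0$; the resulting inequality $F(u)+Tr\left[R(u)x\right]\geq 0$ yields only the inward-pointing conditions on $\beta$ and $\mu$ and a nonnegativity condition involving $b$, $c$ and $m$ --- the factor $(d-1)$ never appears at first order. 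Its necessity requires a second-order analysis near the lower-rank strata of $\partial S_d^+$ (it is the Gindikin-type condition on the Wishart family), and its sufficiency is the heart of the existence proof. Relatedly, the converse cannot be dispatched by citing ``a Bochner-type theorem on the cone'': showing that $e^{-\phi(t,u)-Tr\left[\psi(t,u)x\right]}$ is the Laplace transform of a sub-probability kernel is exactly where the drift condition enters, and \cite{article_Cuchiero} establish it constructively rather than through an abstract positivity criterion. Your closing paragraph correctly flags these as the hard points, but the body of the proposal treats them as if they followed from the soft structure.
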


\subsection{Bond Prices}
In this section we derive a fairly general pricing formula for zero coupon bonds. Before this, we would like to spend a couple of words to recall an important fact concerning the risk neutral measure that we will use for pricing purposes. From  \cite{bjork} we know that it is quite tempting to consider the short rate as a traded asset and treat zero coupon bonds as derivatives written on the short rate. Unfortunately, the short rate is not a traded asset, hence the bond market is arbitrage free but not complete. This means that in general there exist many risk neutral measures. This implies that the reference risk neutral measure $\mathbb{Q}$ will be inferred in general from market prices, and so will result from a calibration procedure.
Let $B \in S_{d}^{+}$ then, according to Definition \ref{def1}, we have:
\begin{equation}
\mathbb{E}^{\mathbb{Q}}\left[e^{-Tr\left[BX_t\right]}\right]=e^{-\phi(t,B)-Tr\left[\psi(t,B)x\right]}.
\end{equation}
More generally, for $t,s,>0$:
\begin{equation}
\mathbb{E}^{\mathbb{Q}}\left[e^{-Tr\left[BX_{t+s}\right]}|\mathcal{F}_t\right]=e^{-\phi(s,B)-Tr\left[\psi(s,B)x_t\right]}.
\end{equation}
In what follows, by defining $\tau=T-t$, we will see that a similar formula holds for the price of a zero coupon bond which is computed, when the short rate is given as in \eqref{shortRate}, via the following expectation:
\begin{equation}
P_t(\tau):=\mathbb{E}^{\mathbb{Q}}\left[e^{-\int_{t}^{T}{a+Tr\left[BX_{u}\right]du}}|\mathcal{F}_t\right].
\end{equation}
This expectaton satisfies the following Kolmogorov backward equation:
\begin{equation}
\frac{\partial P_t}{\partial \tau}=\mathcal{A}P_t-\left(a+Tr\left[BX\right]\right)P_t,\quad P_t(0)=1,
\end{equation}
where the infinitesimal generator of the process $X$ is reported in Theorem \ref{cuch_2.4}.
We introduce an exponentially affine guess given by the following:
\begin{align}
P_t(\tau)=\exp\left\{-\tilde{\phi}(\tau,B)-Tr\left[\tilde{\psi}(\tau,B)X\right]\right\},\label{exp_guess}
\end{align}
so that:
\begin{equation}
\frac{\partial P_t}{\partial \tau}=\left(-\frac{\partial \tilde{\phi}}{\partial \tau}-Tr\left[\frac{\partial \tilde{\psi}}{\partial \tau}X \right] \right)P_t,
\end{equation}
\begin{equation}
\mathcal{A}e^{-\tilde{\phi}(\tau,B)-Tr\left[\tilde{\psi}(\tau,B)X \right]}=e^{-\tilde{\phi}(\tau,B)}\mathcal{A}e^{-Tr\left[\tilde{\psi}(\tau,B)X \right]},
\end{equation}
and (see always Theorem \ref{cuch_2.4}):
\begin{align}
&\mathcal{A}e^{-Tr\left[\tilde{\psi}(\tau,B)X \right]}=\left(-F(\tilde{\psi}(\tau,B))-Tr\left[R(\tilde{\psi}(\tau,B))X\right]\right)e^{-Tr\left[\tilde{\psi}(\tau,B)X \right]}\nonumber\\
&=\left\{-Tr\left[b\tilde{\psi}(\tau,B)\right]+\int_{S_{d}^{+} \setminus \left\{ 0\right\}}{\left(e^{-Tr\left[\tilde{\psi}(\tau,B)\xi\right]}-1 \right)m(d\xi)}\right.\nonumber\\
&+Tr\Bigg[\Bigg(2\tilde{\psi}(\tau,B)\alpha\tilde{\psi}(\tau,B)-\beta^{\top}(\tilde{\psi}(\tau,B))\Bigg.\Bigg.\nonumber\\
&+\left.\left.\Bigg.\int_{S_{d}^{+} \setminus \left\{ 0\right\}}{\frac{e^{-Tr\left[\tilde{\psi}(\tau,B)\xi\right]}-1+Tr\left[\chi(\xi)\tilde{\psi}(\tau,B)\right]}{\parallel \xi \parallel^2 \land 1}\mu(d\xi)}\Bigg)X\right]\right\}\nonumber\\
&\times e^{-Tr\left[\tilde{\psi}(\tau,B)X \right]}.
\end{align}
In summary, we obtain:
\begin{align}
&-\frac{\partial \tilde{\phi}}{\partial \tau}-Tr\left[\frac{\partial \tilde{\psi}}{\partial \tau}X \right]=-Tr\left[b\tilde{\psi}(\tau,B)\right]+\int_{S_{d}^{+} \setminus \left\{ 0\right\}}{\left(e^{-Tr\left[\tilde{\psi}(\tau,B)\xi\right]}-1 \right)m(d\xi)}\nonumber\\
&+Tr\Bigg[\Bigg(2\tilde{\psi}(\tau,B)\alpha\tilde{\psi}(\tau,B)-\beta^{\top}(\tilde{\psi}(\tau,B))\Bigg.\Bigg.\nonumber\\
&\left.\Bigg.+\int_{S_{d}^{+} \setminus \left\{ 0\right\}}{\frac{e^{-Tr\left[\tilde{\psi}(\tau,B)\xi\right]}-1+Tr\left[\chi(\xi)\tilde{\psi}(\tau,B)\right]}{\parallel \xi \parallel^2 \land 1}\mu(d\xi)}\Bigg)X\right]\nonumber\\
&-\left(a+Tr\left[BX\right]\right).
\end{align}
Identify terms to obtain the system of (matrix) ODE's:
\begin{align}\label{ODE_bond_phi}
\frac{\partial \tilde{\phi}}{\partial \tau}&=\mathcal{F}\left(\tilde{\psi}(\tau,B)\right)=Tr\left[b\tilde{\psi}(\tau,B)\right]\nonumber\\
&-\int_{S_{d}^{+} \setminus \left\{ 0\right\}}{\left(e^{-Tr\left[\tilde{\psi}(\tau,B)\xi\right]}-1 \right)m(d\xi)}+a,
\end{align}
\begin{align}
\frac{\partial \tilde{\psi}}{\partial \tau}&=\mathcal{R}\left(\tilde{\psi}(\tau,B)\right)=-2\tilde{\psi}(\tau,B)\alpha\tilde{\psi}(\tau,B)+\beta^{\top}(\tilde{\psi}(\tau,B))\nonumber\\
&-\int_{S_{d}^{+} \setminus \left\{ 0\right\}}{\frac{e^{-Tr\left[\tilde{\psi}(\tau,B)\xi\right]}-1+Tr\left[\chi(\xi)\tilde{\psi}(\tau,B)\right]}{\parallel \xi \parallel^2 \land 1}\mu(d\xi)}+B.\label{ODE_bond_psi}
\end{align}
We have thus proven the following:
\begin{proposition}\label{gen_pric}
Let $X$ be a conservative affine process on $S_{d}^{+}$ under the risk neutral probability measure $\mathbb{Q}$. Let the short rate be given as:
\begin{equation}
r_t=a+Tr\left[BX_t\right],
\end{equation}
then the price of a \textit{zero-coupon bond} is given by:
\begin{align}
P_t(\tau):&=\mathbb{E}^{\mathbb{Q}}\left[e^{-\int_{t}^{T}{a+Tr\left[BX_{u}\right]du}}|\mathcal{F}_t\right]\nonumber\\
&=\exp\left\{-\tilde{\phi}(\tau,B)-Tr\left[\tilde{\psi}(\tau,B)X_t\right]\right\},
\end{align}
where $\tilde{\phi}$ and $\tilde{\psi}$ satisfy the following ODE's:
\begin{align}
\frac{\partial \tilde{\phi}}{\partial \tau}&=\mathcal{F}\left(\tilde{\psi}(\tau,B)\right)=Tr\left[b\tilde{\psi}(\tau,B)\right]-\int_{S_{d}^{+} \setminus \left\{ 0\right\}}{\left(e^{-Tr\left[\tilde{\psi}(\tau,B)\xi\right]}-1 \right)m(d\xi)}+a,\nonumber\\
\tilde{\phi}(0,B)&=0,
\end{align}
\begin{align}
\frac{\partial \tilde{\psi}}{\partial \tau}&=\mathcal{R}\left(\tilde{\psi}(\tau,B)\right)=-2\tilde{\psi}(\tau,B)\alpha\tilde{\psi}(\tau,B)+\beta^{\top}(\tilde{\psi}(\tau,B))\nonumber\\
&-\int_{S_{d}^{+} \setminus \left\{ 0\right\}}{\frac{e^{-Tr\left[\tilde{\psi}(\tau,B)\xi\right]}-1+Tr\left[\chi(\xi)\tilde{\psi}(\tau,B)\right]}{\parallel \xi \parallel^2 \land 1}\mu(d\xi)}+B,\label{bond_sol_2}\nonumber\\
\tilde{\psi}(0,B)&=0.
\end{align}
\end{proposition}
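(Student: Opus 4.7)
The plan is to reduce the pricing problem to a Feynman--Kac PDE and then verify an exponential-affine ansatz whose coefficients satisfy generalized Riccati equations, exactly mirroring the derivation of the affine transform formula in Theorem \ref{cuch_2.4} but with the extra forcing terms coming from the short-rate structure.

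First I would note that under the risk-neutral measure $\mathbb{Q}$, the discounted bond price is a martingale, so by Feynman--Kac the function $P_t(\tau)$ solves the Kolmogorov backward equation
\begin{equation*}
\frac{\partial P_t}{\partial \tau} = \mathcal{A} P_t - \bigl(a + Tr[BX]\bigr) P_t, \qquad P_t(0) = 1,
\end{equation*}
where $\mathcal{A}$ is the infinitesimal generator of $X$ given by \eqref{infGen}. I would then plug in the ansatz \eqref{exp_guess}, using the key observation that exponentials of the form $e^{-Tr[uX]}$ are eigenfunctions (up to a state-independent factor) of $\mathcal{A}$, with eigen-relation determined by the functions $F$ and $R$ that generate the affine Riccati system \eqref{ODE_phi}--\eqref{ODE_psi}. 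Concretely, applying $\mathcal{A}$ to $e^{-Tr[\tilde\psi(\tau,B)X]}$ yields $(-F(\tilde\psi) - Tr[R(\tilde\psi)X])\,e^{-Tr[\tilde\psi X]}$.

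Next, I would substitute both sides into the PDE, cancel the common factor $P_t$, and split the resulting identity into the part that is linear in $X$ and the part that is independent of $X$. Since this must hold for all $X \in S_d^+$ (in particular on an open cone on which the Trace inner product separates points), the two parts must vanish independently. The $X$-independent part gives the ODE \eqref{ODE_bond_phi} for $\tilde\phi$, with the extra constant $a$ appearing from the discount term, while the part linear in $X$ gives \eqref{ODE_bond_psi} for $\tilde\psi$, with the extra forcing term $B$ again coming from the discount. The initial conditions $\tilde\phi(0,B) = 0$ and $\tilde\psi(0,B) = 0$ follow directly from $P_t(0) = 1$.

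The main subtlety, beyond the routine identification of coefficients, is the justification of the ansatz as the unique solution. I would address this by a standard verification argument: local existence of a $S_d^+$-valued solution $\tilde\psi$ follows because $\mathcal{R}$ is only a bounded perturbation (by the constant $B$) of the admissible Riccati vector field $R$ governing the affine transform, so inward-pointing conditions on the boundary of $S_d^+$ are preserved; global existence uses the conservativeness assumption, which rules out explosion in finite time, and $\tilde\phi$ is then obtained by direct integration of $\mathcal{F}(\tilde\psi)$. Once the ODE system admits a global solution, one checks that the resulting exponential-affine process $M_t := \exp\{-\int_0^t (a+Tr[BX_s])\,ds\}\,\exp\{-\tilde\phi(T-t,B) - Tr[\tilde\psi(T-t,B)X_t]\}$ is a local martingale by Itô's formula (the drift vanishes precisely because the ODEs hold), and conservativeness together with boundedness of the exponential on $[0,T]$ upgrades this to a true martingale, yielding the claimed representation of $P_t(\tau)$.
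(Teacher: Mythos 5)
Your proposal follows essentially the same route as the paper: the paper's proof is precisely the derivation preceding the statement, namely writing the Kolmogorov backward equation $\frac{\partial P_t}{\partial \tau} = \mathcal{A}P_t - (a + Tr[BX])P_t$, inserting the exponential-affine guess \eqref{exp_guess}, applying the generator via the functions $F$ and $R$ from Theorem \ref{cuch_2.4}, and identifying the constant and linear-in-$X$ terms to obtain the ODEs with the extra forcing terms $a$ and $B$. Your additional verification argument (global existence of the Riccati solution and the martingale check upgrading the ansatz to the actual conditional expectation) is sound and in fact supplies rigor that the paper's term-identification derivation leaves implicit.
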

\begin{remark}
In summary, with respect to the case where we are interested in the Laplace transform of the process (Theorem \ref{cuch_2.4}), we notice that when we consider the integrated process we have the following:
\begin{align}
\mathcal{F}(u)&:=F(u)+a,\\
\mathcal{R}(u)&:=R(u)+B.
\end{align}
\end{remark}
In the next sections, we will be working repeatedly with the functions $\mathcal{F}(u)$ and $\mathcal{R}(u)$ defined above. They will permit us to characterize in a very precise way the asymptotic behavior of the yield curve for large maturities and they will be a key ingredient to derive our sufficient conditions concerning the shapes of the curve.

\section{The Wishart short rate model}
\subsection{Some Properties of the Matrix Exponential}
We recall some background on the matrix exponential, which will be useful in the sequel. First, we provide the following:
\begin{definition}
Let $A$ be a matrix with entries in $\mathbb{C}$, then we define:
\begin{equation}
e^{A\tau}:=\sum_{k=0}^{\infty}{\frac{A^k\tau^k}{k!}}.
\end{equation}
\end{definition}
In the sequel we will look at the asymptotic behavior of the yield curve, so the following lemma will be useful:
\begin{lemma}\label{lemma1}
Let $A\in M_{d}$. Assume $\Re(\lambda(A))<0,\quad\forall\lambda\in\sigma(A)$, then:
\begin{equation}
\lim_{\tau\to\infty}{e^{A\tau}}=0\in M_{d\times d}.
\end{equation}
\end{lemma}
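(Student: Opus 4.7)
The plan is to reduce the statement to a routine computation on a single Jordan block, where the exponential factor coming from the eigenvalue overwhelms the polynomial factor coming from the nilpotent part.

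First, I would invoke the Jordan canonical form over $\mathbb{C}$: write $A = P J P^{-1}$, where $J$ is block diagonal, $J = \mathrm{diag}(J_{k_1}(\lambda_1),\dots,J_{k_r}(\lambda_r))$, and each $J_k(\lambda) = \lambda I_k + N_k$ with $N_k$ nilpotent of index $k$. Since $e^{A\tau} = P e^{J\tau} P^{-1}$ and $e^{J\tau}$ is block diagonal with blocks $e^{J_{k_i}(\lambda_i)\tau}$, it is enough to show that $e^{J_k(\lambda)\tau} \to 0$ as $\tau \to \infty$ whenever $\Re(\lambda)<0$; the conclusion then follows by linearity and continuity of the map $M \mapsto P M P^{-1}$.

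Next, because $\lambda I_k$ and $N_k$ commute, one has
\begin{equation}
e^{J_k(\lambda)\tau} = e^{\lambda\tau} e^{N_k \tau} = e^{\lambda\tau} \sum_{j=0}^{k-1} \frac{\tau^j}{j!} N_k^{j},
\end{equation}
so every entry of $e^{J_k(\lambda)\tau}$ is of the form $e^{\lambda\tau} p(\tau)$ with $p$ a polynomial of degree at most $k-1$. Taking absolute values, each entry is bounded by $e^{\Re(\lambda)\tau}\,|p(\tau)|$. By hypothesis $\Re(\lambda) < 0$, and the elementary fact that $\lim_{\tau\to\infty} e^{-c\tau}\tau^j = 0$ for any $c>0$ and $j\in\mathbb{N}$ gives that each entry tends to $0$.

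Hence $e^{J_k(\lambda)\tau}\to 0$ entrywise, and consequently $e^{J\tau}\to 0$ and $e^{A\tau} = P e^{J\tau} P^{-1}\to 0$ in $M_{d\times d}$. There is no real obstacle here beyond keeping the bookkeeping of the Jordan decomposition clean; the only minor point is to make sure one works over $\mathbb{C}$ so that the Jordan form exists even when $A$ has non-real spectrum, which is harmless since convergence of a complex matrix to $0$ is equivalent to convergence of its real and imaginary parts.
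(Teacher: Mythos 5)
Your proof is correct. The paper itself gives no argument for this lemma, deferring entirely to a citation of Horn and Johnson (Chapter 2); the Jordan-block reduction you carry out, with $e^{\lambda\tau}$ killing the polynomial growth from the nilpotent part, is precisely the standard proof found there, so you have simply supplied the details the paper omits.
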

\begin{proof}
See e.g. \cite{hojo} Chapter 2.
\end{proof}

This fact allows us to determine the asymptotic behavior of the following functions:
\begin{lemma}\label{lemmaTanh}
Let $O\in S_d^+$, define:
\begin{align}
\sinh(O\tau)=\frac{e^{O\tau}-e^{-O\tau}}{2},\quad \cosh(O\tau)=\frac{e^{O\tau}+e^{-O\tau}}{2}
\end{align}
and
\begin{align}
\tanh(O\tau)=\left(\cosh(O\tau)\right)^{-1}\sinh(O\tau),\quad\coth(O\tau)=\left(\sinh(O\tau)\right)^{-1}\cosh(O\tau),
\end{align}
then
\begin{align}
\lim_{\tau\to\infty}{\tanh(O\tau)}=\lim_{\tau\to\infty}{\coth(O\tau)}=I_d.
\end{align}
\end{lemma}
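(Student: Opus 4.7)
The plan is to reduce both statements to a single decay fact, namely that $e^{-2O\tau} \to 0$ as $\tau \to \infty$, and then invoke Lemma \ref{lemma1} applied to $A = -2O$. For the hypothesis of that lemma to be met we tacitly need $O$ to be strictly positive definite, so that $-2O$ has only negative eigenvalues; this is a mild strengthening of $O \in S_d^+$, and is consistent with the interior-of-$S_d^+$ assumption that the paper adopts later on. The same assumption simultaneously guarantees that $\sinh(O\tau)$ is invertible for $\tau>0$, so that $\coth(O\tau)$ is well defined.

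For the tangent I would start from the algebraic identity
\[
\tanh(O\tau) - I_d = (\cosh(O\tau))^{-1}\bigl(\sinh(O\tau) - \cosh(O\tau)\bigr) = -(\cosh(O\tau))^{-1} e^{-O\tau},
\]
which uses that $\cosh(O\tau)$ has eigenvalues at least $1$ and is therefore invertible. Since all the matrices involved commute with $O$, I can factor to obtain the cleaner form
\[
\tanh(O\tau) - I_d \;=\; -2 \bigl(e^{2O\tau} + I_d\bigr)^{-1} \;=\; -2\, e^{-2O\tau}\bigl(I_d + e^{-2O\tau}\bigr)^{-1}.
\]
Lemma \ref{lemma1}, applied with $A=-2O$, gives $e^{-2O\tau} \to 0$; continuity of matrix inversion at $I_d$ then yields $(I_d + e^{-2O\tau})^{-1} \to I_d$, and hence the product on the right tends to $0$.

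For the cotangent the same type of manipulation, now using $\cosh(O\tau) - \sinh(O\tau) = e^{-O\tau}$, gives
\[
\coth(O\tau) - I_d \;=\; (\sinh(O\tau))^{-1} e^{-O\tau} \;=\; 2\, e^{-2O\tau}\bigl(I_d - e^{-2O\tau}\bigr)^{-1},
\]
which is valid since for $\tau>0$ the matrix $e^{-2O\tau}$ has all eigenvalues strictly less than $1$, so $I_d - e^{-2O\tau}$ is invertible. The same decay argument then closes the proof.

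The only step I expect to require any care is the spectral bookkeeping in the two factored expressions, i.e.\ checking that $(I_d \pm e^{-2O\tau})^{-1}$ is well defined and remains bounded as $\tau \to \infty$. This is straightforward once one diagonalises $O$ and reads off the eigenvalues, and after that Lemma \ref{lemma1} plus joint continuity of matrix multiplication delivers both limits at once.
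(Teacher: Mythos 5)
Your proof is correct and follows essentially the same route as the paper: both factor $e^{O\tau}$ out of $\cosh$ and $\sinh$ to reduce the claim to the decay $e^{-2O\tau}\to 0$ via Lemma \ref{lemma1} applied to $A=-2O$, the paper writing $\tanh(O\tau)=\left(I_d+e^{-2O\tau}\right)^{-1}\left(I_d-e^{-2O\tau}\right)$ where you write $\tanh(O\tau)-I_d=-2\,e^{-2O\tau}\left(I_d+e^{-2O\tau}\right)^{-1}$. Your explicit remark that $O$ must in fact be strictly positive definite for Lemma \ref{lemma1} to apply (and for $\coth$ to be well defined) is a point the paper leaves implicit, but it does not change the argument.
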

\begin{proof}
\begin{align}
\lim_{\tau\to\infty}{\tanh(O\tau)}=\lim_{\tau\to\infty}{\left(I_d+e^{-2O\tau}\right)^{-1}\left(I_d-e^{-2O\tau}\right)}=I_d.\label{limitTanh}
\end{align}
The second equality follows along the same lines.
\end{proof}

\subsection{Closed-Form Pricing Formulae in the General Diffusion Model}
In this section we consider a diffusion model for the short rate. The driving process we use was first considered in the seminal paper by \cite{article_Bru}, however, in the present paper, following the standard literature on Wishart process, we will be dealing with a slight generalization. Using the terminology of Bru, we assume that the law of $X_t$ is $WIS_d(x_0, \alpha, M,Q)$ under the risk neutral measure $\mathbb{Q}$. $X_t$ is the solution of the following SDE:
\begin{align}
dX_t=\left(b+MX_t+X_tM^{\top}\right)dt+\sqrt{X_t}dW_tQ+Q^{\top} dW_{t}^{\top}\sqrt{X_t},\label{genWis}
\end{align}
where $M,Q \in GL(d)$, $b=\alpha Q^{\top}Q$. We further assume\footnote{The assumption on $\alpha$ implies that the process lies in the interior of the cone $S_d^+$, that we denote by $S_d^{++}$. This more restrictive assumption is required in order to derive the conditions on the shapes of the yield curve. All bond pricing formulae that we outline in the sequel hold true also for $\alpha> d-1$.} $\alpha\geq d+1$ and $x_0 \in S_d^{++}$. These last assumptions, according to Theorem  2.2 in \cite{article_MPS}, allow us to claim that there exists a strong solution to \eqref{genWis} on the interval $\left[0,\tau_0\right)$, where the stopping time $\tau_0$ is defined as:
\begin{align}
\tau_0=\inf\left\{t\geq0|\det X_t=0\right\}.
\end{align}
Moreover, we have $\tau_0=+\infty$ a.s.
Finally, we notice that, in full analogy with the scalar square root process, the term $-\alpha Q^{\top}Q$ is related to the long term matrix $X_\infty$ via the following Lypaunov equation:
\begin{align}
-\alpha Q^{\top}Q=MX_\infty+X_\infty M^{\top},
\end{align}
so that, for the rest of this paper, we make the following standard assumption in order to grant the mean reverting feature of the process $X_t$.
\begin{assumption}\label{assu_eigen}
We require $\Re\lambda<0$, $\forall \lambda\in\sigma(M)$. This requirement implies the convergence of the improper integral $\alpha\int_{0}^{\infty}{e^{Ms}Q^{\top}Q e^{M^{\top}s}ds}$, which satisfies the equation. We further assume that $\forall \tau\in\mathbb{R}_{\geq 0}\cup\left\{\infty\right\}$ the matrix $M-2Q^{\top}Q\tilde{\psi}(\tau)$ has negative real eigenvalues.
\end{assumption}

For this driving process, the price of a zero coupon bond is computed as follows. We use the shorthand notation $\tilde{\phi}(\tau,B)=\tilde{\phi}(\tau)$ and $\tilde{\psi}(\tau,B)=\tilde{\psi}(\tau)$.

\begin{proposition}\label{mainZCP}
Let the short rate be given as in \eqref{shortRate}, for a process $X_t$ with law $WIS_d(x_0, \alpha, M,Q)$. Let $B\in S_d^{++}$ and set $\tau=T-t$.
Then the price of a zero coupon bond is given by:
\begin{align*}
&\mathbb{E}_{X_t}^{\mathbb{Q}}\left[\exp\left\{-a\tau-Tr\left[\int_{t}^{T}{BX_sds}\right]\right\}\right]\nonumber\\
&=\exp\left\{-\tilde{\phi}(\tau)-Tr\left[\tilde{\psi}(\tau)X_t\right]\right\},
\end{align*}
where $\tilde{\psi}(\tau)$ and $\tilde{\phi}(\tau)$ solve the following system of ODE's:
\begin{equation}
\frac{\partial \tilde{\phi}}{\partial \tau}=Tr\left[\alpha Q^{\top}Q\tilde{\psi}(\tau)\right]+a,\quad \tilde{\phi}(0)=0,\label{ODE1}
\end{equation}

\begin{align}
\frac{\partial \tilde{\psi}}{\partial \tau}&=\tilde{\psi}(\tau)M+M^{\top}\tilde{\psi}(\tau)-2\tilde{\psi}(\tau)Q^{\top}Q\tilde{\psi}(\tau)+B, \quad \tilde{\psi}(0)=0.\label{ODE2}
\end{align}
\end{proposition}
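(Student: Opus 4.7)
The plan is to obtain Proposition \ref{mainZCP} as a direct specialization of the general pricing result in Proposition \ref{gen_pric}. Since the Wishart diffusion \eqref{genWis} has continuous paths, all jump contributions disappear, i.e.\ $m\equiv 0$ and $\mu\equiv 0$, so both integral terms in \eqref{ODE_bond_phi} and \eqref{bond_sol_2} vanish and one is left only with the drift/diffusion pieces. The whole proof therefore reduces to identifying the admissible parameters $(\alpha,b,\beta^{ij},c,\gamma)$ of the Wishart law $WIS_d(x_0,\alpha,M,Q)$ and substituting them into the reduced Riccati system.

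To perform the identification I would read off the infinitesimal generator of \eqref{genWis} via It\^o's formula applied to $f(X_t)=\exp\{-Tr[uX_t]\}$. The constant drift contributes $-Tr[b\,u]f$ with $b=\alpha Q^{\top}Q$; the linear drift $MX_t+X_tM^{\top}$ contributes $-Tr[(M^{\top}u+uM)X_t]f$, so that the linear drift map acts as $\beta^{\top}(u)=uM+M^{\top}u$ in the notation of Theorem \ref{cuch_2.4}. For the diffusion part, a direct computation of the quadratic covariation of the matrix entries of $\sqrt{X_t}dW_t Q+Q^{\top}dW_t^{\top}\sqrt{X_t}$ yields
\begin{equation*}
d\langle X_{ij},X_{kl}\rangle_t=\bigl((X_t)_{ik}(Q^{\top}Q)_{jl}+(X_t)_{il}(Q^{\top}Q)_{jk}+(X_t)_{jk}(Q^{\top}Q)_{il}+(X_t)_{jl}(Q^{\top}Q)_{ik}\bigr)\,dt,
\end{equation*}
which is exactly the tensor $A_{ijkl}(X_t)$ of Theorem \ref{cuch_2.4} with linear diffusion coefficient $\alpha=Q^{\top}Q$. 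The killing rates $c$ and $\gamma$ are zero because the process is conservative. Substituting these parameters into \eqref{ODE_bond_phi}-\eqref{bond_sol_2} collapses $\mathcal{F}(u)=Tr[\alpha Q^{\top}Q\,u]+a$ and $\mathcal{R}(u)=-2u\,Q^{\top}Q\,u+uM+M^{\top}u+B$, which are precisely \eqref{ODE1}-\eqref{ODE2}.

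What remains is to guarantee that the exponential-affine ansatz is valid on all of $[0,\tau]$, i.e.\ that a unique solution $\tilde{\psi}(\tau)\in S_d^+$ of the matrix Riccati equation \eqref{ODE2} exists globally and that the corresponding $\tilde{\phi}(\tau)$ is finite. This is the only non-routine step: existence is local by Picard--Lindel\"of, but global existence requires ruling out blow-up. I would invoke Assumption \ref{assu_eigen}, which forces $M-2Q^{\top}Q\tilde{\psi}(\tau)$ to have eigenvalues with negative real part, so that the linearized flow is contractive and $\tilde{\psi}$ stays bounded in $S_d^+$; alternatively one can cite the standard results on matrix Riccati equations from the Wishart literature (\cite{article_gou03,article_BCT,article_Grasselli}) to secure global well-posedness and the positivity $\tilde{\psi}(\tau)\in S_d^+$ needed for $B\in S_d^{++}$.

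The main potential obstacle is the bookkeeping of the linear drift: one must be careful that the adjoint $\beta^{\top}$ of $\beta(x)=Mx+xM^{\top}$ with respect to the trace inner product is $\beta^{\top}(u)=uM+M^{\top}u$ and not the reverse, since this determines the placement of $M$ and $M^{\top}$ in \eqref{ODE2}. Once this adjoint relation is settled and the parameters are matched, the claimed representation and the two ODEs follow immediately from Proposition \ref{gen_pric}.
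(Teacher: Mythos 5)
Your proposal is correct and follows essentially the same route as the paper, whose proof of Proposition \ref{mainZCP} is simply the remark that the same arguments as in Proposition \ref{gen_pric} apply; you merely make explicit the parameter identification ($m=\mu=0$, $c=\gamma=0$, diffusion coefficient $Q^{\top}Q$, constant drift $\alpha Q^{\top}Q$, and $\beta^{\top}(u)=uM+M^{\top}u$) that the paper leaves implicit. The adjoint computation and the quadratic covariation you check are both right, so the specialization goes through as claimed.
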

\begin{proof} Same arguments as in Proposition \ref{gen_pric}.
\end{proof}

\begin{remark}\label{mathcal_R}
In the present setting we have:
\begin{align}
\mathcal{R}\left(\tilde{\psi}(\tau)\right)&=\tilde{\psi}(\tau)M+M^{\top}\tilde{\psi}(\tau)-2\tilde{\psi}(\tau)Q^{\top}Q\tilde{\psi}(\tau)+B\label{mcr}\\
\mathcal{F}\left(\tilde{\psi}(\tau)\right)&=Tr\left[\alpha Q^{\top}Q\tilde{\psi}(\tau)\right]+a\label{mcF}.
\end{align}
Moreover, a direct substitution of the terminal condition $\tilde{\psi}(0)=0$ implies:
\begin{align}
\mathcal{R}\left(\tilde{\psi}(0)\right)&=B.\label{t_cond}
\end{align}
\end{remark}

The solution of the system of ODE's above may be computed by relying on the different approaches which are summarized in the following proposition. Let us denote by $\psi^\prime$ a solution to the algebraic Riccati equation 
\begin{align}
\psi^\prime M+M^{\top}\psi^\prime-2\psi^\prime Q^{\top}Q\psi^\prime+B=0.
\end{align}

\begin{proposition}
The system of ODE's \eqref{ODE1}, \eqref{ODE2} admits the following solution:
%
\begin{align}
\tilde{\psi}(\tau)&=\psi^\prime+e^{\left(M^{\top}-2\psi^\prime Q^{\top}Q\right)\tau}\Bigg[(-\psi^\prime)^{-1}\Bigg.\nonumber\\
&\Bigg.+2\int_{0}^{\tau}{e^{\left(M-2Q^{\top}Q\psi^\prime\right)s}Q^{\top}Q e^{\left(M^{\top}-2\psi^\prime Q^{\top}Q\right)s}ds}\Bigg]^{-1}e^{\left(M-2Q^{\top}Q\psi^\prime\right)\tau},\\
\tilde{\phi}(\tau)&=Tr\left[\alpha Q^{\top}Q\int_{0}^{\tau}{\tilde{\psi}(s)ds}\right]+a\tau.
\end{align}
%
\end{proposition}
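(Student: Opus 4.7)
The plan is to reduce the Riccati equation \eqref{ODE2} to a linear matrix ODE by a two-step substitution built around the algebraic solution $\psi^\prime$, and then to integrate \eqref{ODE1} directly once $\tilde\psi$ is known.

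First I would shift away the algebraic equilibrium: set $\tilde\psi(\tau)=\psi^\prime+W(\tau)$. Using the fact that $\psi^\prime M+M^\top \psi^\prime-2\psi^\prime Q^\top Q\psi^\prime+B=0$, the quadratic and constant terms in \eqref{ODE2} collapse, and a direct computation yields
\begin{equation*}
\dot W=W A+A^\top W-2W Q^\top Q W,\qquad W(0)=-\psi^\prime,
\end{equation*}
where $A:=M-2Q^\top Q\psi^\prime$ and thus $A^\top=M^\top-2\psi^\prime Q^\top Q$ by symmetry of $\psi^\prime$ and $Q^\top Q$. This is still a matrix Riccati equation, but now with no constant term, which makes it amenable to a further inverse substitution.

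Second, I would linearize by the ansatz $W(\tau)=e^{A^\top\tau}Z(\tau)^{-1}e^{A\tau}$. Differentiating, the two terms arising from the exponentials are exactly $A^\top W$ and $WA$, so these cancel the corresponding terms on the right-hand side of the equation for $W$. Multiplying the remaining identity on the left by $e^{-A^\top\tau}Z$ and on the right by $Z e^{-A\tau}$ one obtains the purely linear equation
\begin{equation*}
\dot Z=2\,e^{A\tau}\,Q^\top Q\,e^{A^\top\tau},\qquad Z(0)=(-\psi^\prime)^{-1},
\end{equation*}
where the initial condition comes from $\tilde\psi(0)=0$, i.e.\ $\psi^\prime+Z(0)^{-1}=0$. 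Integrating from $0$ to $\tau$ gives exactly the bracketed expression in the proposition, and re-substituting produces the stated formula for $\tilde\psi(\tau)$. The main technical point here is the matrix algebra of the cancellation and the care needed with the non-commuting factors $e^{A\tau}$ and $e^{A^\top\tau}$; Assumption \ref{assu_eigen} and the hypotheses on $\psi^\prime$ guarantee that the required inverses exist for all $\tau\geq 0$, so the ansatz is legitimate throughout.

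Finally, the ODE \eqref{ODE1} for $\tilde\phi$ is already an elementary quadrature once $\tilde\psi$ is in hand: integrating from $0$ to $\tau$ with $\tilde\phi(0)=0$ gives
\begin{equation*}
\tilde\phi(\tau)=\mathrm{Tr}\!\left[\alpha Q^\top Q\int_0^\tau \tilde\psi(s)\,ds\right]+a\tau,
\end{equation*}
which is the claimed expression. I expect the only subtle point of the whole argument to be the bookkeeping in the inverse substitution $W=e^{A^\top\tau}Z^{-1}e^{A\tau}$, since one has to keep track of which side each matrix acts on; everything else reduces to applying the definition of $\psi^\prime$ and differentiating a matrix exponential.
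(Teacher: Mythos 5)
Your derivation is correct, but note that the paper itself offers no argument at all for this proposition: its proof is a bare citation to Gourieroux and Sufana, so your write-up actually supplies the missing content rather than paralleling an existing one. The route you take --- shifting by the algebraic equilibrium, $\tilde{\psi}=\psi^\prime+W$, so that the algebraic Riccati identity kills the constant and the pure quadratic-in-$\psi^\prime$ terms and leaves $\dot W=WA+A^\top W-2WQ^\top QW$ with $A=M-2Q^\top Q\psi^\prime$ and $W(0)=-\psi^\prime$, followed by the inverse ansatz $W=e^{A^\top\tau}Z^{-1}e^{A\tau}$ which linearizes to $\dot Z=2e^{A\tau}Q^\top Qe^{A^\top\tau}$, $Z(0)=(-\psi^\prime)^{-1}$ --- is exactly the variation-of-constants treatment of the matrix Riccati equation that the paper alludes to immediately after the proposition (and uses in Corollary \ref{cor_infty}), and the algebra checks out: integrating $Z$ reproduces the bracket in the statement, and the quadrature for $\tilde{\phi}$ is immediate. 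The one point where your argument leans on the wrong crutch is invertibility: Assumption \ref{assu_eigen} controls the spectrum of $M-2Q^\top Q\tilde{\psi}(\tau)$ but says nothing directly about $\psi^\prime$ or $Z(\tau)$ being invertible, and since $Z(0)=(-\psi^\prime)^{-1}\prec 0$ while the integral term is positive semidefinite, non-singularity of $Z(\tau)$ for all $\tau$ is not automatic from the formula alone. It can be closed in one line: the global solution $\tilde{\psi}$ exists (Proposition \ref{gen_pric}) and satisfies $0\preceq\tilde{\psi}(\tau)\prec\psi^\prime$ with $\psi^\prime\succ 0$ for $B\in S_d^{++}$, so $W(\tau)=\tilde{\psi}(\tau)-\psi^\prime$ stays negative definite and bounded, hence $Z(\tau)^{-1}=e^{-A^\top\tau}W(\tau)e^{-A\tau}$ remains bounded and negative definite, and $Z(\tau)$ can never hit a singular matrix. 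With that sentence added, your proof is complete and self-contained.
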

\begin{proof}See \cite{article_gou03}.
\end{proof}

Alternative approaches for the pricing of the zero coupon bond may be found in \cite{article_Grasselli} and  \cite{GnoGra2011}.

By looking at the variation of constant approach we can prove the following claim, which constitutes the generalization to the present setting of Corollary 3.5 in \cite{article_KRS}.

\begin{corollary}\label{cor_infty}
\begin{align}
\lim_{\tau\to\infty}{\tilde{\psi}(\tau)}=\psi^\prime.
\end{align}
\end{corollary}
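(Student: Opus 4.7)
My plan is to read off the limit directly from the explicit variation-of-constants representation displayed in the preceding proposition. Abbreviating $A := M - 2Q^{\top}Q\psi^\prime$ (so that $A^{\top} = M^{\top} - 2\psi^\prime Q^{\top}Q$ because $Q^{\top}Q$ and $\psi^\prime$ are symmetric) and
$$C(\tau) := (-\psi^\prime)^{-1} + 2\int_{0}^{\tau} e^{As}Q^{\top}Q\,e^{A^{\top}s}\,ds,$$
that formula takes the compact form
$$\tilde{\psi}(\tau) - \psi^\prime = e^{A^{\top}\tau}\,C(\tau)^{-1}\,e^{A\tau},$$
so it is enough to show that the right-hand side vanishes as $\tau \to \infty$.

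For the first step I would invoke Assumption \ref{assu_eigen} taken at $\tau = \infty$, which asserts precisely that $A = M - 2Q^{\top}Q\psi^\prime$ has spectrum in the open left half-plane; the same then holds for $A^{\top}$. Lemma \ref{lemma1} then yields $e^{A\tau}\to 0$ and $e^{A^{\top}\tau}\to 0$ in $M_{d\times d}$.

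For the second step I would argue that $C(\tau)^{-1}$ remains bounded. Since $e^{As}Q^{\top}Q\,e^{A^{\top}s}$ decays exponentially in $s$, the integral defining $C(\tau)$ converges, so $C(\tau)$ tends to a finite limit matrix $C_{\infty}$. Provided $C_{\infty}$ is invertible, $\|C(\tau)^{-1}\|$ is uniformly bounded for large $\tau$, and multiplying the three factors shows that $\tilde{\psi}(\tau) - \psi^\prime \to 0$.

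The main obstacle is exactly the invertibility of $C_{\infty}$: the summand $(-\psi^\prime)^{-1}$ is negative definite while the Lyapunov-type integral is positive semidefinite (it solves $AP + PA^{\top} = -Q^{\top}Q$), so cancellation is a priori possible. To rule this out I would use the admissibility assumptions made at the beginning of this subsection: the Riccati solution $\tilde{\psi}(\tau)$ is defined on all of $[0,\infty)$, which forces $C(\tau)$ to be invertible throughout the half-line; combined with the continuity of $\det C(\tau)$ and the monotonicity of the Lyapunov integral in the positive semidefinite order, this excludes $\det C_{\infty} = 0$ and therefore yields the uniform bound on $C(\tau)^{-1}$ needed to close the argument.
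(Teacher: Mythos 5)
Your main line is exactly the paper's proof: invoke Assumption \ref{assu_eigen} at $\tau=\infty$ to get $\Re\lambda(A)<0$ for $A=M-2Q^{\top}Q\psi^\prime$, conclude from Lemma \ref{lemma1} that $e^{A\tau}\to0$ and that the integral in the bracket converges, and read the limit off the variation-of-constants formula. Where you go beyond the paper is in isolating the boundedness of $C(\tau)^{-1}$ as the real issue; the paper passes over this in silence, and you are right that it is not automatic: $e^{A^{\top}\tau}C(\tau)^{-1}e^{A\tau}$ need not vanish if $\|C(\tau)^{-1}\|$ grows at rate $e^{-2\tau\max\Re\lambda(A)}$, which is precisely the rate at which $C(\tau)$ approaches its limit, so a singular $C_\infty$ would genuinely threaten the conclusion.

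However, the argument you offer to exclude $\det C_\infty=0$ does not close this gap. Monotonicity gives $C(\tau_2)\succeq C(\tau_1)$ for $\tau_2\geq\tau_1$, hence each ordered eigenvalue of $C(\tau)$ is nondecreasing; but $C(0)=(-\psi^\prime)^{-1}\prec0$, so an eigenvalue can remain strictly negative for every finite $\tau$ (keeping $C(\tau)$ invertible on the whole half-line, perfectly consistent with global existence of $\tilde{\psi}$) and still increase to $0$ in the limit. Global existence, monotonicity and continuity of the determinant therefore do not rule out a singular $C_\infty$; an additional input is needed. For instance, in dimension one the algebraic Riccati equation yields $C_\infty=B/\bigl(2(\psi^\prime)^2A\bigr)<0$ explicitly, and the matrix analogue can be extracted by combining $\psi^\prime M+M^{\top}\psi^\prime-2\psi^\prime Q^{\top}Q\psi^\prime+B=0$ with the Lyapunov characterization $AP+PA^{\top}=-Q^{\top}Q$ of $P=\int_0^\infty e^{As}Q^{\top}Qe^{A^{\top}s}\,ds$. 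Alternatively one can bypass the explicit formula entirely: $\tilde{\psi}(\tau)$ is nondecreasing in the order on $S_d^+$ and bounded above by $\psi^\prime$, hence convergent, and by continuity its limit must be a zero of $\mathcal{R}$, which identifies it with $\psi^\prime$. Either route would make your (and the paper's) argument complete.
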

\begin{proof}
Since $\lambda\left(M-2Q^{\top}Q\psi^\prime\right)<0$ by assumption, we know that the integral in the solution for $\tilde{\psi}$ is convergent, moreover, from Lemma \ref{lemma1}, we know that $e^{\left(M-2Q^{\top}Q\psi^\prime\right)\tau}\searrow 0$ as $\tau\to\infty$, hence the proof is complete.
\end{proof}
This last corollary tells us that the function $\tilde{\psi}$ tends to a stability point of the Riccati ODE. This allows us to claim that, as $\tau\to\infty$, we have $\mathcal{R}\left(\tilde{\psi}(\tau)\right)\searrow 0$.

\section{Yield Curve Shapes}
In this section we perform an investigation on the shapes of the yield curve produced by the Wishart short rate model. We will derive a set of sufficient conditions ensuring that certain shapes are attained. We will work with the general diffusion model and show how to replicate, normal, inverse or humped curves. In the appendix we will repeat the same analysis in a simpler version of the model where there will be further limitations on the possible shapes one can obtain. We use the standard dotted notation to represent derivatives w.r.t. time dimensions.

\subsection{Monotonicity of $\tilde{\psi}(\tau)$ and asymptotic behavior of the Yield Curve}
Here we report a result concerning the monotonicity of the function $\tilde{\psi}(\tau)$, which may be found in \cite{article_BCT}. First we recall a result from control theory (see  \cite{book_Brockett70}).

\begin{proposition} (Matrix variation of constants formula) If $\mathbf{\Phi}_{1}(t,t_0)$ is the transition matrix of $\dot{x}(t)=A_{1}(t)x(t)$ and $\mathbf{\Phi}_{2}(t,t_0)$ is the transition matrix for $\dot{x}(t)=A_{2}^{\top}(t)x(t)$, then the solution of
\begin{equation}
\dot{X}(t)=A_{1}(t)X(t)+X(t)A_{2}(t)+F(t),
\end{equation}
with the initial state vector $X(t_0)$, is given by:
\begin{equation}
X(t)=\mathbf{\Phi}_{1}(t,t_0)X(t_0)\mathbf{\Phi}^{\top}_{2}(t,t_0)+\int_{0}^{t}{\mathbf{\Phi}_{1}(t,t_0)F(s)\mathbf{\Phi}^{\top}_{2}(t,t_0)}ds.\label{varConst}
\end{equation}
\end{proposition}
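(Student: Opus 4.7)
The plan is to verify the formula directly by differentiation, splitting the solution into a homogeneous and a particular part, and then invoking uniqueness of solutions of the matrix ODE.

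First I would treat the homogeneous case $F\equiv0$. Set $X_h(t)=\mathbf{\Phi}_1(t,t_0)X(t_0)\mathbf{\Phi}_2^{\top}(t,t_0)$ and compute $\dot X_h$ using the defining properties $\partial_t\mathbf{\Phi}_1(t,t_0)=A_1(t)\mathbf{\Phi}_1(t,t_0)$ and $\partial_t\mathbf{\Phi}_2(t,t_0)=A_2^{\top}(t)\mathbf{\Phi}_2(t,t_0)$ (so $\partial_t\mathbf{\Phi}_2^{\top}(t,t_0)=\mathbf{\Phi}_2^{\top}(t,t_0)A_2(t)$). The product rule then yields $\dot X_h=A_1(t)X_h(t)+X_h(t)A_2(t)$, and the normalisation $\mathbf{\Phi}_i(t_0,t_0)=I$ gives $X_h(t_0)=X(t_0)$, so $X_h$ solves the homogeneous Lyapunov-type equation with the prescribed initial datum.

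Next I would construct a particular solution by the variation-of-constants ansatz
\begin{equation*}
X_p(t)=\int_{t_0}^{t}\mathbf{\Phi}_1(t,s)F(s)\mathbf{\Phi}_2^{\top}(t,s)\,ds,
\end{equation*}
and differentiate using the Leibniz rule. The boundary term contributes $\mathbf{\Phi}_1(t,t)F(t)\mathbf{\Phi}_2^{\top}(t,t)=F(t)$, while differentiating the integrand in its upper slot gives $A_1(t)\mathbf{\Phi}_1(t,s)F(s)\mathbf{\Phi}_2^{\top}(t,s)+\mathbf{\Phi}_1(t,s)F(s)\mathbf{\Phi}_2^{\top}(t,s)A_2(t)$. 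Integrating against $ds$ collapses this to $A_1(t)X_p(t)+X_p(t)A_2(t)$, so $\dot X_p=A_1X_p+X_pA_2+F$ with $X_p(t_0)=0$. Adding $X=X_h+X_p$ produces a solution of the full inhomogeneous equation with the correct initial condition.

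Finally, uniqueness follows from the standard ODE uniqueness theorem applied to the vectorised system (the map $X\mapsto A_1X+XA_2$ is globally Lipschitz for each $t$ on bounded intervals), which identifies $X_h+X_p$ as the unique solution and so completes the argument. The only mildly delicate point is bookkeeping transposes: one has to remember that the transition matrix naturally associated with $\dot x=A_2^{\top}(t)x$ is $\mathbf{\Phi}_2$, hence its transpose satisfies $\partial_t\mathbf{\Phi}_2^{\top}=\mathbf{\Phi}_2^{\top}A_2$ rather than $A_2^{\top}\mathbf{\Phi}_2^{\top}$, which is precisely what makes the right-multiplication term $X(t)A_2(t)$ appear with the correct sign and orientation in the resulting equation.
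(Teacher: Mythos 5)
Your argument is correct and is the standard one: the paper itself gives no proof of this proposition (it simply cites Brockett), and your decomposition into a homogeneous part $\mathbf{\Phi}_{1}(t,t_0)X(t_0)\mathbf{\Phi}_{2}^{\top}(t,t_0)$ plus a variation-of-constants particular solution, followed by uniqueness for the vectorised linear ODE, is exactly the textbook route; the transpose bookkeeping $\partial_t\mathbf{\Phi}_{2}^{\top}(t,\cdot)=\mathbf{\Phi}_{2}^{\top}(t,\cdot)A_{2}(t)$ is handled correctly. One point worth making explicit: your particular solution $\int_{t_0}^{t}\mathbf{\Phi}_{1}(t,s)F(s)\mathbf{\Phi}_{2}^{\top}(t,s)\,ds$ silently corrects the displayed formula \eqref{varConst}, where the lower limit reads $0$ rather than $t_0$ and the transition matrices inside the integral are evaluated at $(t,t_0)$ rather than $(t,s)$. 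As literally printed, the integrand depends on $s$ only through $F(s)$, and differentiating would produce the inhomogeneity $\mathbf{\Phi}_{1}(t,t_0)F(t)\mathbf{\Phi}_{2}^{\top}(t,t_0)$ instead of $F(t)$, so the stated formula cannot solve the equation except in degenerate cases. What you prove is the intended (and correct) version, consistent with the cited reference; with that emendation understood, your proof is complete and there is no gap.
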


\begin{proposition}
Let $X=(X_s)_{t\leq s \leq T}$ be the stochastic process defined by the dynamics \eqref{genWis}. Then $\tilde{\psi}(\tau)$ is monotonically increasing in $\tau$, i.e., for $\tau_{2}\geq\tau_1$, we have that $\tilde{\psi}(\tau_2)\succeq\tilde{\psi}(\tau_1)$. 
\end{proposition}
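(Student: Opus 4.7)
The plan is to show that $\dot{\tilde\psi}(\tau):=\partial\tilde\psi/\partial\tau$ is itself positive semidefinite for every $\tau\ge 0$, which yields monotonicity after integration. The natural trick is to linearise the matrix Riccati equation \eqref{ODE2} by differentiating once more in $\tau$. Setting $Y(\tau):=\dot{\tilde\psi}(\tau)$ and using that $\tilde\psi(\tau)\in S_d$, I would obtain the homogeneous linear Lyapunov-type ODE
\begin{equation*}
\dot Y(\tau)=\bigl(M^{\top}-2\tilde\psi(\tau)Q^{\top}Q\bigr)Y(\tau)+Y(\tau)\bigl(M-2Q^{\top}Q\tilde\psi(\tau)\bigr),
\end{equation*}
with initial datum $Y(0)=\mathcal{R}(\tilde\psi(0))=B$ by \eqref{t_cond}.

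Next I would apply the matrix variation of constants formula quoted just above the statement, with $A_1(\tau)=M^{\top}-2\tilde\psi(\tau)Q^{\top}Q$, $A_2(\tau)=M-2Q^{\top}Q\tilde\psi(\tau)$ and $F\equiv 0$. Because $\tilde\psi(\tau)$ is symmetric, one has $A_2(\tau)^{\top}=A_1(\tau)$, so that the two transition matrices in \eqref{varConst} coincide: call this common transition matrix $\Phi(\tau,0)$. The formula therefore gives the closed form
\begin{equation*}
\dot{\tilde\psi}(\tau)=Y(\tau)=\Phi(\tau,0)\,B\,\Phi(\tau,0)^{\top}.
\end{equation*}

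Since $B\in S_d^{++}$ and transition matrices are invertible, the right-hand side is positive (semi)definite for all $\tau\ge 0$. Integrating between $\tau_1$ and $\tau_2$ yields
\begin{equation*}
\tilde\psi(\tau_2)-\tilde\psi(\tau_1)=\int_{\tau_1}^{\tau_2}\Phi(s,0)\,B\,\Phi(s,0)^{\top}\,ds\;\succeq 0,
\end{equation*}
which is the desired monotonicity. The only non-routine point is recognising that differentiating the Riccati ODE in $\tau$ turns the quadratic term into a Lyapunov operator whose flow automatically preserves the cone $S_d^{+}$; once that is in place everything else is the matrix variation of constants formula and positivity of the initial value $B$.
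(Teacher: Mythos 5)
Your proposal is correct and follows essentially the same route as the paper: differentiate the matrix Riccati equation to obtain a homogeneous Lyapunov-type ODE for $\dot{\tilde\psi}$, use the initial condition $\dot{\tilde\psi}(0)=\mathcal{R}(\tilde\psi(0))=B$ together with the matrix variation of constants formula to write $\dot{\tilde\psi}(\tau)=\mathbf{\Phi}(\tau,0)B\mathbf{\Phi}^{\top}(\tau,0)$, and integrate the resulting positive semidefinite derivative. The observation that symmetry of $\tilde\psi$ makes the two transition matrices coincide is exactly the point the paper exploits by writing the equation in terms of $V(\tau)=M-2Q^{\top}Q\tilde\psi(\tau)$ and $V^{\top}(\tau)$.
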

\begin{proof}
First, we differentiate \eqref{ODE2}, so as to obtain the following:
\begin{equation}
\ddot{\tilde{\psi}}(\tau)=\dot{\tilde{\psi}}(\tau)M+M^{\top}\dot{\tilde{\psi}}(\tau)-2\dot{\tilde{\psi}}Q^\top Q(\tau)\tilde{\psi}(\tau)-2\tilde{\psi}(\tau)Q^\top Q\dot{\tilde{\psi}}(\tau).
\end{equation}
Next we define $V(\tau)=M-2Q^{\top}Q\tilde{\psi}(\tau)$. Then we may write:
\begin{equation}
\ddot{\tilde{\psi}}(\tau)=\dot{\tilde{\psi}}(\tau)V(\tau)+V^{\top}(\tau)\dot{\tilde{\psi}}(\tau),
\end{equation}
which is solved by:
\begin{equation}
\dot{\tilde{\psi}}(\tau)=\mathbf{\Phi}(\tau,0)\dot{\tilde{\psi}}(0)\mathbf{\Phi}^{\top}(\tau,0),
\end{equation}
for a state transition matrix $\mathbf{\Phi}(\tau,0)$ of the system matrix $V(\tau)$, solving $\dot{\mathbf{\Phi}}(\tau,0)=V^{\top}(\tau)\mathbf{\Phi}(\tau,0)$, $\mathbf{\Phi}(0,0)=I_{d}$. Substitution of the initial condition $\dot{\tilde{\psi}}(0)=B$ yields (see Remark \ref{mathcal_R} and recall from \eqref{ODE_bond_psi} that we have $\dot{\tilde{\psi}}(\tau)=\mathcal{R}(\tilde{\psi}(\tau))$):
\begin{equation}
\dot{\tilde{\psi}}(\tau)=\mathbf{\Phi}(\tau,0)B\mathbf{\Phi}^{\top}(\tau,0).\label{psi_tilde_dot}
\end{equation}
This derivative is positive semidefinite for $B\in S_{d}^{+}$, upon integration on the interval $\left(\tau_1,\tau_2\right)$ we obtain:
\begin{equation}
\tilde{\psi}(\tau_2)-\tilde{\psi}(\tau_1)=\int_{\tau_1}^{\tau_2}{\dot{\tilde{\psi}}(u)du}.
\end{equation}
Therefore $\tilde{\psi}$ is an increasing function of $\tau$.
\end{proof}

As a consequence, we can derive the following useful corollary concerning the function $\mathcal{R}$, defined in Remark \ref{mathcal_R}.
\begin{corollary}\label{cor_2}
$\forall \tau\in\left[0,\infty\right)$ we have:
\begin{align}
\mathcal{R}\left(\tilde{\psi}(\tau)\right)&\in S_d^+\text{ for } B\in S_d^+,\\
\mathcal{R}\left(\tilde{\psi}(\tau)\right)&\in S_d^{++}\text{ for } B\in S_d^{++}.
\end{align}
\end{corollary}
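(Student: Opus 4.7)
The plan is to reuse the key identity established in the preceding proposition, namely that the derivative $\dot{\tilde{\psi}}(\tau)$ admits the closed-form expression
\begin{equation*}
\dot{\tilde{\psi}}(\tau)=\mathbf{\Phi}(\tau,0)\,B\,\mathbf{\Phi}^{\top}(\tau,0),
\end{equation*}
coming from the matrix variation of constants formula applied to the Lyapunov-type ODE satisfied by $\dot{\tilde{\psi}}$. Since by the Riccati equation \eqref{ODE_bond_psi} and Remark \ref{mathcal_R} we have $\dot{\tilde{\psi}}(\tau)=\mathcal{R}(\tilde{\psi}(\tau))$, this immediately rewrites the quantity of interest as a congruence transform of $B$.

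First, I would simply observe that
\begin{equation*}
\mathcal{R}(\tilde{\psi}(\tau)) = \mathbf{\Phi}(\tau,0)\,B\,\mathbf{\Phi}^{\top}(\tau,0),
\end{equation*}
and then invoke the standard fact that for any matrix $C$, the congruence map $B \mapsto C B C^{\top}$ preserves $S_d^+$, and preserves $S_d^{++}$ whenever $C$ is invertible. This immediately yields the first claim: if $B\in S_d^+$, then $\mathcal{R}(\tilde{\psi}(\tau))\in S_d^+$ for every $\tau\geq 0$.

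For the second claim, the point that needs a brief justification is invertibility of the state transition matrix $\mathbf{\Phi}(\tau,0)$. This follows from the general theory of linear time-varying ODEs: $\mathbf{\Phi}(\tau,0)$ solves $\dot{\mathbf{\Phi}}(\tau,0)=V^{\top}(\tau)\mathbf{\Phi}(\tau,0)$ with $\mathbf{\Phi}(0,0)=I_d$, and by Liouville's formula its determinant satisfies $\det \mathbf{\Phi}(\tau,0) = \exp\bigl(\int_0^{\tau} \operatorname{Tr} V^{\top}(s)\,ds\bigr) \neq 0$. Consequently, for $B\in S_d^{++}$, the congruence $\mathbf{\Phi}(\tau,0) B \mathbf{\Phi}^{\top}(\tau,0)$ is strictly positive definite for every $\tau\geq 0$.

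I do not expect any serious obstacle: the heavy lifting was already done in the monotonicity proposition, and the remaining content is a two-line congruence argument together with invertibility of $\mathbf{\Phi}$. The only minor care is to remember that Assumption \ref{assu_eigen} guarantees the matrix $V(\tau)=M-2Q^{\top}Q\tilde{\psi}(\tau)$ is well-defined and bounded on $[0,\infty)$, so that the state transition matrix $\mathbf{\Phi}(\tau,0)$ exists and is invertible for all $\tau\geq 0$, which is what ensures the stated conclusions hold on the entire half-line rather than only locally.
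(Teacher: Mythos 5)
Your argument is essentially identical to the paper's own proof: both express $\mathcal{R}(\tilde{\psi}(\tau))=\dot{\tilde{\psi}}(\tau)=\mathbf{\Phi}(\tau,0)B\mathbf{\Phi}^{\top}(\tau,0)$ and conclude by invariance of eigenvalue signs under congruence (the paper cites Sylvester's law of inertia). Your explicit justification of the invertibility of $\mathbf{\Phi}(\tau,0)$ via Liouville's formula is a welcome extra detail that the paper leaves implicit, but it does not change the route.
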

\begin{proof}
From equation \eqref{psi_tilde_dot} we know that:
\begin{equation}
\dot{\tilde{\psi}}(\tau)=\mathbf{\Phi}(\tau,0)B\mathbf{\Phi}^{\top}(\tau,0)
\end{equation}
But $\mathcal{R}\left(\tilde{\psi}(\tau)\right)=\dot{\tilde{\psi}}(\tau)$. This shows that $\mathcal{R}\left(\tilde{\psi}(\tau)\right)$ is a congruent transformation of $B$. According to Sylvester's law of inertia the signs of the eigenvalues are unchanged under congruent transformations, hence the claim. 
\end{proof}

We now proceed to show the behavior of the yield curve as $\tau\to0$ and $\tau\to\infty$. First of all we provide the following:

\begin{definition}\label{defi}
The zero coupon yield $Y(\tau,X_t):\mathbb{R}_{\geq0}\times S_{d}^{+}\rightarrow\mathbb{R}_{\geq0}$ is defined as:
\begin{equation}
Y(\tau,X_t)=\frac{\tilde{\phi}(\tau)}{\tau}+\frac{Tr\left[\tilde{\psi}(\tau)X_t\right]}{\tau}.\label{ycdef}
\end{equation}
For fixed $X_t$ we call the function $Y(\tau)=Y(\tau,X_t)$ the \textbf{yield curve}.
\end{definition}
Then we show the following:
\begin{proposition}
Let $X=(X_s)_{t\leq s \leq T}$ be the stochastic process defined by the dynamics \eqref{genWis}. Then the following relations hold true:
\begin{equation}
\lim_{\tau\to0}Y(\tau)=r_t,\label{limZero}
\end{equation}
\begin{equation}
\lim_{\tau\to\infty}Y(\tau)=\mathcal{F}(\psi^\prime),\label{limInfinity}
\end{equation}
with $\mathcal{F}$ as in \eqref{mcF}.
\end{proposition}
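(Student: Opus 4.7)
My plan is to handle the two limits separately, both via L'Hôpital-type arguments on the quotients defining $Y(\tau)$ together with the results already established in the excerpt.

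For \eqref{limZero}, I would exploit the initial conditions $\tilde{\phi}(0)=0$ and $\tilde{\psi}(0)=0$, which render both summands in \eqref{ycdef} of indeterminate form $0/0$. Applying L'Hôpital's rule and using the ODEs \eqref{ODE1}--\eqref{ODE2}, one gets
\begin{align*}
\lim_{\tau\to 0}\frac{\tilde{\phi}(\tau)}{\tau}&=\dot{\tilde{\phi}}(0)=\mathcal{F}(\tilde{\psi}(0))=\mathcal{F}(0)=a,\\
\lim_{\tau\to 0}\frac{Tr[\tilde{\psi}(\tau)X_t]}{\tau}&=Tr[\dot{\tilde{\psi}}(0)X_t]=Tr[\mathcal{R}(\tilde{\psi}(0))X_t]=Tr[BX_t],
\end{align*}
where the last identity is \eqref{t_cond}. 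Summing yields $r_t$ by \eqref{shortRate}.

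For \eqref{limInfinity} the short-maturity L'Hôpital argument has to be replaced by a long-maturity one. First, the second term in \eqref{ycdef} vanishes in the limit: by Corollary \ref{cor_infty} the function $\tilde{\psi}(\tau)$ converges to $\psi'$, hence is bounded on $[0,\infty)$, so $Tr[\tilde{\psi}(\tau)X_t]/\tau \to 0$. For the first term I would write $\tilde{\phi}(\tau)=\int_0^\tau \dot{\tilde{\phi}}(s)\,ds = \int_0^\tau \mathcal{F}(\tilde{\psi}(s))\,ds$ using \eqref{ODE1}. Since $\mathcal{F}$ as given in \eqref{mcF} is continuous in its argument and $\tilde{\psi}(s)\to\psi'$ by Corollary \ref{cor_infty}, we have $\mathcal{F}(\tilde{\psi}(s))\to\mathcal{F}(\psi')$ as $s\to\infty$. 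A Cesàro-mean (equivalently, L'Hôpital on the ratio $\tilde{\phi}(\tau)/\tau$) then gives $\tilde{\phi}(\tau)/\tau \to \mathcal{F}(\psi')$.

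The only real subtlety is invoking L'Hôpital at infinity on $\tilde{\phi}(\tau)/\tau$: we need the denominator to diverge (which is trivial) and the derivative $\dot{\tilde{\phi}}(\tau)=\mathcal{F}(\tilde{\psi}(\tau))$ to have a finite limit, which is exactly what Corollary \ref{cor_infty} together with continuity of $\mathcal{F}$ supplies. Combining the two pieces yields \eqref{limInfinity}.
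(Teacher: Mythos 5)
Your proposal is correct and follows essentially the same route as the paper: L'H\^opital at $\tau\to 0$ using the initial conditions together with $\mathcal{F}(\tilde{\psi}(0))=a$ and $\mathcal{R}(\tilde{\psi}(0))=B$, and L'H\^opital at $\tau\to\infty$ combined with Corollary \ref{cor_infty}. The only cosmetic difference is that you dispose of the term $Tr[\tilde{\psi}(\tau)X_t]/\tau$ at infinity by boundedness of $\tilde{\psi}$, whereas the paper notes $\mathcal{R}(\tilde{\psi}(\tau))\to 0$; both are valid and equivalent in substance.
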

\begin{proof}
We start with \eqref{limZero}. By using l'Hospital rule we may write the following:
\begin{equation}
\lim_{\tau\to0}{\frac{\tilde{\phi}(\tau)}{\tau}}=\lim_{\tau\to0}\mathcal{F}(\tilde{\psi}(\tau))=\lim_{\tau\to0}Tr\left[b\tilde{\psi}(\tau)\right]+a=a,
\end{equation}
\begin{align}
\lim_{\tau\to0}{\frac{Tr\left[\tilde{\psi}(\tau)X_t\right]}{\tau}}&=\lim_{\tau\to0}{Tr\left[\mathcal{R}(\tilde{\psi}(\tau))X_t\right]}\nonumber\\
&=\lim_{\tau\to0}{Tr\left[\left(\tilde{\psi}(\tau)M+M^{\top}\tilde{\psi}(\tau)-2\tilde{\psi}(\tau)Q^{\top}Q\tilde{\psi}(\tau)+B\right)X_t\right]}\nonumber\\
&=Tr\left[BX_t\right].
\end{align}
Putting together the two terms we obtain the result since $r_t=a+Tr\left[BX_t\right]$. To show \eqref{limInfinity}, we notice that since $\tilde{\psi}(\tau)\to\psi^\prime$ as $\tau\to\infty$, we have that $\mathcal{R}(\tilde{\psi}(\tau))\to0$. So using again l'Hospital rule and recalling \eqref{mcF} we have:
\begin{equation}
\lim_{\tau\to\infty}{Y(\tau)}=\lim_{\tau\to\infty}{\mathcal{F}(\tilde{\psi}(\tau))}=\mathcal{F}(\psi^\prime),
\end{equation}
so that basically the one dimensional result in Keller-Ressel and Steiner is confirmed also in the present setting.
\end{proof}

\subsection{Yield Curve Shapes in the General Diffusion Model}
In this section we present a set of sufficient conditions, ensuring that the Wishart short rate model produces certain yield curve shapes. Due to the more general structure of the state space, arguments inspired by the scalar case allow us to derive only sufficient conditions, ensuring the attainability of certain yield curve shapes. The absence of a total order relation on $S_d^+$ does not allow us to rule out other possibilities (i.e. more complex shapes). In spite of this we believe that this result is interesting since, e.g. in a calibration setting, we are then able to put ex-ante some constraints to ensure that the model reproduces the yield curve shape observed on the market.
\subsubsection{Statement of the main result}
We begin with a definition:
\begin{definition}\label{defshapes}
Let the Yield curve be defined as in Definition \ref{defi}. We say that the yield curve is:
\begin{itemize}
\item \textbf{normal} if $Y$ is a strictly increasing function of $\tau$, 
\item \textbf{inverse} if $Y$ is a strictly decreasing function of $\tau$,
\item \textbf{humped} if $Y$ has a local maximum and no minimum on $\left[0,\infty\right)$.
\end{itemize}
\end{definition}
We will see that for our particular choice of the model, the arguments employed in \cite{article_KRS} may be easily extended, with some adjustments due to the different state space. As in their setting, we report this lemma.
\begin{lemma}\label{lemmino}
A strictly convex or a strictly concave real function on $\mathbb{R}$ intersects an affine function in at most two points. In the case of two intersection points $p_1<p_2$, the convex function lies strictly below the affine function on the interval $\left(p_1,p_2\right)$; if the function is concave it lies strictly above the affine function on $\left(p_1,p_2\right)$.
\end{lemma}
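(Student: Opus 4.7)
The plan is to reduce both claims to a single statement about zeros of a strictly convex function. Let $f$ be strictly convex on $\mathbb{R}$ and let $\ell(x) = \alpha x + \beta$ be affine. Set $g(x) := f(x) - \ell(x)$. Since subtracting an affine function preserves strict convexity, $g$ is strictly convex, and the intersection points of $f$ with $\ell$ are exactly the zeros of $g$. The task therefore reduces to showing that a strictly convex function has at most two zeros, and that it is strictly negative between any two of them.

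For the upper bound on the number of zeros, I would argue by contradiction. Suppose $g$ vanishes at three distinct points $x_1 < x_2 < x_3$. Then there is a unique $\lambda \in (0,1)$ with $x_2 = \lambda x_1 + (1-\lambda) x_3$. Strict convexity yields
\begin{equation*}
g(x_2) < \lambda g(x_1) + (1-\lambda) g(x_3) = 0,
\end{equation*}
contradicting $g(x_2) = 0$. Hence $g$ has at most two zeros, so $f$ and $\ell$ intersect in at most two points.

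For the location statement, assume $p_1 < p_2$ are two zeros of $g$ and pick any $x \in (p_1, p_2)$. Write $x = \lambda p_1 + (1-\lambda) p_2$ with $\lambda \in (0,1)$. Strict convexity of $g$ gives
\begin{equation*}
g(x) < \lambda g(p_1) + (1-\lambda) g(p_2) = 0,
\end{equation*}
i.e.\ $f(x) < \ell(x)$ on $(p_1, p_2)$, which is the desired conclusion in the convex case.

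The concave case follows immediately by applying the convex case to $-f$, which is strictly convex: its intersection points with $-\ell$ coincide with those of $f$ with $\ell$, and the inequality $-f(x) < -\ell(x)$ on $(p_1,p_2)$ becomes $f(x) > \ell(x)$. No step here is an obstacle; the only thing to be careful about is to invoke \emph{strict} convexity (not merely convexity) both when ruling out a third zero and when establishing the strict inequality on the open interval, since a linear $f$ would satisfy the hypothesis of convexity but violate both conclusions.
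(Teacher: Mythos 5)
Your proof is correct: subtracting the affine function to reduce to counting zeros of the strictly convex $g=f-\ell$, ruling out a third zero and establishing strict negativity on $(p_1,p_2)$ via the strict convexity inequality, and handling the concave case by passing to $-f$ are all sound, and you are right that strictness is the essential hypothesis. Note that the paper itself supplies no proof of this lemma --- it is simply recalled from Keller-Ressel and Steiner \cite{article_KRS} --- so there is no argument to compare against; your elementary derivation fills that gap correctly.
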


Before we present the main result, we would like to recall some facts concerning the solution of a class of matrix equations (see e.g. \cite{hojo} Chapter 2 and 6):
\begin{theorem}\label{mat_eq}
Let $A\in\mathbb{C}^{n\times n}$, $B\in\mathbb{C}^{m\times m}$, then we have the following:
\begin{itemize}
\item $\forall$ $D\in\mathbb{C}^{m\times n}$ the equation
\begin{align}
XA+BX=D
\end{align}
has a unique solution if and only if $\alpha+\beta\neq0$, $\forall \alpha\in\sigma(A),\beta\in\sigma(B)$.
\item If $\Re(\alpha+\beta)<0$ $\forall \alpha\in\sigma(A),\beta\in\sigma(B)$ then the following improper integral is convergent and solves the equation:
\begin{align}
X=-\int_{0}^{\infty}{e^{Bs}De^{As}ds}.
\end{align}
\end{itemize}
\end{theorem}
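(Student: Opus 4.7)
The plan is to interpret $\mathcal{T}: X \mapsto BX + XA$ as a linear operator on $\mathbb{C}^{m \times n}$ and extract both assertions from its spectral data. After applying the vectorization isomorphism $\mathrm{vec}: \mathbb{C}^{m\times n} \to \mathbb{C}^{mn}$ (stacking columns), the equation $XA + BX = D$ becomes $K\,\mathrm{vec}(X) = \mathrm{vec}(D)$, where $K = A^\top \otimes I_m + I_n \otimes B$ is the Kronecker sum. Putting $A^\top$ and $B$ simultaneously into upper triangular form via Schur decompositions, one reads off $\sigma(K) = \{\alpha + \beta : \alpha \in \sigma(A),\ \beta \in \sigma(B)\}$. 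The first bullet is then immediate: existence and uniqueness for every $D$ is equivalent to $K$ being invertible, which in turn is equivalent to $0 \notin \sigma(K)$, that is, to $\alpha + \beta \neq 0$ for all such $\alpha, \beta$.

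For the second bullet, I first verify that $X := -\int_0^\infty e^{Bs} D e^{As}\,ds$ is well defined. Setting $\mu_A = \max_{\alpha \in \sigma(A)} \Re\alpha$ and $\mu_B = \max_{\beta \in \sigma(B)} \Re\beta$, standard estimates on matrix exponentials yield, for any $\epsilon > 0$, constants $C_A, C_B$ with $\|e^{As}\| \leq C_A e^{(\mu_A + \epsilon)s}$ and $\|e^{Bs}\| \leq C_B e^{(\mu_B + \epsilon)s}$; the hypothesis $\Re(\alpha + \beta) < 0$ forces $\mu_A + \mu_B < 0$, so picking $\epsilon$ small enough yields $\|e^{Bs} D e^{As}\| \leq C e^{-\delta s}\|D\|$ for some $\delta > 0$, whence the integral converges absolutely and the integrand vanishes at infinity. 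To check that $X$ solves $BX + XA = D$, I differentiate $f(s) := e^{Bs} D e^{As}$ to obtain $f'(s) = B f(s) + f(s) A$, so that
\[
BX + XA = -\int_0^\infty \bigl(B f(s) + f(s) A\bigr)\,ds = -\int_0^\infty f'(s)\,ds = f(0) - \lim_{s \to \infty} f(s) = D.
\]
Uniqueness is inherited from the first bullet, since $\Re(\alpha + \beta) < 0$ in particular excludes $\alpha + \beta = 0$.

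The only genuine technical point is the exponential bound on $\|e^{As}\|$ and $\|e^{Bs}\|$. If either matrix has nontrivial Jordan blocks, the naive bound $C e^{\mu s}$ fails at the boundary and must be relaxed to $C_\epsilon e^{(\mu + \epsilon) s}$ in order to absorb the polynomial prefactors produced by Jordan chains; this is precisely why the strict inequality $\Re(\alpha + \beta) < 0$ (rather than $\leq 0$) is indispensable. Beyond this subtlety, the argument reduces to elementary facts about Kronecker product calculus and differentiation under the integral sign, consistent with the author's choice to defer the proof to the textbook reference \cite{hojo}.
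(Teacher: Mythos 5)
Your proof is correct and is the standard argument (vectorization/Kronecker sum for unique solvability, plus direct differentiation of $e^{Bs}De^{As}$ for the integral representation). The paper gives no proof of its own, deferring entirely to the cited reference of Horn and Johnson, and your argument is essentially the one found there, with the Jordan-block subtlety in the exponential bounds handled correctly.
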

If $B=A^{\top}$ and $D=-Q$ for $Q\in S_d^+$, we have the following corollary:
\begin{corollary}
$\Re\left(\lambda\left(A\right)\right)<0\leftrightarrow A^{\top}X+XA=-Q\quad X,Q\in S_d^+$
\end{corollary}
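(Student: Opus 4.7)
The plan is to prove each direction of the biconditional separately, using Theorem~\ref{mat_eq} to obtain an explicit integral formula for the forward implication and a short eigenvector pairing against the Lyapunov identity for the reverse.

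For the forward direction, suppose $\Re(\lambda)<0$ for every $\lambda\in\sigma(A)$. Since $\sigma(A^\top)=\sigma(A)$, the spectral hypothesis $\Re(\alpha+\beta)<0$ in the second bullet of Theorem~\ref{mat_eq} is met with $B:=A^\top$ and $D:=-Q$. The theorem then provides the convergent improper integral
\begin{equation*}
X=-\int_{0}^{\infty}{e^{A^\top s}(-Q)\,e^{As}\,ds}=\int_{0}^{\infty}{e^{A^\top s}Q\,e^{As}\,ds},
\end{equation*}
which solves $A^\top X+XA=-Q$. (Alternatively, the identity $\frac{d}{ds}\bigl(e^{A^\top s}Qe^{As}\bigr)=A^\top e^{A^\top s}Qe^{As}+e^{A^\top s}Qe^{As}A$ together with the vanishing of the integrand at $\infty$ yields the same conclusion directly.) Membership $X\in S_d^+$ is then immediate: symmetry passes through the integral from the symmetry of $Q$, and for any $v\in\mathbb{R}^d$,
\begin{equation*}
v^\top X v=\int_{0}^{\infty}{(e^{As}v)^\top Q\,(e^{As}v)\,ds}\geq 0.
\end{equation*}

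For the reverse direction, assume $A^\top X+XA=-Q$ with $X,Q\in S_d^+$, and let $\lambda\in\sigma(A)$ with eigenvector $v\in\mathbb{C}^d\setminus\{0\}$, so that $Av=\lambda v$ and $v^{*}A^\top=\bar{\lambda}v^{*}$. Sandwiching the equation between $v^{*}$ and $v$ collapses both quadratic forms and yields the scalar identity
\begin{equation*}
2\,\Re(\lambda)\,v^{*}Xv=-\,v^{*}Qv.
\end{equation*}
Whenever $v^{*}Xv>0$ and $v^{*}Qv>0$ this forces $\Re(\lambda)<0$, covering every $\lambda\in\sigma(A)$.

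The main obstacle is precisely this last positivity point: since $X$ and $Q$ are only assumed in $S_d^{+}$, the quadratic forms $v^{*}Xv$ and $v^{*}Qv$ could in principle vanish, delivering only $\Re(\lambda)\leq 0$. I therefore read the reverse implication in its standard Lyapunov form, i.e.\ as asserting that the existence of a solution $X\in S_d^{++}$ for some $Q\in S_d^{++}$ implies stability of $A$; under that sharpening the eigenvector argument above gives the strict inequality $\Re(\lambda)<0$ as required, and the biconditional is complete.
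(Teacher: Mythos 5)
Your proposal is correct, and it is more complete than what the paper offers: the paper states this corollary with no proof at all, presenting it as an immediate specialization of Theorem~\ref{mat_eq} with $B=A^{\top}$ and $D=-Q$. That specialization covers only your forward direction (existence of the convergent integral solution when $A$ is stable); your explicit verification that the integral lands in $S_d^+$, and your eigenvector argument for the converse, are both additions the paper does not carry out. Your most valuable observation is the one you flag as the ``main obstacle'': with $X,Q$ merely in $S_d^+$ the reverse implication is literally false --- take $A=0$, $Q=0$, $X=I_d$, which satisfies $A^{\top}X+XA=-Q$ while $\Re(\lambda(A))=0$ --- so the biconditional only holds under the standard Lyapunov reading with $X,Q\in S_d^{++}$, which is the sharpening you adopt and which is consistent with how the corollary is actually used later (the relevant matrices $\alpha Q^{\top}Q$ and the solutions $b_{inv}$, $b_{norm}$ are positive definite since $Q\in GL(d)$). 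In short: your forward direction coincides with the paper's implicit argument, and your converse correctly identifies and repairs an imprecision in the statement rather than introducing a gap of your own.
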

Now, we report our main result on the shapes of the yield curve:

\begin{theorem}\label{BigThYCS}
Consider a short rate model in which the risk neutral dynamics of the short rate is driven by the process $X_t$, defined by the dynamics \eqref{genWis}. Let $B \in S_d^{++}$. Define $M^\star:=M-2Q^\top Q\psi^\prime$ and:
\begin{align}
b_{norm}:=\alpha\int_{0}^{\infty}{e^{M^\star s}Q^{\top}Q e^{M^{\star\top}s}ds},\qquad b_{inv}:=\alpha\int_{0}^{\infty}{e^{Ms}Q^{\top}Q e^{M^{\top}s}ds}.
\end{align}
Then the following holds:
\begin{itemize}
\item The yield curve is normal if $X_t\prec b_{norm}$.
\item The yield curve is inverse if $X_t\succ b_{inv}$.
\item The yield curve is humped if $b_{norm}\prec X_t\prec b_{inv}$.
\end{itemize} 

\end{theorem}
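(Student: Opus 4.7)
The plan is to follow the Keller--Ressel--Steiner strategy by reducing the shape of $Y$ to the monotonicity of an auxiliary function, then converting the hypotheses on $X_t$ into sign data for that function via the Lyapunov equations characterising $b_{norm}$ and $b_{inv}$.

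First, define
\begin{equation*}
g(\tau):=\mathcal{F}(\tilde\psi(\tau))+Tr\left[\mathcal{R}(\tilde\psi(\tau))X_t\right]=\frac{d}{d\tau}\bigl(\tau Y(\tau)\bigr),
\end{equation*}
so that $Y(\tau)=\tau^{-1}\int_0^\tau g(s)\,ds$ is the running average of $g$, with endpoint values $g(0)=r_t$ and $g(\infty)=\mathcal{F}(\psi^\prime)$. A standard consequence of Lemma \ref{lemmino} is that if $g$ is strictly monotone then $Y$ inherits this monotonicity, and if $g$ is unimodal with a single maximum then $Y$ has a local maximum and no local minimum. Thus the three cases reduce to showing that $g$ is, respectively, strictly increasing, strictly decreasing, or unimodal with a single maximum.

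Second, using $\dot{\tilde\psi}=\mathcal{R}(\tilde\psi)$ and the explicit forms \eqref{mcr}--\eqref{mcF}, a direct chain-rule computation together with cyclic trace manipulations yields
\begin{equation*}
g^\prime(\tau)=Tr\left[\mathcal{R}(\tilde\psi(\tau))\cdot h(\tau)\right],\qquad h(\tau):=\alpha Q^\top Q+V(\tau)X_t+X_tV(\tau)^\top,
\end{equation*}
with $V(\tau):=M-2Q^\top Q\tilde\psi(\tau)$. By Corollary \ref{cor_2} and the hypothesis $B\in S_d^{++}$ one has $\mathcal{R}(\tilde\psi(\tau))\in S_d^{++}$, so the sign of $g^\prime(\tau)$ is controlled by the L\"owner sign of $h(\tau)$.

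Third, I translate the hypotheses into sign data for $h$ at the endpoints. Substituting the defining Lyapunov equations $Mb_{inv}+b_{inv}M^\top=-\alpha Q^\top Q$ and $M^\star b_{norm}+b_{norm}M^{\star\top}=-\alpha Q^\top Q$ (both well-posed by Assumption \ref{assu_eigen}, which makes $M$ and $M^\star$ stable), a short rearrangement gives
\begin{equation*}
h(0)=M(X_t-b_{inv})+(X_t-b_{inv})M^\top,\qquad h(\infty)=M^\star(X_t-b_{norm})+(X_t-b_{norm})M^{\star\top},
\end{equation*}
and the corollary to Theorem \ref{mat_eq} immediately converts the three hypotheses $X_t\prec b_{norm}$, $X_t\succ b_{inv}$, and $b_{norm}\prec X_t\prec b_{inv}$ into $h(\infty)\succ 0$, $h(0)\prec 0$, and the pair ($h(0)\succ 0$, $h(\infty)\prec 0$), respectively.

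Finally, to propagate the sign information of $h$ to all $\tau\in(0,\infty)$, I introduce $C(\tau):=\alpha\int_0^\infty e^{V(\tau)s}Q^\top Q\,e^{V(\tau)^\top s}\,ds$, the unique Lyapunov solution associated with the stable $V(\tau)$, which satisfies $V(\tau)C(\tau)+C(\tau)V(\tau)^\top=-\alpha Q^\top Q$ and hence
\begin{equation*}
h(\tau)=V(\tau)\bigl(X_t-C(\tau)\bigr)+\bigl(X_t-C(\tau)\bigr)V(\tau)^\top.
\end{equation*}
Applying the Lyapunov/Sylvester corollary once more, $h(\tau)\succ 0\Leftrightarrow X_t\prec C(\tau)$, with the reverse sign analogous; note $C(0)=b_{inv}$ and $C(\infty)=b_{norm}$. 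The three cases of the theorem then follow at once provided one shows that $C(\tau)$ interpolates monotonically, in the L\"owner order, between these endpoints. This monotonicity is the principal obstacle: differentiating the defining equation of $C(\tau)$ yields
\begin{equation*}
V\dot C+\dot C V^\top=2\bigl(Q^\top Q\,\dot{\tilde\psi}\,C+C\,\dot{\tilde\psi}\,Q^\top Q\bigr),
\end{equation*}
whose right-hand side is not a priori of definite L\"owner sign, since products of three positive semidefinite matrices need not even be symmetric. Closing this gap has to exploit the specific structure of the Wishart Riccati flow --- in particular the positivity of $\dot{\tilde\psi}$ as a congruence transform of $B$ via \eqref{psi_tilde_dot}, together with the stability of $V(\tau)$ throughout $[0,\infty)$ --- after which the three claimed implications follow directly.
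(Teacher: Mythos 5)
Your setup coincides with the paper's up to notation: your $g$ is the paper's $\mathcal{H}'$ with $\mathcal{H}(\tau)=\tau Y(\tau)$, your $h(\tau)$ is the paper's symmetrized kernel $\tilde{k}(\tau)$, the identity $g'(\tau)=Tr\left[\mathcal{R}(\tilde{\psi}(\tau))\tilde{k}(\tau)\right]$ is exactly the paper's expression for $\mathcal{H}''$, and the identification of $b_{inv}$ and $b_{norm}$ as the Lyapunov solutions associated with $\tilde{k}$ at $\tau=0$ and $\tau=\infty$ is the same. Your reduction of the three shapes to the sign pattern of $\mathcal{H}''$ (running average of $g$, unimodality) is essentially the paper's use of Lemma \ref{lemmino}.

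The problem is the last step, which you explicitly leave open. You propose to control the sign of $h(\tau)$ for intermediate $\tau$ by showing that the moving Lyapunov solution $C(\tau)$ decreases in the L\"owner order from $C(0)=b_{inv}$ to $C(\infty)=b_{norm}$; you then observe that differentiating its defining equation produces a right-hand side of no definite sign, and conclude that closing the gap ``has to exploit the specific structure of the Wishart Riccati flow.'' That unproven monotonicity of $C(\tau)$ is precisely where the content of the theorem lives, so as written the proposal is not a proof. The paper avoids the object $C(\tau)$ altogether: it fixes $X_t$ and proves (Lemma \ref{lemma15}) that $\tilde{k}(\tau)$ itself is L\"owner-decreasing, via $\dot{\tilde{k}}(\tau)=-2\bigl(Q^{\top}Q\,\mathcal{R}(\tilde{\psi}(\tau))X_t+X_t\mathcal{R}(\tilde{\psi}(\tau))Q^{\top}Q\bigr)$, and combines this with separate endpoint lemmas (Lemmas \ref{lemma12}, \ref{lemma13}, \ref{lemma14}), obtained by writing $X_t=b_{inv}+C$ or $X_t=b_{norm}+C$ with $C\in S_d^+$ and reading off the sign of $\tilde{k}$ from the stability of $M-2Q^{\top}Q\tilde{\psi}(\tau)$; together these give at most one sign change of $\mathcal{H}''$, from positive to negative, located according to the regime of $X_t$. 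To complete your argument you should either actually prove the L\"owner monotonicity of $C(\tau)$ (which does not follow from your differentiated identity alone) or switch to the weaker statement the paper uses, namely monotonicity of $\tau\mapsto\tilde{k}(\tau)$ for the fixed $X_t$, which is all that is needed to make the zero of $\mathcal{H}''$ unique.
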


\subsubsection{Proof of Theorem \ref{BigThYCS}}
We define the function $\mathcal{H}(\tau):\mathbb{R}_{\geq 0}\rightarrow\mathbb{R}$ by 
\begin{align}
\mathcal{H}(\tau):=Y(\tau,X_t)\tau=\tilde{\phi}(\tau)+Tr\left[\tilde{\psi}(\tau)X_t\right]\label{calH_def}.
\end{align}
Recalling the system of equations satisfied by $\tilde{\phi}$ and $\tilde{\psi}$, given by equations \eqref{ODE1} and \eqref{ODE2}, we can compute the derivatives of this function. For the first derivative we have:
\begin{align}
\mathcal{H}'(\tau)&=Tr\left[\alpha Q^{\top}Q\tilde{\psi}(\tau)\right]+a\nonumber\\
&+Tr\left[\left(\tilde{\psi}(\tau)M+M^{\top}\tilde{\psi}(\tau)-2\tilde{\psi}(\tau)Q^{\top}Q\tilde{\psi}(\tau)\right)X_t\right],
\end{align}
whereas for the second we have:
\begin{align}
\mathcal{H}''(\tau)&=Tr\left[\alpha Q^{\top}Q\mathcal{R}\left(\tilde{\psi}(\tau)\right)\right.\nonumber\\
&+\left(\mathcal{R}\left(\tilde{\psi}(\tau)\right)M+M^{\top}\mathcal{R}\left(\tilde{\psi}(\tau)\right)-2\mathcal{R}\left(\tilde{\psi}(\tau)\right)Q^{\top}Q\tilde{\psi}(\tau)\right.\nonumber\\
&\left.\left.-2\tilde{\psi}(\tau)Q^{\top}Q\mathcal{R}\left(\tilde{\psi}(\tau)\right)\right)X_t\right].
\end{align}
Now, notice the following:
\begin{align}
Tr\left[M^{\top}\mathcal{R}\left(\tilde{\psi}(\tau)\right)X_t\right]&=Tr\left[\mathcal{R}\left(\tilde{\psi}(\tau)\right)X_tM^{\top}\right],\nonumber\\
Tr\left[-2\mathcal{R}\left(\tilde{\psi}(\tau)\right)Q^{\top}Q\tilde{\psi}(\tau)X_t\right]&=Tr\left[-2\tilde{\psi}(\tau)Q^{\top}Q\mathcal{R}\left(\tilde{\psi}(\tau)\right)X_t\right].\nonumber
\end{align}
The second equality being justified since the matrices involved are symmetric.
This means that we may rewrite $\mathcal{H}''(\tau)$ as follows:
\begin{align}
\mathcal{H}''(\tau)&=Tr\left[\mathcal{R}\left(\tilde{\psi}(\tau)\right)\left(\overbrace{\alpha Q^{\top}Q+MX_t+X_tM^{\top}-4Q^{\top}Q\tilde{\psi}(\tau)X_t}^{=:k(\tau)}\right)\right]\nonumber\\
&=Tr\left[\mathcal{R}\left(\tilde{\psi}(\tau)\right)k(\tau)\right].
\end{align}
Finally, we can equivalently work with a symmetrization of the function $k(\tau)$:
\begin{align}
\mathcal{H}''(\tau)&=Tr\left[\mathcal{R}\left(\tilde{\psi}(\tau)\right)k(\tau)\right]\nonumber\\
&=Tr\left[\mathcal{R}\left(\tilde{\psi}(\tau)\right)\frac{k(\tau)+k^{\top}(\tau)}{2}\right]\\
&=Tr\left[\mathcal{R}\left(\tilde{\psi}(\tau)\right)\left(\overbrace{\alpha Q^{\top}Q+\left(M-2Q^{\top}Q\tilde{\psi}(\tau)\right)X_t+X_t\left(M^{\top}-2\tilde{\psi}(\tau)Q^{\top}Q\right)}^{=:\tilde{k}(\tau)}\right)\right]\nonumber\\
&=Tr\left[\mathcal{R}\left(\tilde{\psi}(\tau)\right)\tilde{k}(\tau)\right].\nonumber
\end{align}
To start the derivation of the sufficient conditions for the shapes of the yield curve we look at $\mathcal{H}''(0)$. A sufficient condition for $\mathcal{H}''(0)=0$ is $\tilde{k}(0)=0$, i.e.:
\begin{align}
MX_t+X_tM^{\top}=-\alpha Q^\top Q.
\end{align}
The solution to this equation is given, according to Theorem \ref{mat_eq}, by:
\begin{equation}
b_{inv}=\alpha\int_{0}^{\infty}{e^{Ms}Q^{\top}Q e^{M^\top s}ds}.
\end{equation}
Recalling that the eigenvalues of $M$ are negative (see Assumption \ref{assu_eigen}) we have that:
\begin{align}
 \text{if }X_t\succ b_{inv}\text{ then }\mathcal{H}''(0)<0\label{Hsecond_1},\\
 \text{if }X_t\prec b_{inv}\text{ then }\mathcal{H}''(0)>0\label{Hsecond_2}.
\end{align}
Recall that $b_{inv}$ above is defined as the solution of the Lyapunov equation resulting from $\tilde{k}(\tau)$ when $\tau=0$, by noting that $\tilde{\psi}(0)=0$.
Then consider the equation:
\begin{align}
\alpha Q^{\top}Q+\left(M-2Q^{\top}Q\tilde{\psi}(\tau)\right)X_t+X_t\left(M^{\top}-2\tilde{\psi}(\tau)Q^{\top}Q\right)=0.
\end{align}
By recalling that $\tilde{\psi}(\tau)\nearrow\psi^\prime$ as $\tau\to\infty$, we have exactly a solution at infinity if $X_t$ solves:
\begin{align}
\alpha Q^{\top}Q+\left(M-2Q^{\top}Q\psi^\prime\right)X_t+X_t\left(M^{\top}-2\psi^\prime Q^{\top}Q\right)=0.
\end{align}
The solution to the equation above is:
\begin{align}
b_{norm}=\alpha\int_{0}^{\infty}{e^{\left(M-2Q^{\top}Q\psi^\prime\right)s}Q^{\top}Q e^{\left(M^\top-2\psi^\prime Q^{\top}Q\right) s}ds}.
\end{align}
In the sequel we will prove that this is the crucial level ensuring the presence of a zero for $\mathcal{H}''(\tau)$. The next lemma establishes an order relation between $b_{inv}$ and $b_{norm}$.

\begin{lemma}\label{lemma11}
$b_{inv}\succ b_{norm}$.
\end{lemma}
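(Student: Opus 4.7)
The natural approach is to derive a Lyapunov equation for the difference $D:=b_{inv}-b_{norm}$ and then exploit its integral representation. By construction $b_{inv}$ and $b_{norm}$ satisfy
\begin{equation*}
Mb_{inv}+b_{inv}M^{\top}=-\alpha Q^{\top}Q, \qquad M^{\star}b_{norm}+b_{norm}M^{\star\top}=-\alpha Q^{\top}Q,
\end{equation*}
both well-posed since $M$ and $M^{\star}$ have all eigenvalues with negative real part under Assumption~\ref{assu_eigen}. Writing $M^{\star}=M-2Q^{\top}Q\psi^{\prime}$, expanding the second identity and subtracting the first yields
\begin{equation*}
MD+DM^{\top}=-2\bigl(Q^{\top}Q\psi^{\prime}b_{norm}+b_{norm}\psi^{\prime}Q^{\top}Q\bigr),
\end{equation*}
so that a further application of Theorem~\ref{mat_eq} delivers the integral representation
\begin{equation*}
D=2\int_{0}^{\infty}e^{Ms}\bigl(Q^{\top}Q\psi^{\prime}b_{norm}+b_{norm}\psi^{\prime}Q^{\top}Q\bigr)e^{M^{\top}s}\,ds.
\end{equation*}

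From here the problem reduces to checking the positive (semi)definiteness of the bracket inside the integrand. My preferred route is a homotopy along the very curve that defines $\tilde{\psi}$: let $b(\tau)$ be the Lyapunov solution attached to $M(\tau):=M-2Q^{\top}Q\tilde{\psi}(\tau)$, so that $b(0)=b_{inv}$ and, by Corollary~\ref{cor_infty}, $b(\tau)\to b_{norm}$ as $\tau\to\infty$, each $M(\tau)$ being stable under Assumption~\ref{assu_eigen}. Differentiating $M(\tau)b(\tau)+b(\tau)M(\tau)^{\top}=-\alpha Q^{\top}Q$ in $\tau$, and using $\dot{\tilde{\psi}}(\tau)=\mathcal{R}(\tilde{\psi}(\tau))\succ 0$ from Corollary~\ref{cor_2} (applicable since $B\in S_d^{++}$), one arrives at
\begin{equation*}
M(\tau)\dot{b}(\tau)+\dot{b}(\tau)M(\tau)^{\top}=2\bigl(Q^{\top}Q\dot{\tilde{\psi}}(\tau)b(\tau)+b(\tau)\dot{\tilde{\psi}}(\tau)Q^{\top}Q\bigr),
\end{equation*}
and the plan is then to conclude $\dot{b}(\tau)\preceq 0$; integrating in $\tau$ would then yield the L\"owner ordering $b_{norm}\prec b_{inv}$.

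The hard part is precisely the sign of the symmetrized bracket $PSX+XSP$ with $P=Q^{\top}Q$, $S=\dot{\tilde{\psi}}(\tau)$ or $\psi^{\prime}$, and $X=b(\tau)$ or $b_{norm}$: for three generic symmetric positive definite matrices this expression need not be positive semidefinite, so the argument must genuinely exploit the algebraic coupling between $\psi^{\prime}$ and $b_{norm}$ furnished by the algebraic Riccati equation $\psi^{\prime}M+M^{\top}\psi^{\prime}-2\psi^{\prime}Q^{\top}Q\psi^{\prime}+B=0$ together with the Lyapunov identity for $b_{norm}$. The manipulation I would attempt is to multiply the Lyapunov identity for $b_{norm}$ by $\psi^{\prime}$ on both sides and use the Riccati-derived identity $\psi^{\prime}M^{\star}+M^{\star\top}\psi^{\prime}=-(B+2\psi^{\prime}Q^{\top}Q\psi^{\prime})$ to recast the bracket as
\begin{equation*}
\tfrac{1}{\alpha}\Bigl[b_{norm}\bigl(B+2\psi^{\prime}Q^{\top}Q\psi^{\prime}\bigr)b_{norm}-\bigl(M^{\star}W+WM^{\star\top}\bigr)\Bigr]
\end{equation*}
with $W:=b_{norm}\psi^{\prime}b_{norm}\succeq 0$, and then derive its positivity from a Lyapunov-type inequality for the stable matrix $M^{\star}$. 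Strictness of the order relation $b_{inv}\succ b_{norm}$ finally comes from $B\in S_d^{++}$ and $Q\in GL(d)$, which force the bracket to be strictly positive on a set of positive $s$-measure.
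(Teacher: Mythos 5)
Your reduction is correct and is a genuinely different (and cleaner) route than the paper's: subtracting the two Lyapunov identities gives $MD+DM^{\top}=-2N$ for $D=b_{inv}-b_{norm}$ with $N:=Q^{\top}Q\psi^{\prime}b_{norm}+b_{norm}\psi^{\prime}Q^{\top}Q$, so by Theorem~\ref{mat_eq} the lemma is equivalent to the positive definiteness of $N$. You also correctly identify that this is the entire difficulty, since a symmetrized product $ASX+XSA$ of three positive definite matrices need not be positive semidefinite. The problem is that you never close this step, and neither of the two devices you sketch can close it. The homotopy merely reproduces the same obstacle: $\dot{b}(\tau)\preceq 0$ is equivalent to $Q^{\top}Q\dot{\tilde{\psi}}(\tau)b(\tau)+b(\tau)\dot{\tilde{\psi}}(\tau)Q^{\top}Q\succeq 0$, which is the same shape of claim. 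Worse, the final ``recasting'' is circular: using only the Lyapunov identity $M^{\star}b_{norm}+b_{norm}M^{\star\top}=-\alpha Q^{\top}Q$ and the Riccati-derived identity $\psi^{\prime}M^{\star}+M^{\star\top}\psi^{\prime}=-(B+2\psi^{\prime}Q^{\top}Q\psi^{\prime})$ one computes
\begin{equation*}
M^{\star}W+WM^{\star\top}=-\alpha N+b_{norm}\bigl(B+2\psi^{\prime}Q^{\top}Q\psi^{\prime}\bigr)b_{norm},\qquad W=b_{norm}\psi^{\prime}b_{norm},
\end{equation*}
so your bracketed expression is identically equal to $\alpha N$: you have rewritten the claim, not reduced it. Moreover, the ``Lyapunov-type inequality'' you would need, namely $-(M^{\star}W+WM^{\star\top})\succeq 0$ for a stable $M^{\star}$ and $W\succeq 0$, is false in general: take $M^{\star}=\left(\begin{smallmatrix}-1&10\\0&-1\end{smallmatrix}\right)$ and $W=\left(\begin{smallmatrix}0&0\\0&1\end{smallmatrix}\right)$, which give the indefinite matrix $M^{\star}W+WM^{\star\top}=\left(\begin{smallmatrix}0&10\\10&-2\end{smallmatrix}\right)$. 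So the proof is not complete.

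For context, the paper's own argument proceeds differently --- it compares $\tilde{k}(\tau)$ evaluated at $X_t=b_{inv}$ and $X_t=b_{norm}$ and then inverts the Lyapunov operator associated with the stable matrix $M-2Q^{\top}Q\tilde{\psi}(\tau)$ --- but it rests on exactly the same unproved assertion, namely that $Q^{\top}Q(\psi^{\prime}-\tilde{\psi}(\tau))b_{norm}+b_{norm}(\psi^{\prime}-\tilde{\psi}(\tau))Q^{\top}Q\succeq 0$ ``since $\psi^{\prime}\succ\tilde{\psi}(\tau)$''. You have therefore put your finger on the genuinely delicate point of the lemma, but any complete proof must exploit the algebraic coupling between $\psi^{\prime}$, $b_{norm}$ and $Q^{\top}Q$ in a non-circular way, which your manipulation does not yet do.
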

\begin{proof}
Consider the function $\tilde{k}(\tau)$. It is easy to see that:
\begin{itemize}
\item if $X_t=b_{inv}$, then $\tilde{k}(\tau)=-2Q^\top Q\tilde{\psi}(\tau)b_{inv}-2b_{inv}\tilde{\psi}(\tau)Q^\top Q $, 
\item if $X_t=b_{norm}$, then
\begin{align}
\tilde{k}(\tau)=0-2Q^{\top}Q\left(\tilde{\psi}(\tau)-\psi^\prime\right)b_{norm}-b_{norm}\left(\tilde{\psi}(\tau)-\psi^\prime\right)Q^{\top}Q2\in S_d^+,
\end{align}
since $\psi^\prime\succ\tilde{\psi}(\tau)\quad\forall\tau$.
\end{itemize}
This shows that $\tilde{k}(\tau)|_{X_t=b_{norm}}\succ\tilde{k}(\tau)|_{X_t=b_{inv}}$. Notice now that, by expliciting the dependence of $\tilde{k}(\tau)$ on $X_t$ we have
\begin{align}
\tilde{k}(\tau,X_t)=\alpha Q^{\top}Q+\left(M-2Q^{\top}Q\tilde{\psi}(\tau)\right)X_t+X_t\left(M^{\top}-2\tilde{\psi}(\tau)Q^{\top}Q\right),
\end{align}
where $M-2Q^{\top}Q\tilde{\psi}(\tau)$ has negative eigenvalues by Assumption \ref{assu_eigen}. It thus follows that $b_{inv}\succ b_{norm}$. 
\end{proof}
Let us now prove the following:
\begin{lemma}\label{lemma12}
If $X_t\succ b_{inv}$, then $\mathcal{H}''(\tau)<0$, $\forall\tau\in\left(0,\infty\right)$.
\end{lemma}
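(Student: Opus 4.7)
Plan. The lemma extends the strict inequality $\mathcal{H}''(0) < 0$ proved in \eqref{Hsecond_1} under the hypothesis $X_t \succ b_{inv}$ from $\tau=0$ to the whole half-line $\tau \in (0, \infty)$. I would start from the symmetrized representation
\[
\mathcal{H}''(\tau) = Tr\!\left[\mathcal{R}(\tilde{\psi}(\tau))\,\tilde{k}(\tau)\right]
\]
and use Corollary \ref{cor_2} to assert $\mathcal{R}(\tilde{\psi}(\tau)) \in S_d^{++}$ for every $\tau$, since $B \in S_d^{++}$. The task therefore reduces to controlling the sign of the symmetric matrix
\[
\tilde{k}(\tau) = \alpha Q^\top Q + V(\tau) X_t + X_t V(\tau)^\top,\qquad V(\tau) := M - 2 Q^\top Q \tilde{\psi}(\tau).
\]

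The key reduction is to replace $b_{inv}$, which is the Lyapunov solution at the initial drift $M = V(0)$, by the frozen-time Lyapunov solution
\[
Y(\tau) := \alpha \int_0^\infty e^{V(\tau) s}\,Q^\top Q\, e^{V(\tau)^\top s}\,ds \in S_d^{++},
\]
well defined because Assumption \ref{assu_eigen} guarantees that $V(\tau)$ is Hurwitz for every $\tau$. By construction $V(\tau) Y(\tau) + Y(\tau) V(\tau)^\top = -\alpha Q^\top Q$, hence
\[
\tilde{k}(\tau) = V(\tau)\bigl(X_t - Y(\tau)\bigr) + \bigl(X_t - Y(\tau)\bigr)V(\tau)^\top,
\]
with $Y(0) = b_{inv}$ and $Y(\infty) = b_{norm}$. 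Since $\tilde{\psi}(\tau)$ is monotonically increasing in the Loewner order, a comparison along the same lines as Lemma \ref{lemma11} yields $Y(\tau) \preceq Y(0) = b_{inv}$, so that $\Delta(\tau) := X_t - Y(\tau) \succ 0$ for every $\tau \geq 0$.

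To close the argument I would substitute the congruent representation $\mathcal{R}(\tilde{\psi}(\tau)) = \Phi(\tau,0)\,B\,\Phi(\tau,0)^\top$ from \eqref{psi_tilde_dot} into the trace formula, cycle, and use $\dot{\Phi}(\tau,0) = V(\tau)^\top \Phi(\tau,0)$ to recognize the inner expression as the $\tau$-derivative of a congruent transform of $\Delta(\tau)$. Pairing the resulting integrand against $B \succ 0$ produces a strictly positive quantity, which together with the overall sign in the Lyapunov form gives $\mathcal{H}''(\tau) < 0$ throughout $(0,\infty)$.

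The main obstacle lies precisely in this final step. In the scalar framework of \cite{article_KRS}, $V < 0$ and $\Delta > 0$ immediately give $V\Delta + \Delta V < 0$. On $S_d^{++}$ this implication fails: Hurwitz $V(\tau)$ together with positive definite $\Delta(\tau)$ does not by itself entail $V(\tau)\Delta(\tau) + \Delta(\tau)V(\tau)^\top \prec 0$, so the strict negativity of $\mathcal{H}''(\tau)$ emerges only after pairing $\tilde{k}(\tau)$ against the specific weight $\mathcal{R}(\tilde{\psi}(\tau)) = \Phi(\tau,0) B \Phi(\tau,0)^\top$. Essentially all of the technical content of the proof is concentrated in translating the structure of $\Phi(\tau,0)$ and the frozen Lyapunov identity into a manifestly negative trace.
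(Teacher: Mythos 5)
Your proposal follows the same overall strategy as the paper's proof: reduce the sign of $\mathcal{H}''(\tau)=Tr\left[\mathcal{R}(\tilde{\psi}(\tau))\tilde{k}(\tau)\right]$ to the positive definiteness of $\mathcal{R}(\tilde{\psi}(\tau))$ (Corollary \ref{cor_2}) together with a sign statement for $\tilde{k}(\tau)$, obtained by absorbing $\alpha Q^{\top}Q$ through a Lyapunov identity. The paper writes $X_t=b_{inv}+C$ with $C\in S_d^+$ and uses the identity at $\tau=0$, arriving at
\begin{align*}
\tilde{k}(\tau)=-2Q^{\top}Q\tilde{\psi}(\tau)b_{inv}-2b_{inv}\tilde{\psi}(\tau)Q^{\top}Q+V(\tau)C+CV(\tau)^{\top},
\end{align*}
and then asserts that this matrix has negative eigenvalues by appeal to Assumption \ref{assu_eigen}; you instead use the frozen-time solution $Y(\tau)$ to write $\tilde{k}(\tau)=V(\tau)\Delta(\tau)+\Delta(\tau)V(\tau)^{\top}$ with $\Delta(\tau)=X_t-Y(\tau)$. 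The two decompositions are cosmetically different but face the identical difficulty, which you state explicitly and the paper does not: for $V$ Hurwitz and $\Delta\succ0$, the matrix $V\Delta+\Delta V^{\top}$ need not be negative definite --- the forward Lyapunov map does not preserve the cone, only its inverse does.

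The genuine gap is that you never close this step. Your final paragraph says you ``would'' substitute $\mathcal{R}(\tilde{\psi}(\tau))=\mathbf{\Phi}(\tau,0)B\mathbf{\Phi}^{\top}(\tau,0)$ from \eqref{psi_tilde_dot}, cycle the trace, and recognize a derivative of a congruent transform of $\Delta(\tau)$, but no computation is given, and you concede yourself that essentially all the technical content lies there; as written the argument establishes nothing beyond what the paper already asserts without proof. Two further claims are also left unproved along the way: (i) the monotonicity $Y(\tau)\preceq Y(0)=b_{inv}$ --- differentiating the frozen Lyapunov equation gives $V\dot{Y}+\dot{Y}V^{\top}=2\left(Q^{\top}Q\dot{\tilde{\psi}}Y+Y\dot{\tilde{\psi}}Q^{\top}Q\right)$, and the right-hand side is a symmetrized product of three positive semidefinite matrices, which is not automatically positive semidefinite, so $\tilde{\psi}$ increasing does not immediately yield $Y$ decreasing; and (ii) the final trace inequality itself, which is exactly the conclusion sought. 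In short: correct skeleton, correct and valuable diagnosis of the weak point, but the proof is incomplete precisely at the step where the paper's own argument is also weakest, so the lemma is not actually established by your proposal.
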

\begin{proof}
Let $X_t=b_{inv}+C$, $C\in S_d^+$, then:
\begin{align}
\tilde{k}(\tau)&=-2Q^\top Q\tilde{\psi}(\tau)b_{inv}-b_{inv}\tilde{\psi}(\tau)Q^\top Q2\nonumber\\
&+\left(M-2Q^\top Q\tilde{\psi}(\tau)\right)C+C\left(M^\top-\tilde{\psi}(\tau)Q^\top Q2\right)
\end{align}
This means that $\tilde{k}$ is symmetric with negative eigenvalues (see Assumption \ref{assu_eigen}), $\forall \tau \in \left[0,\infty\right)$, but then, being $\mathcal{R}(\tilde{\psi}(\tau))\in S_d^{++}$, it follows that $\mathcal{H}''(\tau)<0$.
\end{proof}

Next, we prove an existence result for $\tau^\star\in(0,\infty)$ s.t. $\mathcal{H}''(\tau^\star)=0$. 
\begin{lemma}\label{lemma13}
If $b_{inv}\succ X_t \succ b_{norm}$, then $\exists\tau^\star\in\left(0,\infty\right)$ s.t. $\mathcal{H}''(\tau^\star)=0$.
\end{lemma}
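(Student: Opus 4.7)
The plan is to show $\mathcal{H}''(0)>0$ and $\mathcal{H}''(\tau)<0$ for all sufficiently large $\tau$, and conclude by the intermediate value theorem applied to the continuous function $\mathcal{H}''$ on $[0,\infty)$. Recall that $\mathcal{H}''(\tau)=Tr[\mathcal{R}(\tilde{\psi}(\tau))\tilde{k}(\tau)]$, that $\mathcal{R}(\tilde{\psi}(\tau))\in S_d^{++}$ for every $\tau\geq 0$ by Corollary \ref{cor_2}, and that the trace of the product of an element of $S_d^{++}$ with a symmetric matrix in $S_d^{++}$ (respectively $-S_d^{++}$) is strictly positive (respectively negative), by a congruence argument identical to the one in the proof of Corollary \ref{cor_2}. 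So the analysis reduces to tracking the definiteness of the symmetric matrix $\tilde{k}(\tau)$ at the two endpoints of $[0,\infty)$.

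For $\tau=0$: since $\tilde{\psi}(0)=0$, the expression for $\tilde{k}$ collapses to $\tilde{k}(0)=\alpha Q^\top Q+MX_t+X_tM^\top$. I would set $Y:=b_{inv}-X_t\in S_d^{++}$ (from the hypothesis $X_t\prec b_{inv}$) and subtract the Lyapunov equation that defines $b_{inv}$; the constant term $\alpha Q^\top Q$ cancels and I obtain $\tilde{k}(0)=-(MY+YM^\top)$. Since $\Re\lambda(M)<0$ by Assumption \ref{assu_eigen}, the corollary to Theorem \ref{mat_eq} applied with $A=M^\top$ gives $MY+YM^\top\prec 0$, whence $\tilde{k}(0)\in S_d^{++}$ and therefore $\mathcal{H}''(0)>0$.

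For $\tau\to\infty$: by Corollary \ref{cor_infty}, $\tilde{\psi}(\tau)\to\psi^\prime$, so $\tilde{k}(\tau)\to\tilde{k}_\infty:=\alpha Q^\top Q+M^\star X_t+X_tM^{\star\top}$. Setting $Z:=X_t-b_{norm}\in S_d^{++}$ (from $X_t\succ b_{norm}$) and subtracting the Lyapunov equation that defines $b_{norm}$ collapses this to $\tilde{k}_\infty=M^\star Z+ZM^{\star\top}$. Assumption \ref{assu_eigen} provides $\Re\lambda(M^\star)<0$, and the same Lyapunov corollary now forces $\tilde{k}_\infty\prec 0$. Continuity of $\tilde{k}$ and openness of the cone $-S_d^{++}$ then deliver some $T_0>0$ with $\tilde{k}(\tau)\prec 0$ on $[T_0,\infty)$; combined with $\mathcal{R}(\tilde{\psi}(\tau))\in S_d^{++}$, this yields $\mathcal{H}''(\tau)<0$ on $[T_0,\infty)$. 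The intermediate value theorem, applied to $\mathcal{H}''$ on $[0,T_0]$, then produces the desired $\tau^\star\in(0,T_0)$.

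The main obstacle I foresee is the correct orientation of the Lyapunov-type implication: at both endpoints one must convert strict definiteness of the gap $X_t-b_{\bullet}$ into strict definiteness of $\tilde{k}$ via the Hurwitz property of $M$ or $M^\star$, with the appropriate sign on each side, so one must be consistent about which matrix plays the role of $A$ in Theorem \ref{mat_eq}. Apart from this bookkeeping and a short argument that positive-definiteness of $\tilde{k}_\infty$ passes to a neighbourhood of infinity by continuity, the proof is a routine combination of Corollary \ref{cor_infty}, Corollary \ref{cor_2}, and the intermediate value theorem.
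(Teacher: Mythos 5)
Your proof is correct and follows essentially the same route as the paper: you establish $\mathcal{H}''(0)>0$ from $X_t\prec b_{inv}$ via the Lyapunov computation with $Y=b_{inv}-X_t$ (the paper simply cites its inequality \eqref{Hsecond_2}, which rests on the identical computation), you show $\tilde{k}(\tau)\to M^\star(X_t-b_{norm})+(X_t-b_{norm})M^{\star\top}\prec 0$ exactly as the paper does with $C=X_t-b_{norm}$, and you conclude by continuity and the intermediate value theorem. The caveat you raise yourself --- that passing from $Y\succ 0$ and $M$ (or $M^\star$) Hurwitz to $MY+YM^\top\prec 0$ requires care about the orientation of the Lyapunov implication --- is inherited directly from the paper's own argument (its Corollary to Theorem \ref{mat_eq} and the sign claims \eqref{Hsecond_1}--\eqref{Hsecond_2}), so your attempt stands on exactly the same footing as the original proof.
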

\begin{proof}
Let $X_t=b_{norm}+C$, $C\in S_d^+$ s.t. $X_t\prec b_{inv}$. Then:
\begin{align}
\tilde{k}(\tau)&=\alpha Q^{\top}Q+\left(M-2Q^{\top}Q\tilde{\psi}(\tau)\right)\left(b_{norm}+C\right)\nonumber\\
&+\left(b_{norm}+C\right)\left(M^{\top}-2\tilde{\psi}(\tau)Q^{\top}Q\right).
\end{align}
But this allows us to claim that:
\begin{align}
\lim_{\tau\to\infty}\tilde{k}(\tau)=\left(M-2Q^{\top}Q\psi^\prime\right)C+C\left(M^{\top}-2\psi^\prime Q^{\top}Q\right).
\end{align}
This matrix is symmetric with negative eigenvalues. Recall that since $b_{inv}\succ X_t \succ b_{norm}$, we have that $\tilde{k}(0)\in S_d^{++}$. Then we observe that $\tilde{k}(\tau)$ is a continuous function of $\tau$, meaning that there must exist a $\tau^\prime$ s.t., for $\tau>\tau^\prime$, the eigenvalues of $\tilde{k}(\tau)$ are negative. We can now look at $\mathcal{H}''(\tau)$. We recall that:
\begin{align}
\mathcal{H}''(\tau)=Tr\left[\mathcal{R}\left(\tilde{\psi}(\tau)\right)\tilde{k}(\tau)\right].\nonumber
\end{align}
From Corollary \ref{cor_2} we have $\mathcal{R}\left(\tilde{\psi}(\tau)\right)\in S_d^{++}$. We notice also that $\mathcal{H}''$ is a continuous function of $\tau$. Furthermore, we have $\mathcal{H}''(0)>0$ because $b_{inv}\succ X_t$, see \eqref{Hsecond_2}. From the previous discussion regarding  $\tilde{k}(\tau)$ we have that, for  $\tau>\tau^\prime$, the second derivative is of the form:
\begin{align}
\mathcal{H}''(\tau)=Tr\left[\mathcal{R}\left(\tilde{\psi}(\tau)\right)\left(-\mathcal{K}\right)\right],
\end{align}
where $\mathcal{K}$ is a symmetric matrix with negative eigenvalues. This means that the second derivative will be negative. By recalling the positiveness of the starting value and the continuity property w.r.t. $\tau$, thanks to the mean value theorem, we can argue that there must exist a $\tau^\star$ s.t. $\mathcal{H}''(\tau)=0$ as we wanted.
\end{proof}
Along the same lines we can prove the following:
\begin{lemma}\label{lemma14}
If $X_t \prec b_{norm}$, then there exists no $\tau^\star \in \left(0,\infty\right)$ s.t. $\mathcal{H}''(\tau^\star)=0$.
\end{lemma}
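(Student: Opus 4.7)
The plan is to mirror the structure of Lemma \ref{lemma13} but to argue sign \emph{preservation} instead of a sign change: under $X_t \prec b_{norm}$, I expect $\mathcal{H}''$ to stay strictly positive on all of $[0,\infty)$, and therefore never to vanish. To set this up, I would write $X_t = b_{norm} - C$ with $C \in S_d^{++}$ (which is exactly what $X_t \prec b_{norm}$ encodes), insert this into the expression for $\tilde{k}(\tau) = \alpha Q^\top Q + V(\tau)X_t + X_tV^\top(\tau)$ with $V(\tau) := M - 2Q^\top Q\tilde{\psi}(\tau)$, and use the defining algebraic identity for $b_{norm}$, namely
\[
\alpha Q^\top Q + (M-2Q^\top Q\psi^\prime)b_{norm} + b_{norm}(M^\top - 2\psi^\prime Q^\top Q) = 0,
\]
to arrive at the splitting
\[
\tilde{k}(\tau) = \tilde{k}(\tau)\bigl|_{X_t=b_{norm}} - \bigl[V(\tau) C + C V^\top(\tau)\bigr].
\]

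Next I would argue that each summand is in $S_d^+$, in fact the second is strictly positive definite. The first is exactly the quantity analysed in Lemma \ref{lemma11} and lies in $S_d^+$ there. For the second, Assumption \ref{assu_eigen} ensures that $V(\tau)$ has eigenvalues with strictly negative real parts for every $\tau$, so the Lyapunov branch of Theorem \ref{mat_eq}, combined with $C \in S_d^{++}$, yields $-\bigl[V(\tau)C + CV^\top(\tau)\bigr] \in S_d^{++}$. Hence $\tilde{k}(\tau) \in S_d^{++}$ for every $\tau \in [0,\infty)$. Since $B \in S_d^{++}$, Corollary \ref{cor_2} gives $\mathcal{R}(\tilde{\psi}(\tau)) \in S_d^{++}$. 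By the elementary fact that the trace of a product of a strictly positive definite matrix with a positive semidefinite nonzero matrix is strictly positive, one obtains $\mathcal{H}''(\tau) = Tr[\mathcal{R}(\tilde{\psi}(\tau))\tilde{k}(\tau)] > 0$ for every $\tau \in [0,\infty)$, which directly excludes the existence of $\tau^\star \in (0,\infty)$ with $\mathcal{H}''(\tau^\star) = 0$.

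The main obstacle is the converse Lyapunov step: demonstrating that $-\bigl[V(\tau)C + CV^\top(\tau)\bigr]$ actually sits in $S_d^{++}$ for stable $V(\tau)$ and $C \in S_d^{++}$. The standard Lyapunov theorem runs in the opposite direction (positive right-hand side implies positive solution), so the argument has to be organised via the integral representation
\[
C = \int_0^\infty e^{V(\tau)s}\bigl[-V(\tau)C - CV^\top(\tau)\bigr] e^{V^\top(\tau)s}\,ds,
\]
together with bijectivity of the Lyapunov operator on the symmetric matrices and invariance of the PSD cone, paying particular attention to the strictness part of the conclusion. Once this point is settled, the remainder of the proof is a routine adaptation of the calculations in Lemmas \ref{lemma11}--\ref{lemma13}.
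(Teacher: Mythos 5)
Your overall strategy is exactly the one the paper intends: Lemma \ref{lemma14} is stated with only the remark that it can be proved ``along the same lines'' as Lemma \ref{lemma13}, and your plan --- split $\tilde{k}(\tau)$ affinely around $X_t=b_{norm}$ and show that $\tilde{k}(\tau)$, hence $\mathcal{H}''(\tau)=Tr\left[\mathcal{R}\left(\tilde{\psi}(\tau)\right)\tilde{k}(\tau)\right]$, keeps a fixed sign on all of $[0,\infty)$ --- is the natural mirror image of that proof. The decomposition $\tilde{k}(\tau)=\tilde{k}(\tau)|_{X_t=b_{norm}}-\left[V(\tau)C+CV^{\top}(\tau)\right]$ is correct, and citing Lemma \ref{lemma11} for the first summand is legitimate within the paper's framework.

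The problem is the step you yourself single out as the main obstacle: it is not merely delicate, it is false as stated. Stability of $V(\tau)$ together with $C\in S_d^{++}$ does \emph{not} imply $-\left[V(\tau)C+CV^{\top}(\tau)\right]\in S_d^{++}$, nor even membership in $S_d^{+}$. Lyapunov's theorem asserts the existence of \emph{some} $C\succ0$ with $VC+CV^{\top}\prec0$, not that every $C\succ0$ works. For instance, with
\[
V=\left(\begin{array}{cc}-1&0\\10&-1\end{array}\right),\qquad C=\left(\begin{array}{cc}1&0\\0&\varepsilon\end{array}\right),
\]
the matrix $V$ has spectrum $\{-1\}$, yet $VC+CV^{\top}=\left(\begin{array}{cc}-2&10\\10&-2\varepsilon\end{array}\right)$ has negative determinant for small $\varepsilon>0$ and is therefore indefinite. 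Your proposed repair does not close the gap: the integral identity $C=\int_0^{\infty}e^{Vs}\left(-VC-CV^{\top}\right)e^{V^{\top}s}\,ds$ is true, and the \emph{inverse} Lyapunov operator $D\mapsto\int_0^{\infty}e^{Vs}De^{V^{\top}s}\,ds$ does preserve the cone $S_d^{+}$, but what you need is cone preservation by the Lyapunov operator itself, i.e. the implication in the opposite direction, and that is precisely what fails. (The same unproven cone-preservation claim underlies the arguments given for Lemmas \ref{lemma12}, \ref{lemma13} and \ref{lemma15}, so your proof is ``along the same lines'' in every respect, including this one; but since you explicitly isolate the step, you must either exhibit additional structure in $V(\tau)=M-2Q^{\top}Q\tilde{\psi}(\tau)$ and $C$ that makes it valid here, or accept that the conclusion requires a hypothesis stronger than $X_t\prec b_{norm}$.)
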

We finally prove a a result allowing us to conclude that the zero of $\mathcal{H}''(\tau)$ is unique.
\begin{lemma}\label{lemma15}
$\tilde{k}(\tau)$ is monotonically decreasing i.e., for $\tau_2>\tau_1$, we have $\tilde{k}(\tau_2)-\tilde{k}(\tau_1)\in S_d^-$.
\end{lemma}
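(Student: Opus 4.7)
My plan is to establish the monotonicity by a derivative argument: I will show that $\dot{\tilde{k}}(\tau) \in S_d^-$ for every $\tau \geq 0$, and then recover the asserted comparison by integration, since $\tilde{k}(\tau_2) - \tilde{k}(\tau_1) = \int_{\tau_1}^{\tau_2} \dot{\tilde{k}}(s)\,ds$ lies in $S_d^-$ as soon as its integrand does.

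Differentiating
\[
\tilde{k}(\tau) = \alpha Q^{\top}Q + \bigl(M - 2Q^{\top}Q\tilde{\psi}(\tau)\bigr)X_t + X_t\bigl(M^{\top} - 2\tilde{\psi}(\tau)Q^{\top}Q\bigr),
\]
the constant and the $M$-pieces drop out and I obtain
\[
\dot{\tilde{k}}(\tau) = -2\,Q^{\top}Q\,\dot{\tilde{\psi}}(\tau)\,X_t \,-\, 2\,X_t\,\dot{\tilde{\psi}}(\tau)\,Q^{\top}Q.
\]
By \eqref{psi_tilde_dot}, I have $\dot{\tilde{\psi}}(\tau) = \mathbf{\Phi}(\tau,0)\,B\,\mathbf{\Phi}^{\top}(\tau,0) \in S_d^{++}$ since $B \in S_d^{++}$, and this expression is manifestly symmetric. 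So the claim reduces to showing that the symmetric matrix $Q^{\top}Q\,\dot{\tilde{\psi}}(\tau)\,X_t + X_t\,\dot{\tilde{\psi}}(\tau)\,Q^{\top}Q$ is positive semidefinite.

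The hard part is precisely this last step. A sum of the form $PZX + XZP$ with $P,Z,X \in S_d^+$ is not automatically positive semidefinite from the factor signs alone, so a naive appeal to the positivity of the three factors will not close the argument; I need to exploit the specific Lyapunov structure carried by $\tilde{k}(\tau)$. The route I would try is to use Theorem \ref{mat_eq} to represent $X_t$ as the Lyapunov integral associated to $V(\tau) := M - 2Q^{\top}Q\tilde{\psi}(\tau)$, which is stable by Assumption \ref{assu_eigen}, and to substitute this integral representation into the derivative expression. One can then hope to recognize the result as a congruent transformation of a manifestly negative semidefinite kernel, in the spirit used in the proof of Corollary \ref{cor_2}, so that Sylvester's law of inertia yields $\dot{\tilde{k}}(\tau) \preceq 0$. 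A backup route is to differentiate the Lyapunov identity $\tilde{k}(\tau) - \alpha Q^{\top}Q = V(\tau)X_t + X_t V^{\top}(\tau)$ and compare $\tilde{k}(\tau)$ with $\tilde{k}(0)$ directly through the monotonicity of the Lyapunov operator $W \mapsto \int_0^{\infty} e^{V(\tau)s} W e^{V^{\top}(\tau)s}\,ds$, leveraging that $\tilde{\psi}(\tau) \nearrow \psi^\prime$ makes $V(\tau)$ move monotonically through the stable half-space. Once the sign of $\dot{\tilde{k}}(\tau)$ is pinned down, integration on $[\tau_1,\tau_2]$ closes the proof.
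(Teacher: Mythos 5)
Your skeleton is identical to the paper's: differentiate $\tilde{k}$, observe that the constant and $M$-terms vanish so that $\dot{\tilde{k}}(\tau)=-2Q^{\top}Q\,\mathcal{R}(\tilde{\psi}(\tau))X_t-2X_t\,\mathcal{R}(\tilde{\psi}(\tau))Q^{\top}Q$, and integrate over $[\tau_1,\tau_2]$. The difference is that the paper then simply declares the resulting matrix to be ``symmetric with negative eigenvalues'' and stops, whereas you correctly flag that this is the entire content of the lemma and that it does not follow from the signs of the factors: for $P,Z,X\in S_d^{++}$ the matrix $PZX+XZP$ need not be positive semidefinite. Indeed, already with $Z=I_d$, $P=\mathrm{diag}(1,\varepsilon)$ and $X=\left(\begin{array}{cc}1+\delta&1\\1&1+\delta\end{array}\right)$ one gets $\det(PX+XP)=4\varepsilon(1+\delta)^2-(1+\varepsilon)^2<0$ for small $\varepsilon$. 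So your diagnosis of where the difficulty lies is sound, and it applies verbatim to the published argument.

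As a proof, however, your proposal does not close the gap: both routes you describe are left as hopes rather than carried out, and the first one faces a concrete obstacle --- $X_t$ is a free initial datum in $S_d^{++}$, not the solution of the Lyapunov equation associated with $V(\tau)=M-2Q^{\top}Q\tilde{\psi}(\tau)$, so an integral representation of $X_t$ as $\int_0^\infty e^{V(\tau)s}We^{V^{\top}(\tau)s}\,ds$ with a sign-definite $W$ is only available in the special regimes $X_t\succ b_{inv}$ or $X_t\prec b_{norm}$, not for general $X_t$. The counterexample above suggests that pointwise negative semidefiniteness of $\dot{\tilde{k}}(\tau)$ can genuinely fail for admissible parameters, so a correct argument would either need additional hypotheses relating $Q^{\top}Q$, $B$ and $X_t$, or should bypass definiteness of $\tilde{k}$ altogether and work directly with the scalar quantity $\mathcal{H}''(\tau)=Tr\left[\mathcal{R}(\tilde{\psi}(\tau))\tilde{k}(\tau)\right]$, whose sign change is what is actually used downstream. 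In short: same approach as the paper, honest about the missing step, but the missing step is not supplied --- and it is not supplied in the paper either.
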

\begin{proof}
Differentiate $\tilde{k}(\tau)$, so as to obtain:
\begin{align}
\dot{\tilde{k}}(\tau)=-2Q^{\top}Q\mathcal{R}\left(\tilde{\psi}(\tau)\right) X_t-2X_t\mathcal{R}\left(\tilde{\psi}(\tau)\right)Q^{\top}Q.
\end{align}
Then we can write:
\begin{align}
\tilde{k}(\tau_2)-\tilde{k}(\tau_1)=-2\int_{\tau_1}^{\tau_2}{\left(Q^{\top}Q\mathcal{R}\left(\tilde{\psi}(s)\right) X_t+X_t\mathcal{R}\left(\tilde{\psi}(s)\right)Q^{\top}Q \right)ds}.
\end{align}
Since the RHS is a symmetric matrix with negative eigenvalues, we have the claim.
\end{proof}
Now, since $\tilde{k}(\tau)$ is monotonically decreasing, we obtain that if there exists a value for $\tau$ s.t. $\tilde{k}(\tau)=0$, then this point in time must be unique.

By relying on Lemmas \ref{lemma11}, \ref{lemma12}, \ref{lemma13}, \ref{lemma14} and \ref{lemma15} and the condition on the sign of $\mathcal{H}''(0)$ in \eqref{Hsecond_2} and \eqref{Hsecond_1}, we can argue the following:
\begin{itemize}
\item if $X_t\prec b_{norm}$, then $\mathcal{H}$ is strictly convex on $\left(0,\infty\right)$.
\item if $b_{norm}\prec X_t\prec b_{inv}$, then $\mathcal{H}$ is strictly convex on $\left(0, \tau^\star\right)$ and strictly concave on $\left(\tau^\star,\infty\right)$.
\item if $X_t\succ b_{inv}$, then $\mathcal{H}$ is strictly concave on $\left(0,\infty\right)$.
\end{itemize}

We use these findings on the convexity of $\mathcal{H}$ to determine our conclusions on the convexity of the yield curve. We consider the equation
\begin{align}
\mathcal{H}(\tau)=c\tau, \qquad\tau\in\left[0,\infty\right)\label{Hctau2},
\end{align}
for some fixed $c\in\mathbb{R}$. Since $\mathcal{H}(0)=0$ (see \eqref{calH_def}), this equation has at least one solution, i.e. $\tau_0=0$. 

Now if $X_t\succ b_{inv}$, then $\mathcal{H}$ is strictly concave on $\left[0,\infty\right)$, and according to Lemma \ref{lemmino},  equation \eqref{Hctau2} has at most one additional solution, $\tau_1$. When the solution exists, $\mathcal{H}(\tau)$ crosses $c\tau$ from above at $\tau_1$.

If $X_t\prec b_{norm}$, then $\mathcal{H}(\tau)$ is strictly convex on $\left[0,\infty\right)$ and again has at most one additional solution $\tau_2$. If the solution exists, $\mathcal{H}(\tau)$ crosses $c\tau$ from below at $\tau_2$. 

In the final case, i.e. if $b_{norm}\prec X_t\prec b_{inv}$, there exists a $\tau^{\star}$, the zero of $\mathcal{H}''(\tau)$, such that $\mathcal{H}(\tau)$ is strictly convex on $\left(0,\tau^{\star}\right)$ and strictly concave on $\left(\tau^{\star},\infty\right)$. This implies that there can exist at most two additional solutions $\tau_1,\tau_2$ to \eqref{Hctau2}, with $\tau_1<\tau^{\star}<\tau_2$, such that $c\tau$ is crossed from below at $\tau_1$ and from above at $\tau_2$. By definition, every solution to \eqref{Hctau2}, $\tau_0=0$ excluded, is also a solution to:
\begin{align}
Y\left(\tau,X_t\right)=c,\qquad\tau\in\left(0,\infty\right),\label{Yc2}
\end{align}
with $X_t$ fixed. The properties of crossing from above/below are preserved since $\tau$ is positive. This means that:
\begin{itemize}
\item if $X_t\succ b_{inv}$, then \eqref{Yc2} has at most a single solution, i.e. every horizontal line is crossed by the yield curve in at most a single point. If it is crossed, it is crossed from above, hence we conclude that $Y\left(\tau,X_t\right)$ is a strictly decreasing function of $\tau$, meaning that the yield curve is inverse.
\item $X_t\prec b_{norm}$, then again \eqref{Yc2} has at most a single solution. If the solution is crossed, it is crossed from below, hence we conclude that $Y\left(\tau,X_t\right)$ is a strictly increasing function of $\tau$, meaning that the yield curve is normal.
\item In the last case, i.e. if $b_{norm}\prec X_t\prec b_{inv}$, then we have at most two additional solutions. If they are crossed, the first is crossed from below and the second from above. This allows us to conclude that the yield curve is humped.
\end{itemize}
In Figure \eqref{fig:ycshapes} we plot a visualization of the results of the theorem.
\newpage
\begin{figure}[H]
\centering
\subfloat{\label{fig:normal2}\includegraphics[scale=0.24]{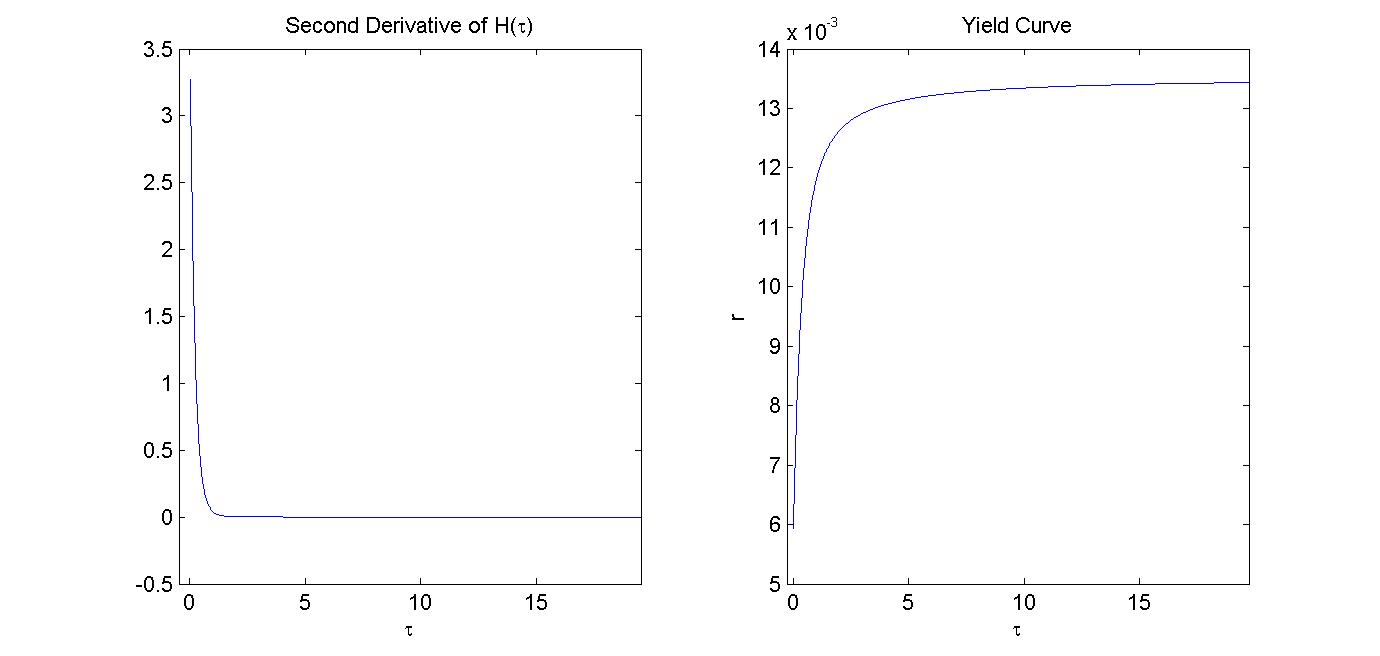}}\\
\subfloat{\label{fig:hump2}\includegraphics[scale=0.22]{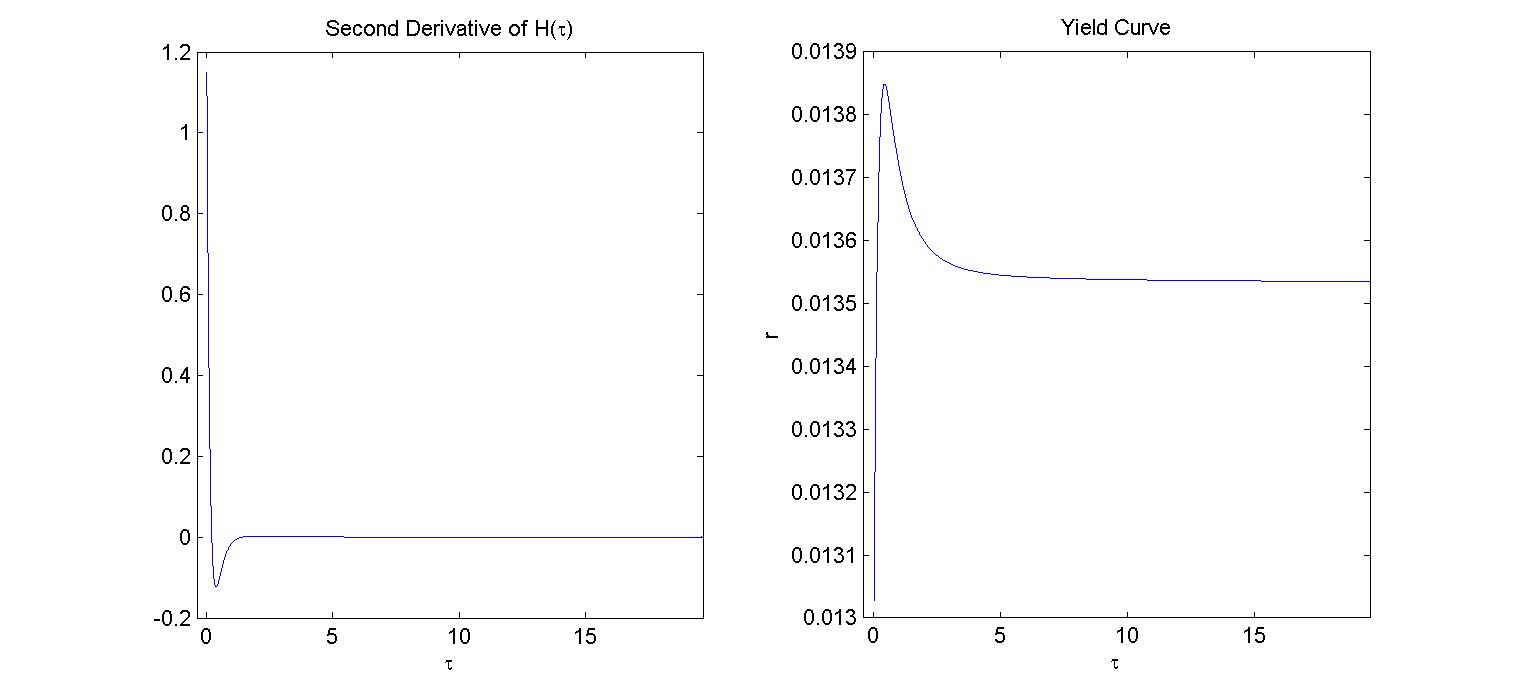}}\\
\subfloat{\label{fig:inverse2}\includegraphics[scale=0.25]{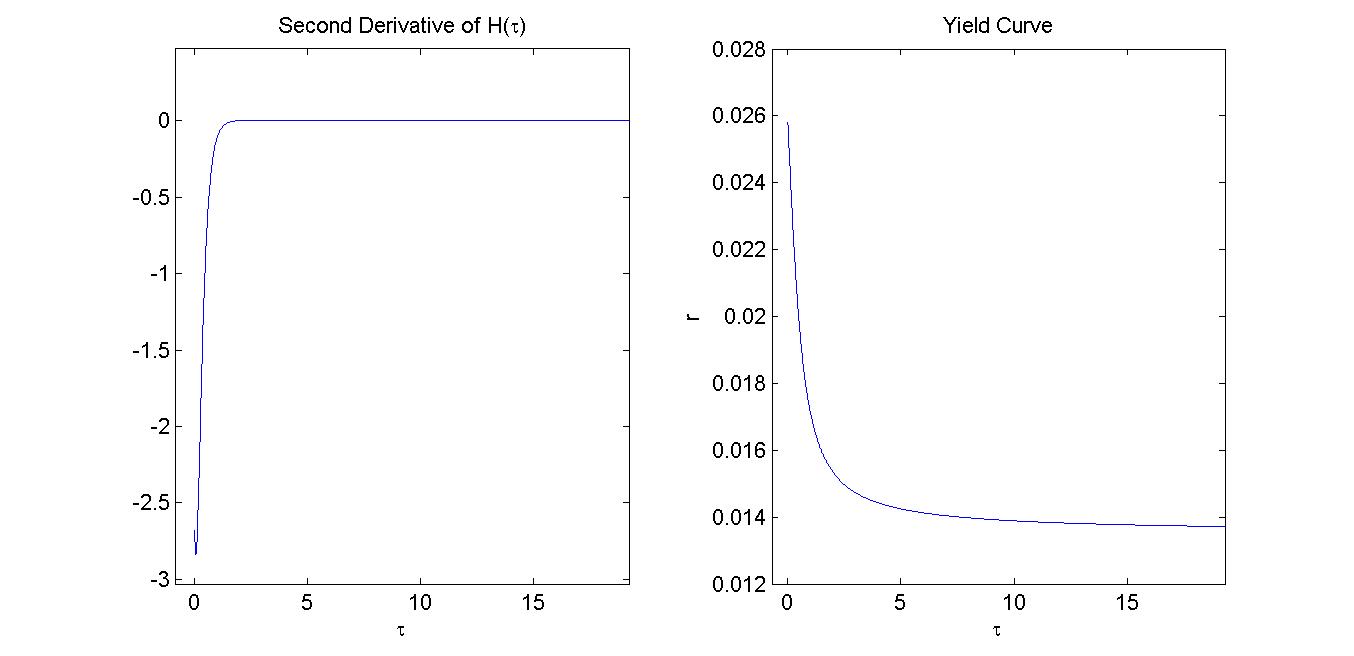}}
\caption{Yield curve shapes for different values of $X_t$}
  \label{fig:ycshapes}
\end{figure}

\section{Discussion on the parameters}
In this section we provide some intuition on the impacts of the parameters on the shape of the yield curve. As a starting point we use the following values for the parameters of the model:
\begin{align}
X_0&=\left(\begin{array}{cc} 0.21&0.003\\ 0.003&0.7\end{array}\right),\quad M=\left(\begin{array}{cc} -1.4&0.1\\0.1&-1.3\end{array}\right),\nonumber\\
Q&=\left(\begin{array}{cc} 1&0.2\\0.3&0.5\end{array}\right),\quad \alpha=3.1,\nonumber\\
B&=\left(\begin{array}{cc} 0.01&0.005\\0.005&0.02\end{array}\right).
\end{align}
With these values, a numerical implementation shows that the yield curve is normal. In the following experiments we will perturbate single elements of the matrices and look at the impact on the yield curve. 
\begin{figure}[H]
  \centering
  \subfloat[$M_{11}$]{\label{fig:M11}\includegraphics[scale=0.04]{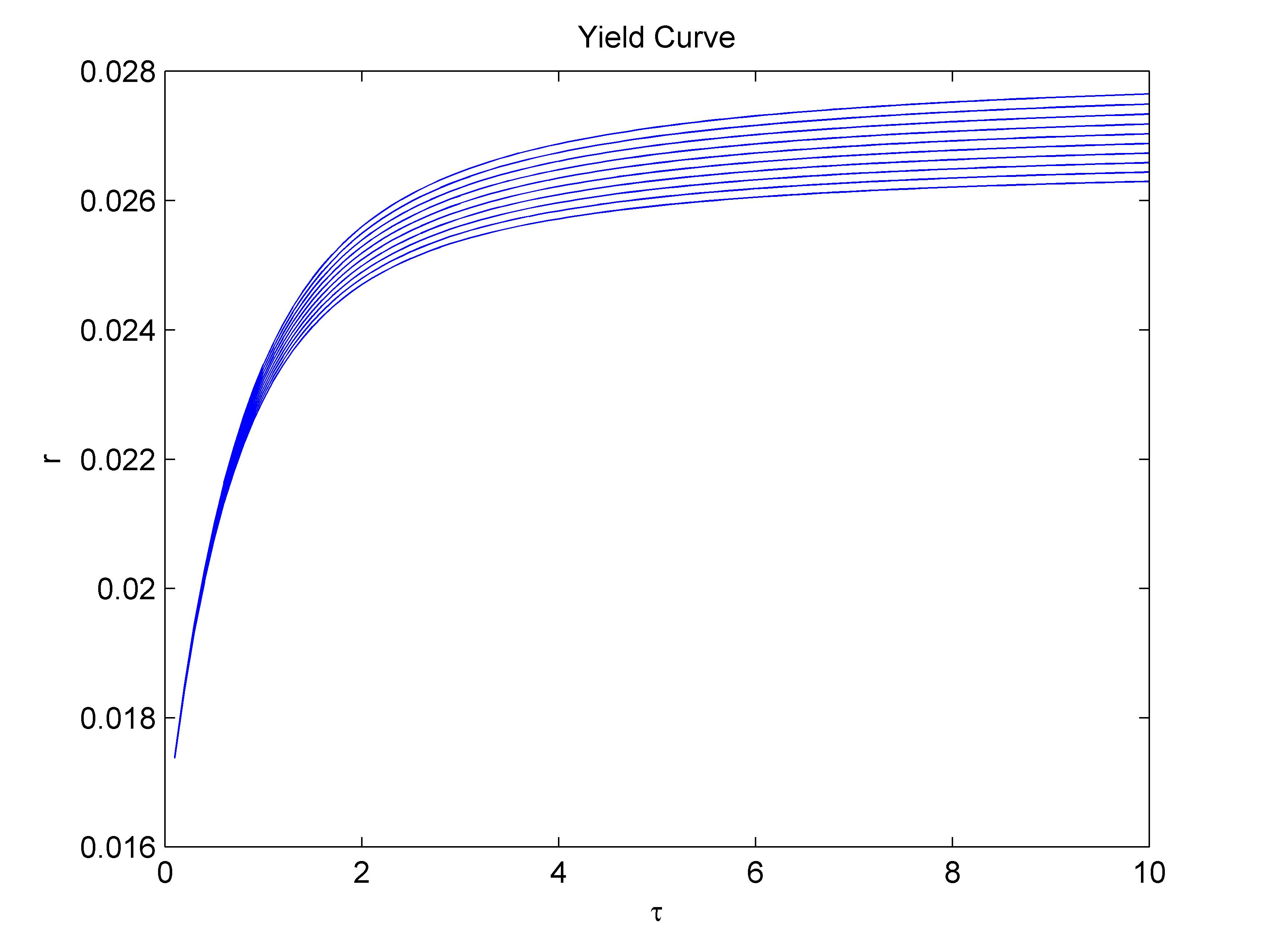}}
 \subfloat[$M_{12}$]{\label{fig:M12}\includegraphics[scale=0.04]{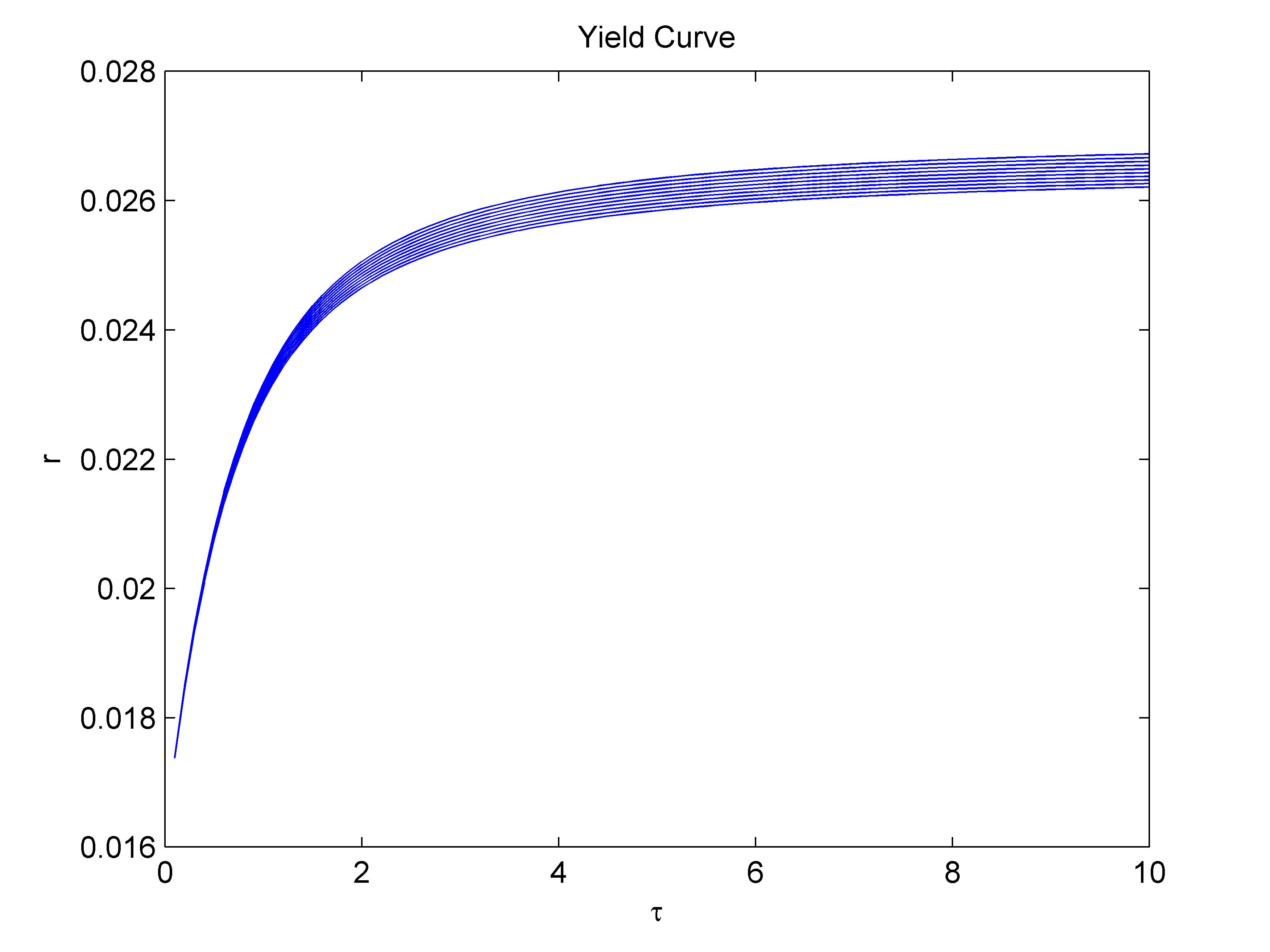}}\\
 \subfloat[$M_{21}$]{\label{fig:M21}\includegraphics[scale=0.04]{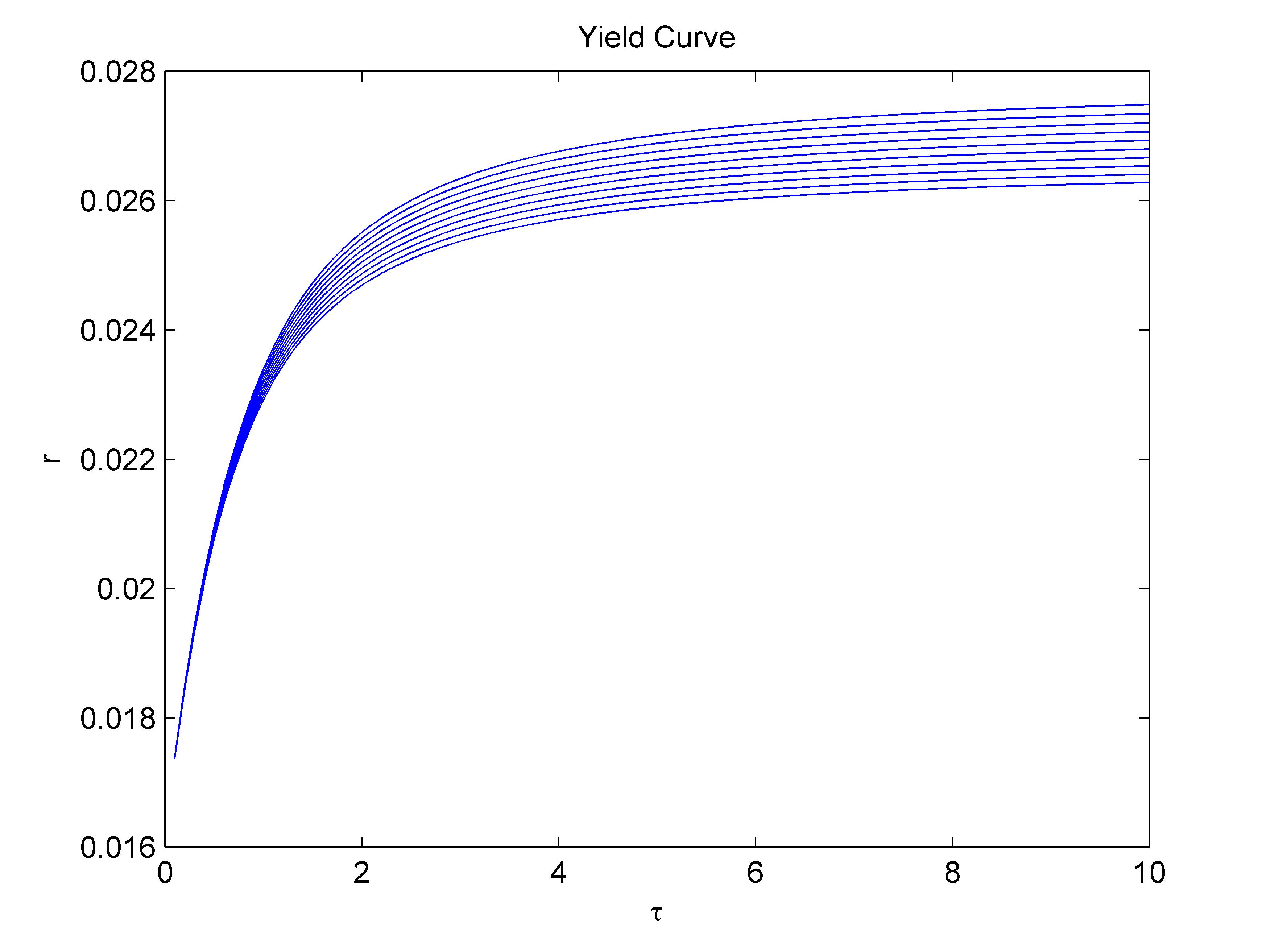}}
 \subfloat[$M_{22}$]{\label{fig:M22}\includegraphics[scale=0.04]{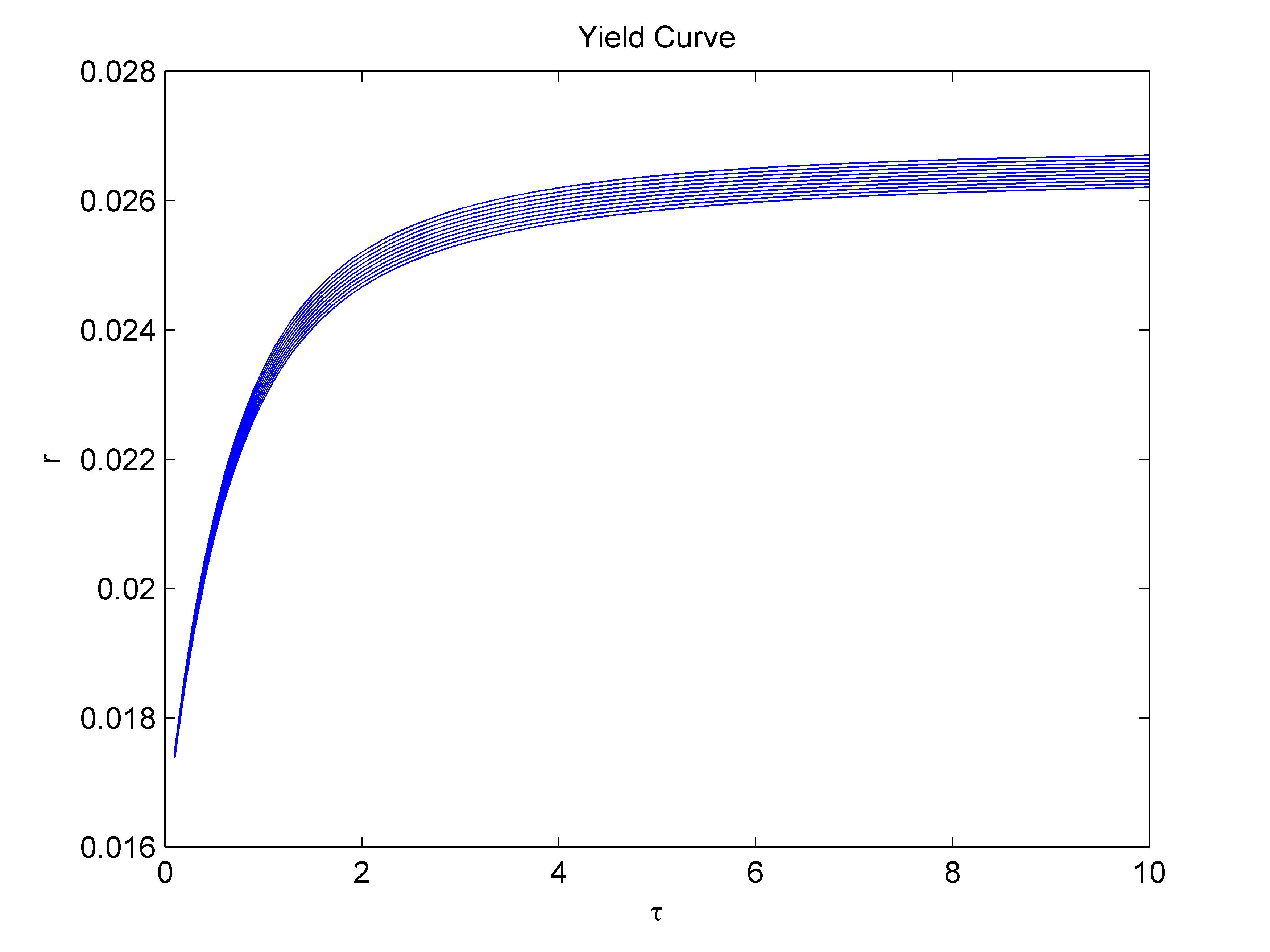}}
  \caption{In this figure we show the effect of a perturbation of the single elements of the matrix $M$. We use a sequence of numbers $\eta=0.01:0.01:0.1$ and add $\eta_i$ to one of the elements of $M$ while leaving the other elements unchanged. When we add $\eta_i$ to the elements on the main diagonal the yield curve is shifted upwards. The same happens with off-diagonal elements.}
  \label{fig:M}
\end{figure}
In Figure \eqref{fig:M}, we perturbate the matrix $M$, by introducing a sequence $\eta=0.01:0.01:0.1$. and we add the values of $\eta_i$ to the single entries of the matrix. It turns out that in all cases we have an upward shift of the yield curve.
\begin{figure}[H]
  \centering
  \subfloat[$Q_{11}$]{\label{fig:Q11}\includegraphics[scale=0.04]{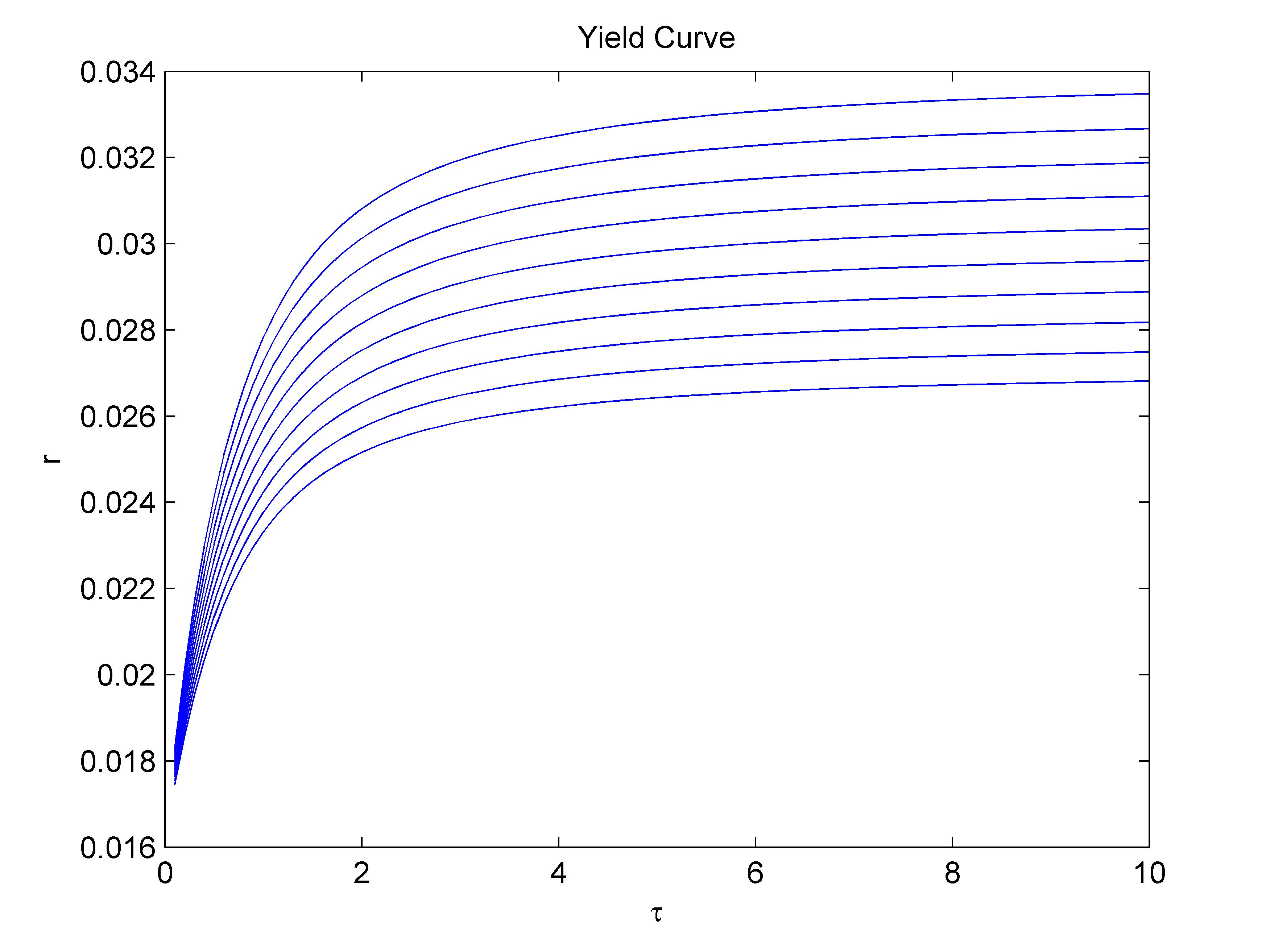}}
 \subfloat[$Q_{12}$]{\label{fig:Q12}\includegraphics[scale=0.04]{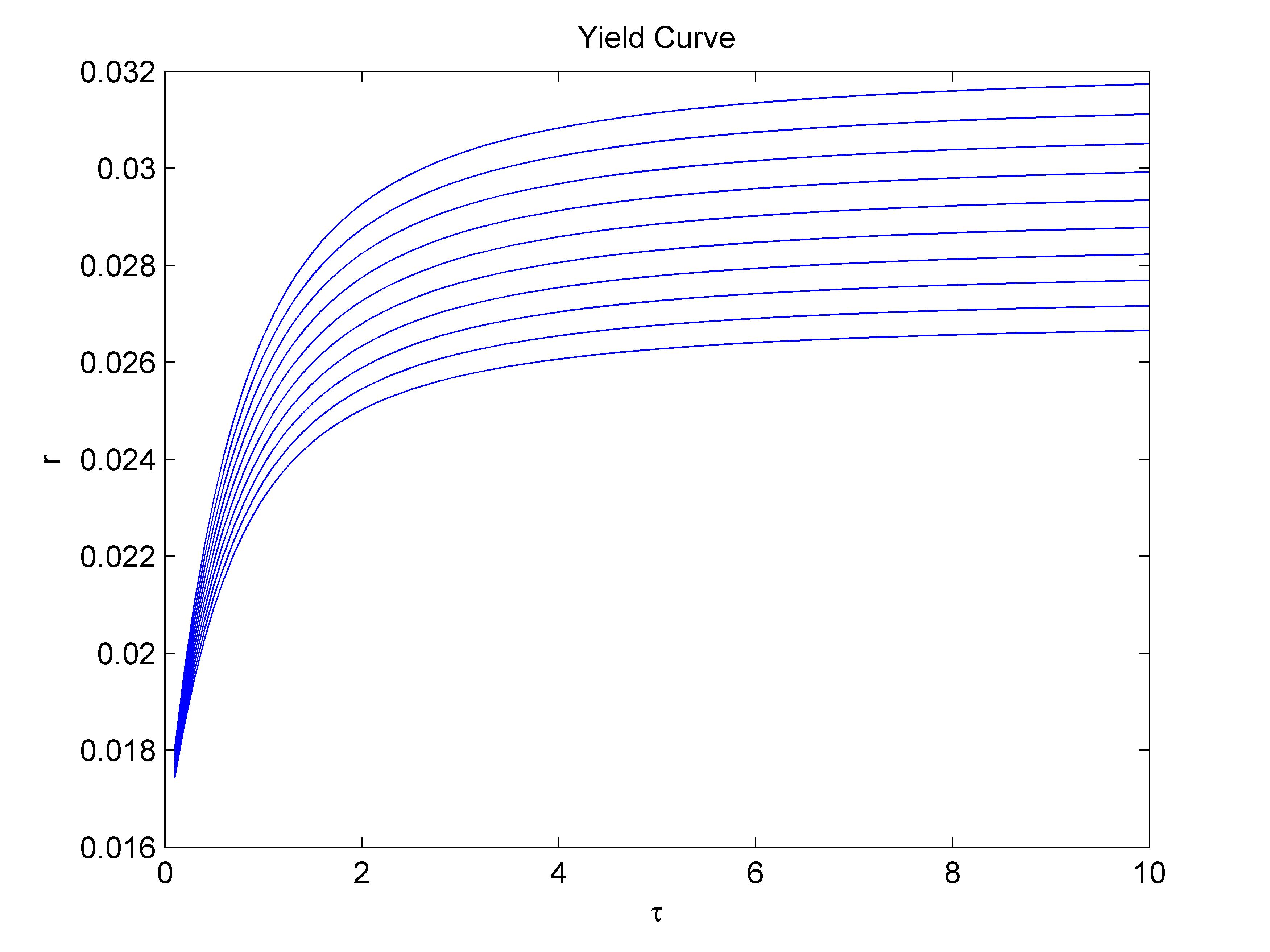}}\\
 \subfloat[$Q_{21}$]{\label{fig:Q21}\includegraphics[scale=0.04]{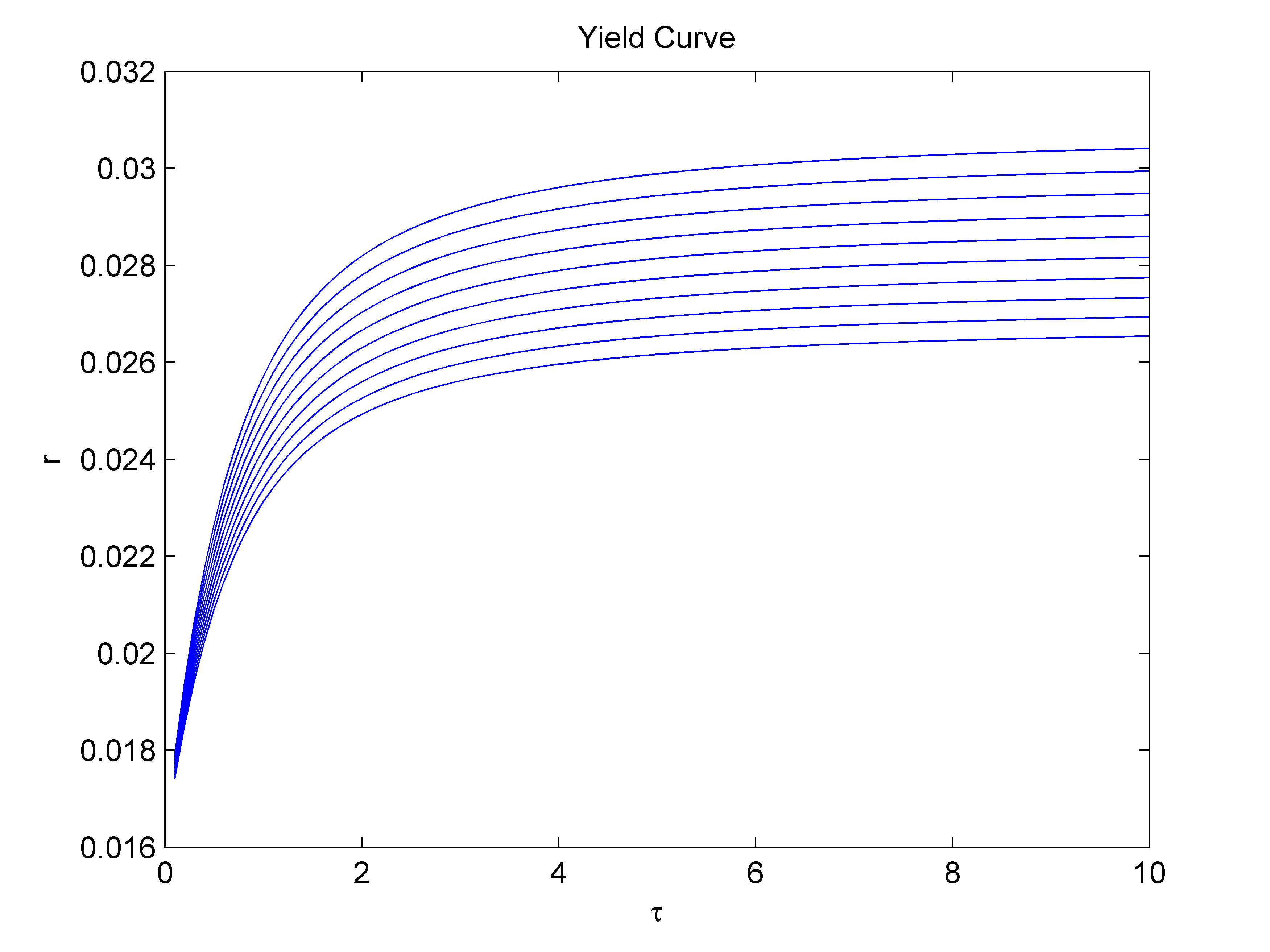}}
 \subfloat[$Q_{22}$]{\label{fig:Q22}\includegraphics[scale=0.04]{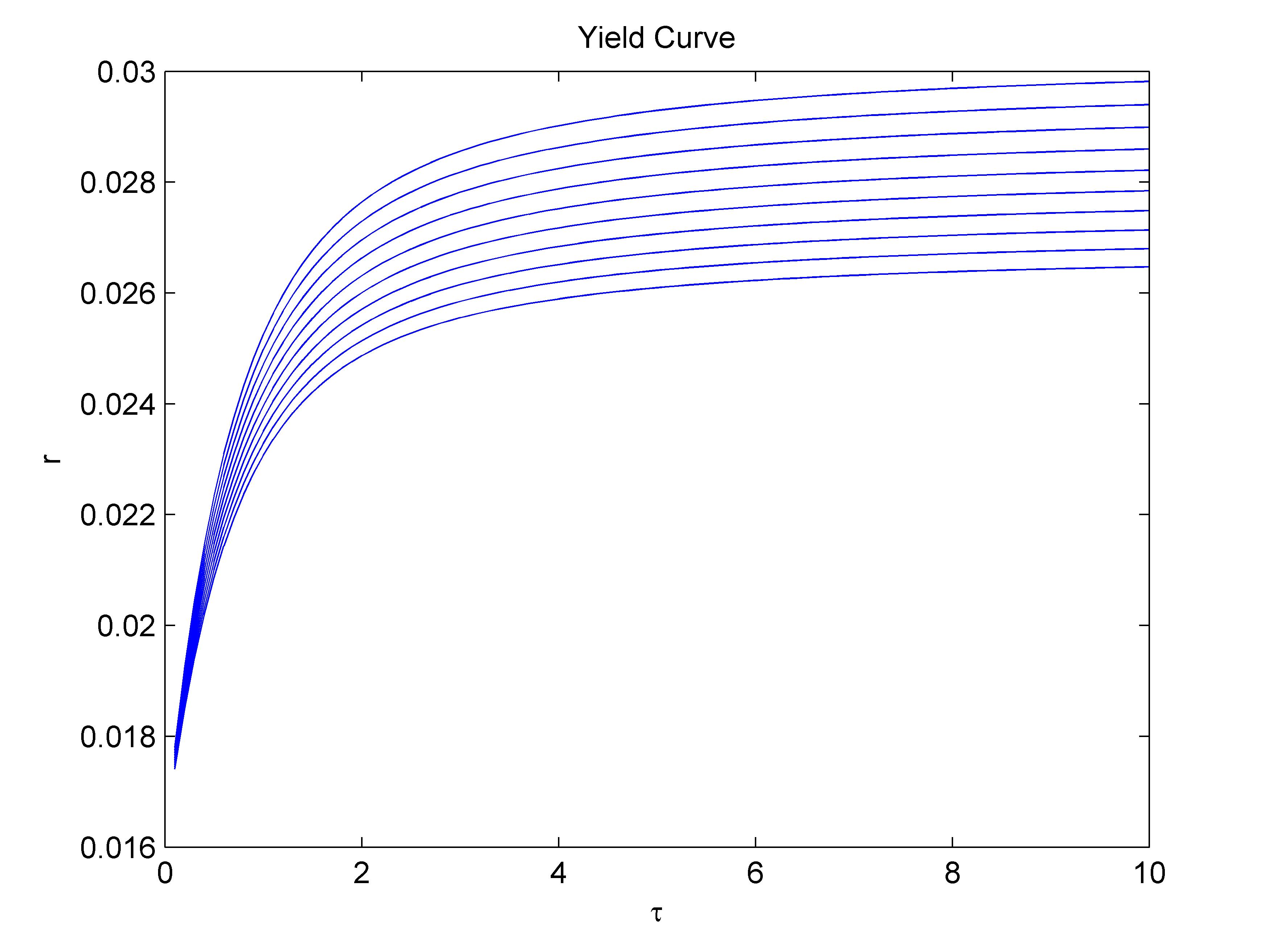}}
  \caption{In this figure we show the effect of a perturbation of the single elements of the matrix $Q$. We use a sequence of numbers $\eta=0.01:0.01:0.1$ and add $\eta_i$ to one of the elements of $Q$ while leaving the other elements unchanged. When we add $\eta_i$ to the elements on the main diagonal the yield curve is shifted upwards. The same happens with off-diagonal elements.}
  \label{fig:Q}
\end{figure}
The impact that we obtain when we perturbate the elements of the matrix $Q$ is similar: we obtain again an upward shift. We would like to perform a more interesting experiment. To this end, we will consider now a larger (w.r.t. the partial order relation on $S_d^+$) value for $X_0$, more precisely:
\begin{align}
X_0&=\left(\begin{array}{cc} 0.3780&0.0054\\0.0054&1.2600\end{array}\right).
\end{align}
With this starting value, we have that the yield curve is humped. We perform the same perturbations as before and notice that this time the impact is more varied. When we look at the impact of perturbations of $M$ we notice shapes ranging from nearly normal ($M_{11}$ and $M_{21}$) till humped shapes ($M_{12}$ and $M_{22}$). Anyhow we notice that the impact on the shape of the yield curve is quite strong.

Finally, we work with $Q$. It turns out that the effect of the diffusion matrix is very relevant. Recall that with our starting value for the factor process, we have a humped curve. Figure \eqref{fig:Qhump} clearly shows that by performing perturbation on this matrix we are able to recover normal, inverse, or even humped curves. $Q$ seems to be best suited in determining large impacts on the shape of the yield curve, whereas $M$ seems to be suitable for smaller adjustments. Another fact that should be kept in mind, is that the interplay between the parameters is influenced by the different choice of the starting value of the process. 
\begin{figure}[H]
  \centering
  \subfloat[$M_{11}$]{\label{fig:M11h}\includegraphics[scale=0.04]{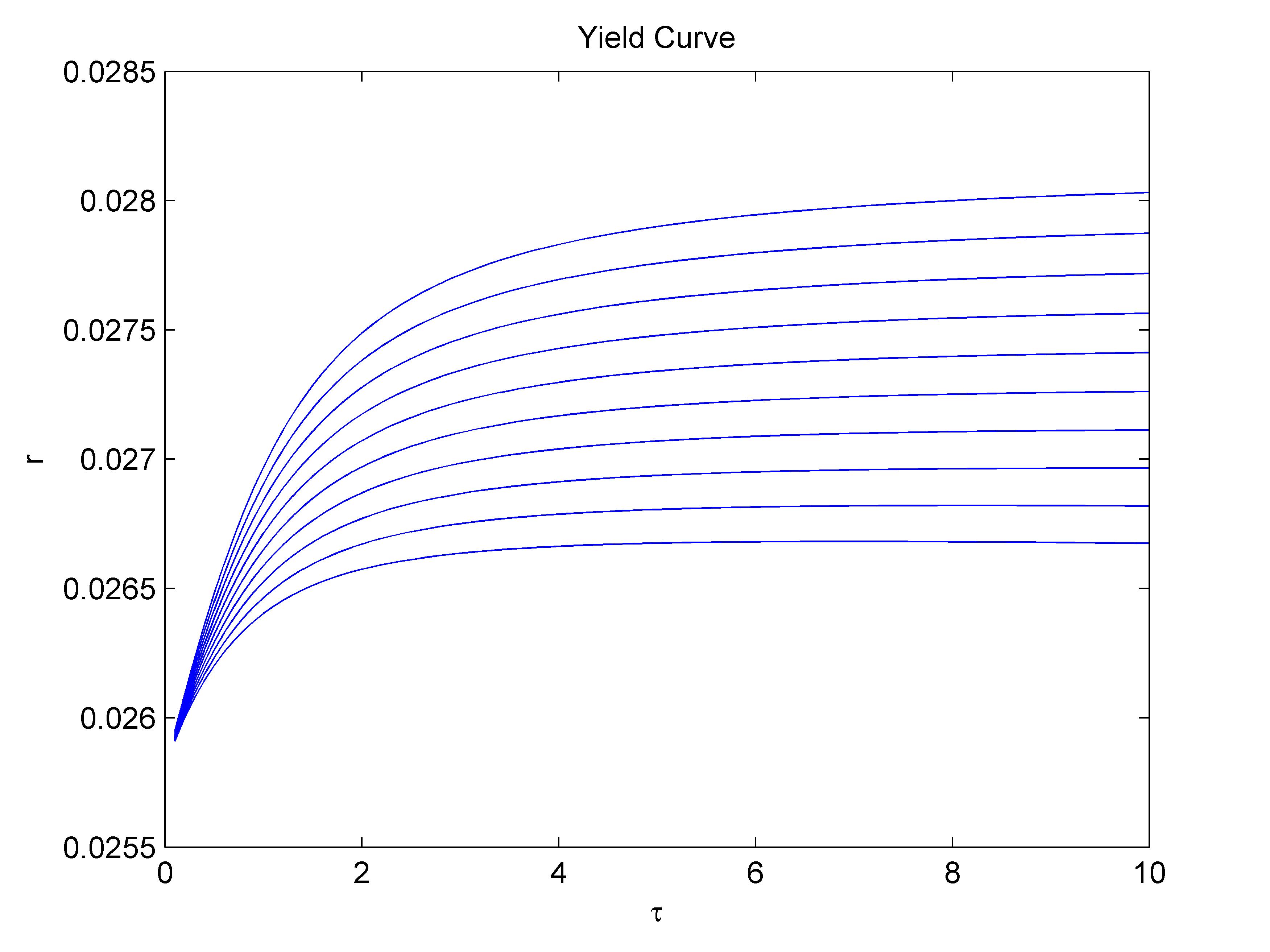}}
 \subfloat[$M_{12}$]{\label{fig:M12h}\includegraphics[scale=0.04]{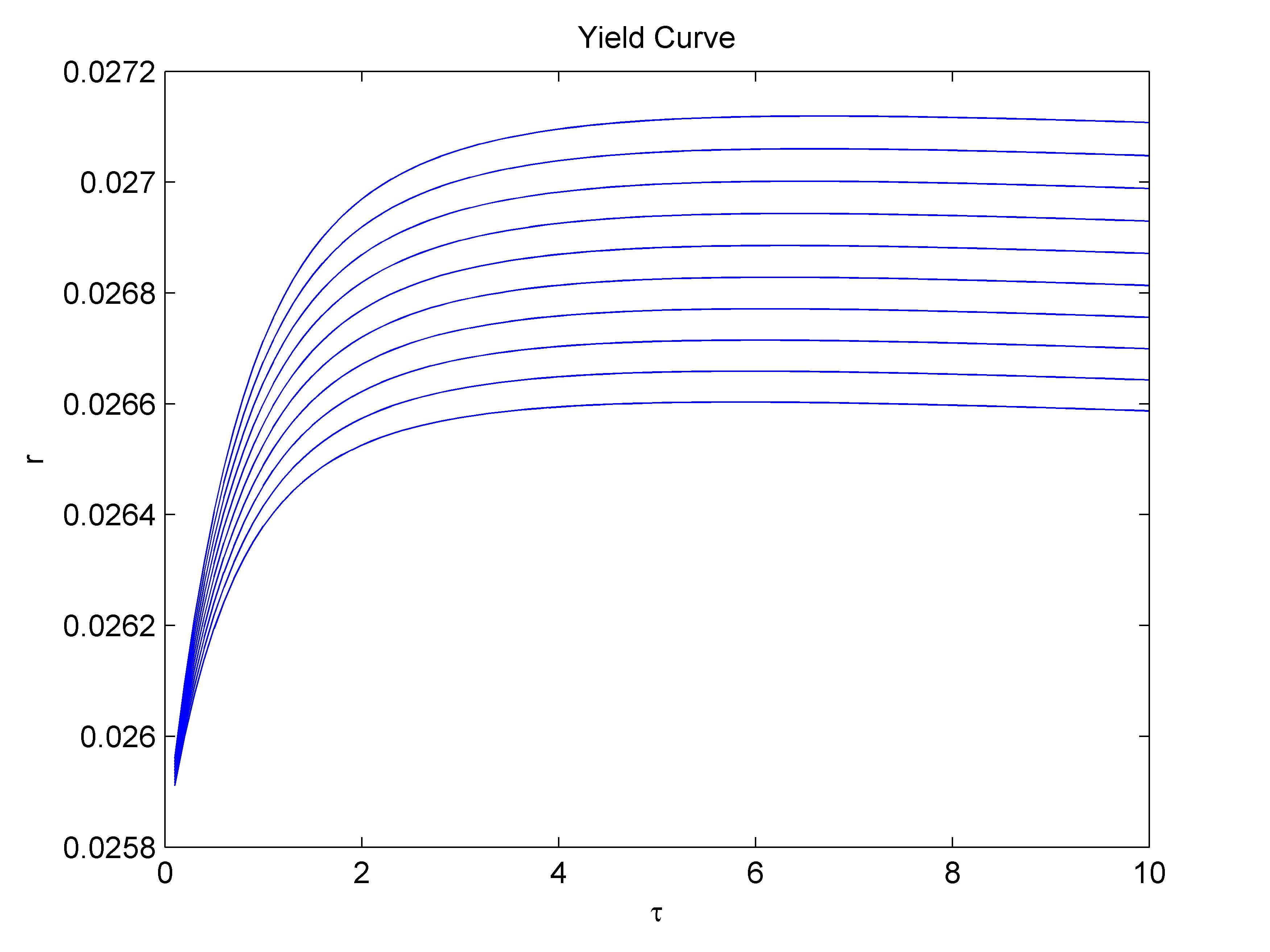}}\\
 \subfloat[$M_{21}$]{\label{fig:M21h}\includegraphics[scale=0.04]{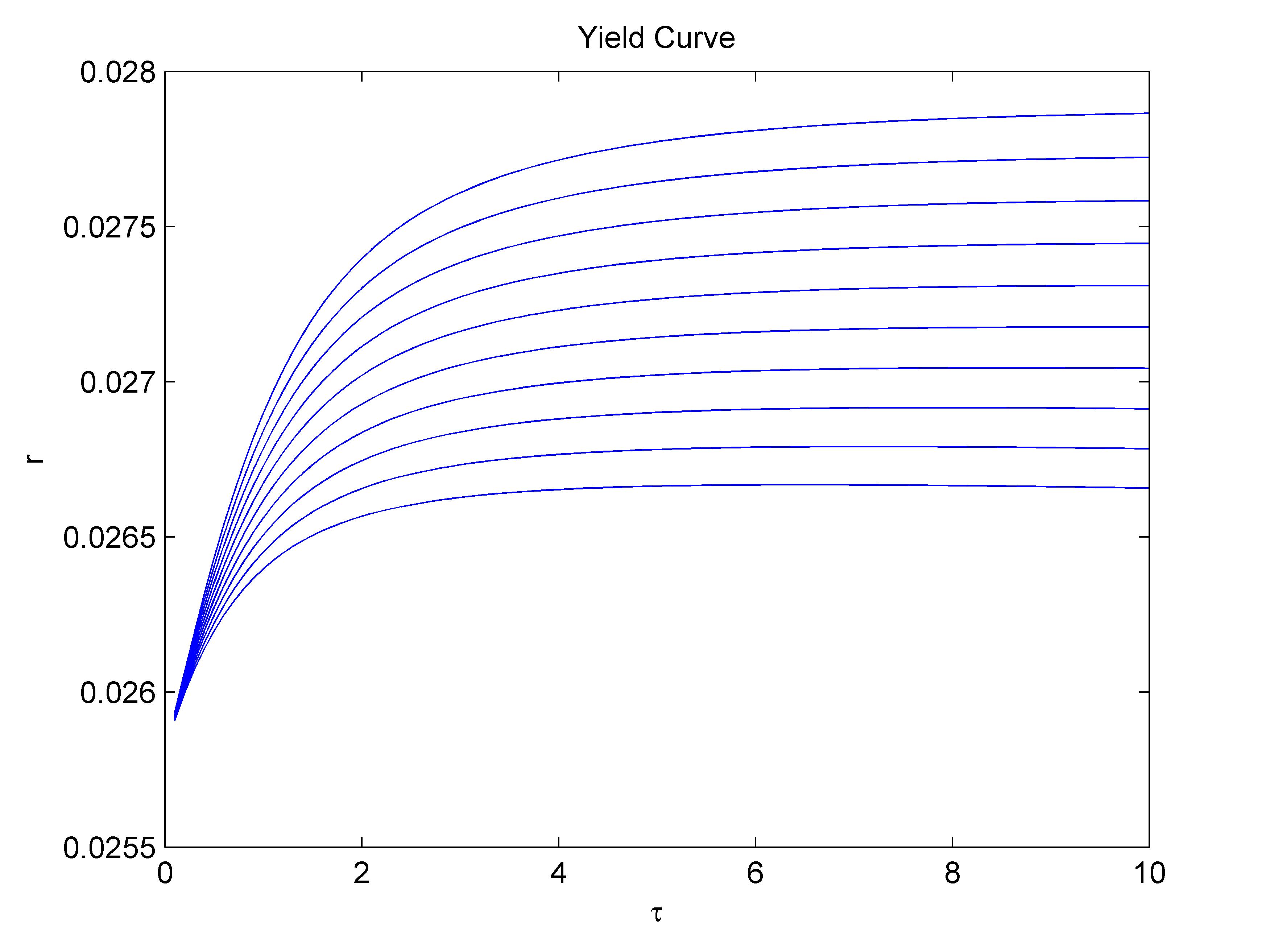}}
 \subfloat[$M_{22}$]{\label{fig:M22h}\includegraphics[scale=0.04]{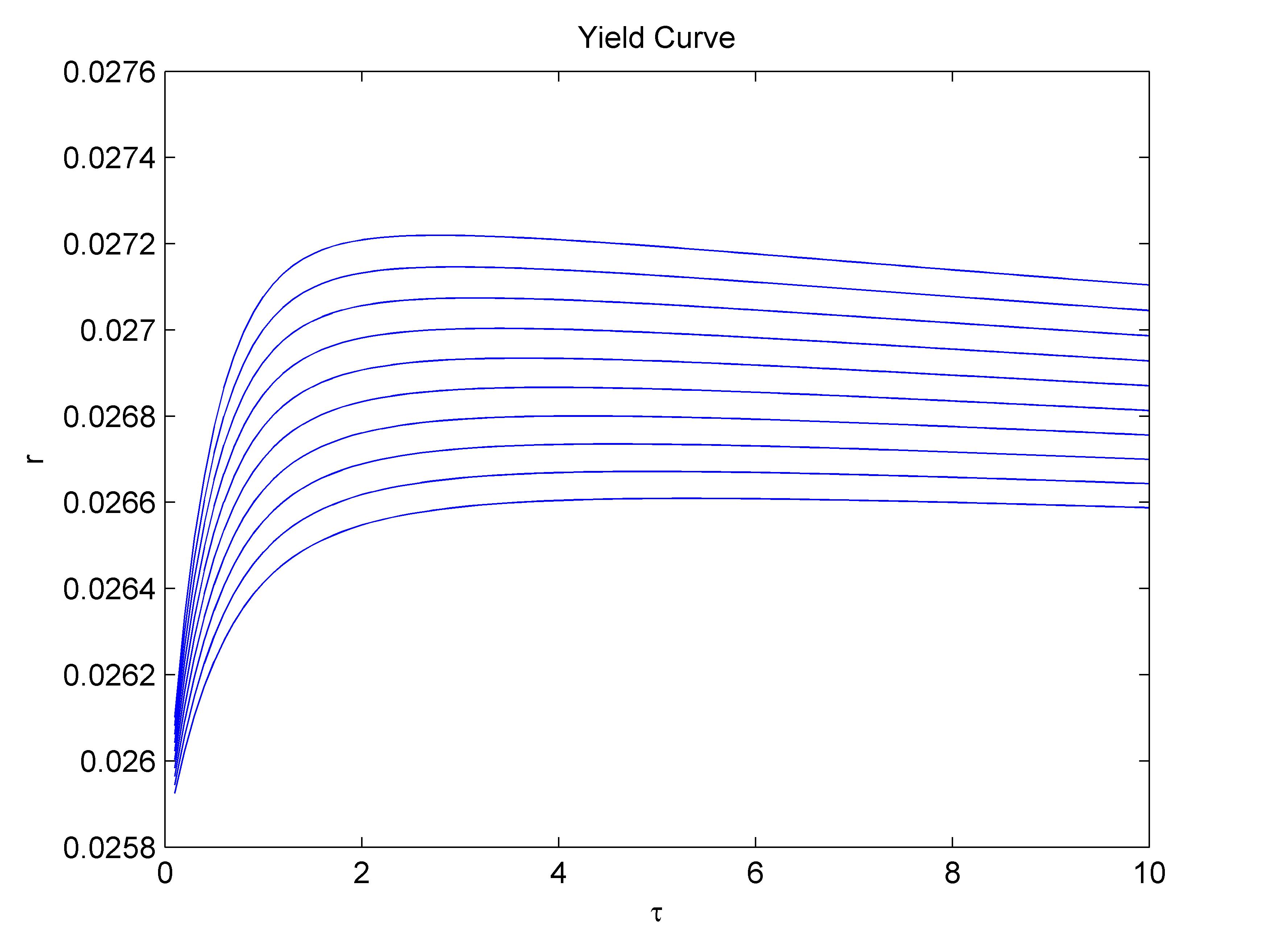}}
  \caption{We increase the starting value of the process so as to get a humped yield curve. In this figure we show the effect of a perturbation of the single elements of the matrix $M$. We use a sequence of numbers $\eta=0.01:0.01:0.1$ and add $\eta_i$ to one of the elements of $M$ while leaving the other elements unchanged. When we add $\eta_i$ to the elements on the main diagonal the yield curve is shifted upwards. The same happens with off-diagonal elements.}
  \label{fig:Mhump}
\end{figure}

\begin{figure}[H]
  \centering
  \subfloat[$Q_{11}$]{\label{fig:Q11h}\includegraphics[scale=0.04]{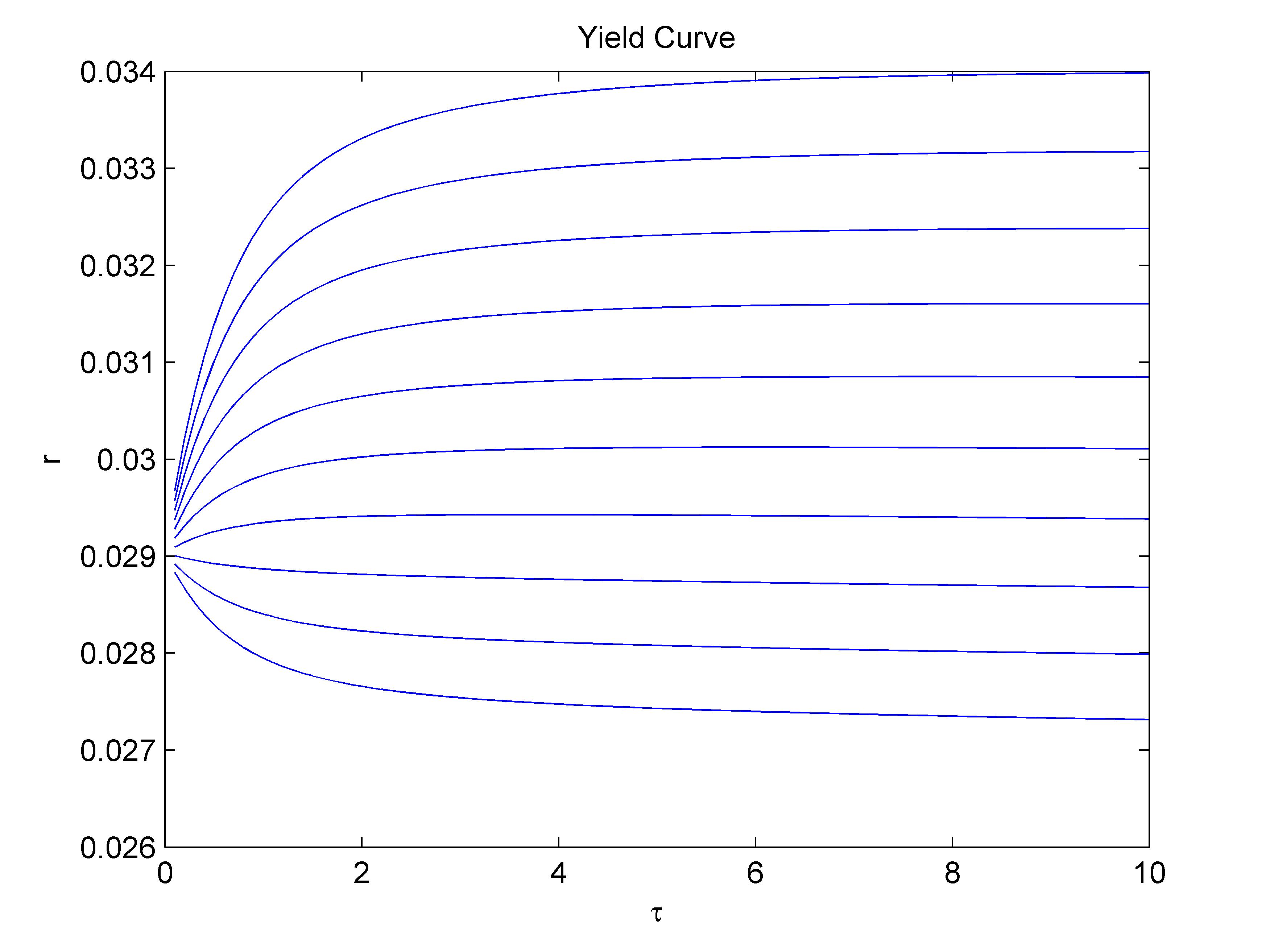}}
 \subfloat[$Q_{12}$]{\label{fig:Q12h}\includegraphics[scale=0.04]{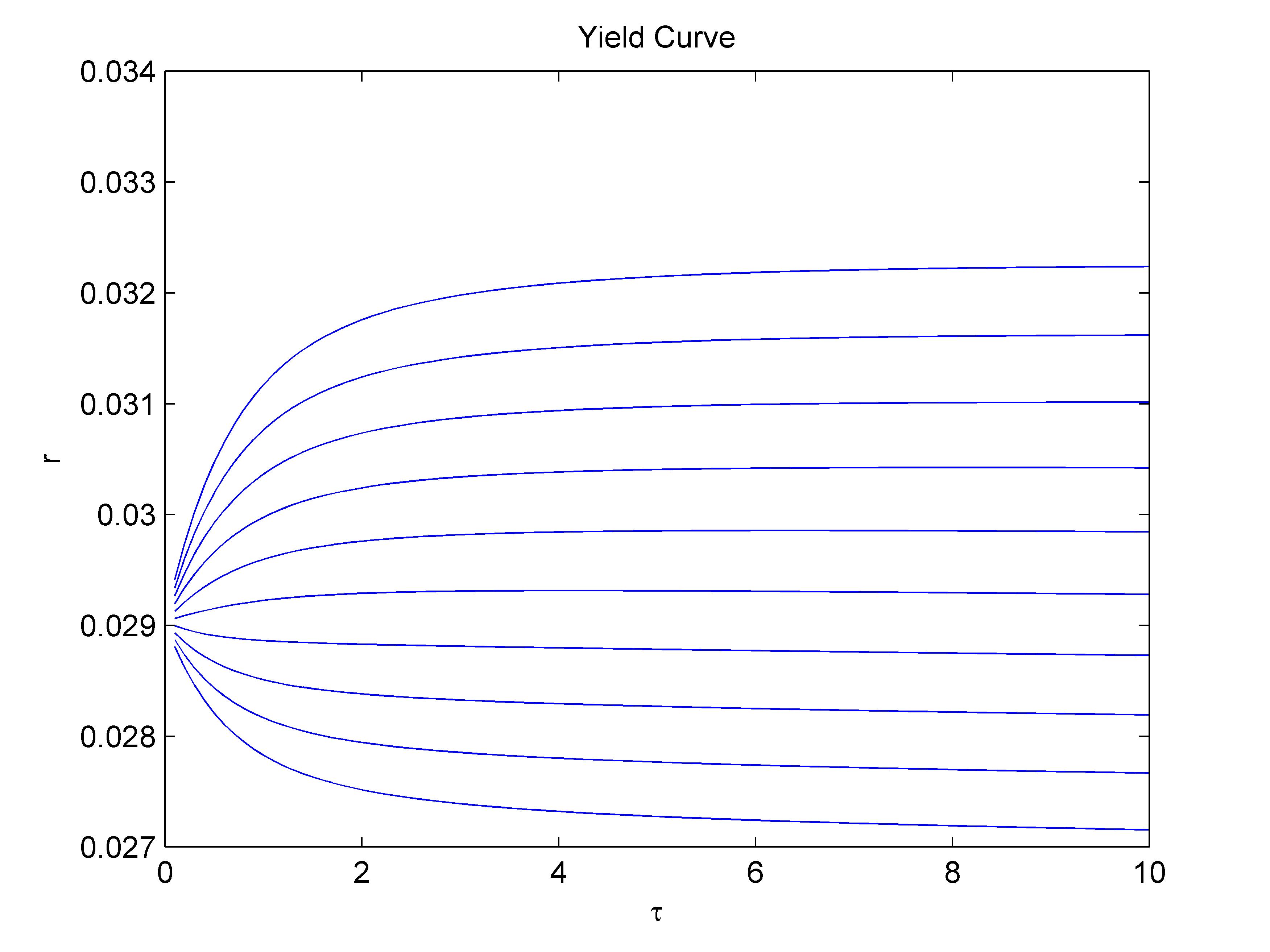}}\\
 \subfloat[$Q_{21}$]{\label{fig:Q21h}\includegraphics[scale=0.04]{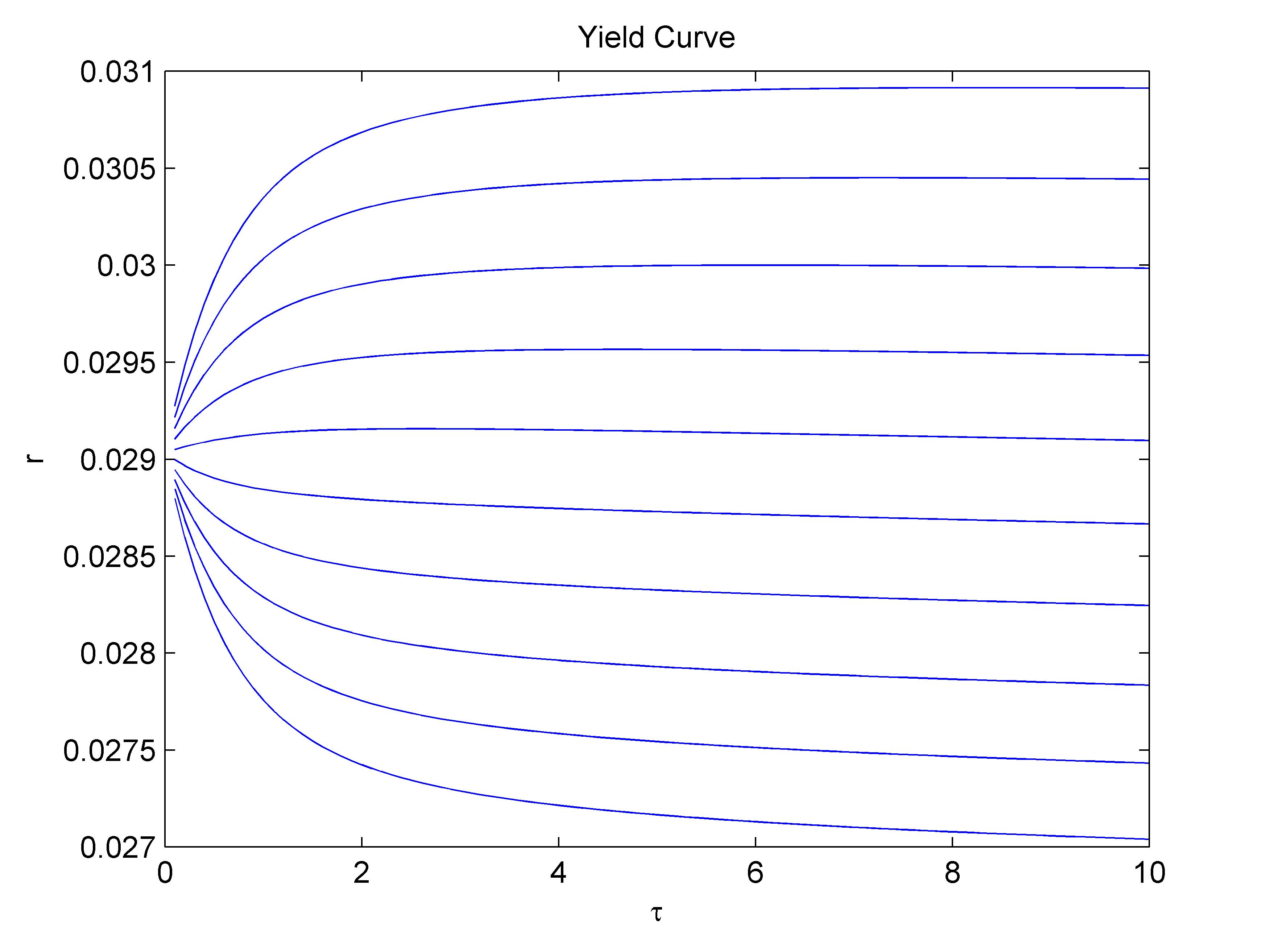}}
 \subfloat[$Q_{22}$]{\label{fig:Q22h}\includegraphics[scale=0.04]{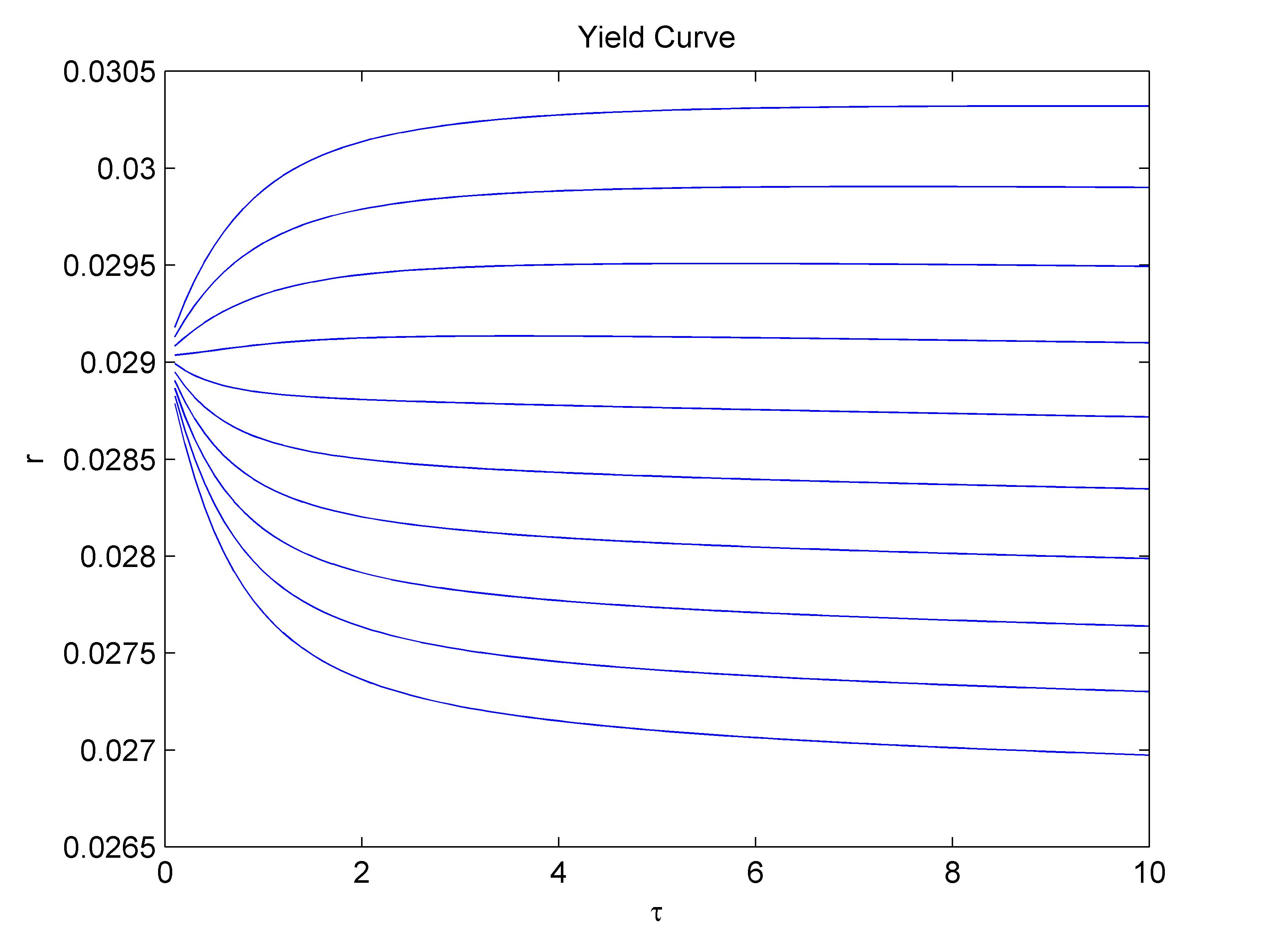}}
  \caption{We increase the starting value of the process so as to get a humped yield curve. In this figure we show the effect of a perturbation of the single elements of the matrix $Q$. We use a sequence of numbers $\eta=0.01:0.01:0.1$ and add $\eta_i$ to one of the elements of $Q$ while leaving the other elements unchanged. When we add $\eta_i$ to the elements on the main diagonal the yield curve is shifted upwards. The same happens with off-diagonal elements.}
  \label{fig:Qhump}
\end{figure}

\section{Conclusions}
In this paper we provided a set of sufficient conditions ensuring that the Wishart short rate model produces certain yield curve shapes. In particular we are able to ensure that the yield curve is normal, inverse or humped. We believe that this set of sufficient conditions may be of interest in a calibration setting, since we now know the constraints we should impose in order to replicate the shape of the yield curve that we observe on the market. A further line of research may be given by the study of yield curve shapes in the fully general model that was presented in Section 1.

\end{document}